\theoremstyle{definition}
\newtheorem{thm}{Theorem}[section]
\newtheorem{lem}[thm]{Lemma}
\newtheorem{proposition}[thm]{Proposition}
\newtheorem{defi}[thm]{Definition}
\renewcommand{\deg}{\operatorname{deg}}
\newcommand{\dist}{\operatorname{dist}}
\newcommand{\disc}{\operatorname{disc}}
\newcommand{\M}{\mathbf{M}}
\renewcommand{\P}{\mathbf{P}}
\newcommand{\Ex}[1]{\mathbb{E}\left[{#1}\right]}
\newcommand{\Pro}[1]{\mathbb{P}\left[{#1}\right]}
\newcommand{\Psub}[2]{\mathbb{P}_{#1}\left[{#2}\right]}
\newcommand{\Esub}[2]{\mathbb{E}_{#1}\left[{#2}\right]}
\newcommand{\Var}{\mathrm{Var}}
\newcommand{\Uni}{\mathsf{Uni}}
\newcommand{\Poi}{\mathsf{Poi}}
\newcommand{\Bin}{\mathsf{Bin}}
\newcommand{\Geo}{\mathsf{Geo}}
\newcounter{nummer}
\numberwithin{equation}{section}
\title{Randomized Load Balancing on Networks with Stochastic Inputs}
\author[1]{Leran Cai}
\author[1]{Thomas Sauerwald}
\affil[1]{University of Cambridge, email: \texttt{(lc647|tms41)@cl.cam.ac.uk}}
\authorrunning{Leran Cai and Thomas Sauerwald} 
\subjclass{G.3 Probability and Statistics}
\keywords{random walks, randomized algorithms, parallel computing}
\numberwithin{thm}{section}
\begin{document}

\maketitle
%
%
%

\begin{abstract}
	Iterative load balancing algorithms for indivisible tokens have been studied intensively in the past, e.g.,~\cite{RSW98,MGS98,SS12}. Complementing previous worst-case analyses, we study an average-case scenario where the load inputs are drawn from a fixed probability distribution. For cycles, tori, hypercubes and expanders, we obtain almost matching upper and lower bounds on the discrepancy, the difference between the maximum and the minimum load. Our bounds hold for a variety of probability distributions including the uniform and binomial distribution but also distributions with unbounded range such as the Poisson and geometric distribution. For graphs with slow convergence like cycles and tori, our results demonstrate a substantial difference between the convergence in the worst- and average-case.
	An important ingredient in our analysis is new upper bound on the $t$-step transition probability of a general Markov chain, which is derived by invoking the evolving set process.
\end{abstract}

%

\section{Introduction}

In the last decade,
large parallel networks became widely available
for industrial and academic users.  An important prerequisite
for their efficient usage is to balance their work efficiently.
Load balancing is known to have applications
to 
scheduling~\cite{Surana06},
routing~\cite{Cyb89},
numerical computation such as solving partial differential equations~\cite{Zhanga09,Williams91,SubramanianScherson94},
and finite element computations~\cite{FEM}.
In the standard abstract formulation of load balancing, processors are represented
by nodes of a graph, while links are represented by edges.
The objective is to balance the load by allowing nodes to exchange loads with their neighbors via the incident edges. In this work we will study a decentralized and iterative load balancing protocol where a processor knows only its current load and that of the neighboring processors and based on this, decides how much load should be sent (or received). 

{\bf Load Balancing Models.} 
A widely used approach is diffusion, e.g., the first-order-diffusion scheme \cite{Cyb89,MGS98}, where the amount of load sent along each edge in each round is proportional to the load difference between the incident nodes. In this work, we consider the alternative, the so-called matching model, where in each round only the edges of the matching are used to average the load locally. In comparison to diffusion, the matching model reduces the communication in the network and moreover tends to behave in a more ``monotone'' fashion than diffusion, since it avoids concurrent load exchanges which may increase the maximum load or decrease the minimum load in certain cases.

We measure the smoothness of the load distribution by the so-called {\em discrepancy} which is the difference between the maximum and minimum load among all nodes. In view of more complex scenarios where jobs are eventually removed or new jobs are generated, the discrepancy seems to be a more appropriate measure than the {\em makespan}, which only considers the maximum load.

Many studies in load balancing assume that load is arbitrarily divisible. In this so-called {\em continuous case}, load balancing corresponds to a Markov chain on the graph and one can resort to a wide range of established techniques to analyze the convergence speed \cite{Boillat90,GLM99,MGS98}. In particular, the {\em spectral gap} captures the time to reach a small discrepancy fairly accurately, e.g., see \cite{SJ89,RSW98} for the diffusion and see \cite{BGPS06,MG96} for the matching model.

However, in many applications a processor's load may consist of tasks which are not further divisible, which is why the continuous case has been also referred to as ``idealized case''~\cite{RSW98}. A natural way to model indivisible tasks is the {\em unit-size token model} where one assumes a smallest load entity, the unit-size token, and load is always represented by a multiple of this smallest entity. In the following, we will refer to the unit-size token model as the {\em discrete case}. 

Initiated by the work of \cite{RSW98}, there has been a number of studies on load balancing in the discrete case. Unlike \cite{RSW98}, \cite{SS12} analyzed a randomized rounding based strategy, meaning that an excess token will be distributed uniformly at random among the two communicating nodes. The authors of \cite{SS12} proved that with this strategy the time to reach constant discrepancy in the discrete case is essentially the same as the corresponding time in the continuous case. Their results hold both for the {\em random matching model}, where in each round a new random matching is generated by a simple distributed protocol, and the {\em balancing circuit model} (a.k.a. dimension exchange), where a fixed sequence of matching is applied periodically. In this work, we will focus on the {\em balancing circuit model}, which is particularly well suited for highly structured graphs such as cycles, tori or hypercubes.

\textbf{Worst-Case vs. Average-Case Inputs.}
Previous work has almost always adopted the usual worst-case framework for deriving bounds on the load discrepancy \cite{RSW98}.  That means that any upper bound on the discrepancy holds for an arbitrary input, i.e., an arbitrary initial load vector. While it is of course very natural and desirable to have such general bounds, the downside is that for graphs with poor expansion like cycles or 2D-tori, the convergence is rather slow, i.e., quadratic or linear in the number of nodes $n$.

This serves as a motivation to explore an average-case input. Specifically, we assume that the number of load items at each node is sampled independently from a fixed distribution. Our main results demonstrate that the convergence of the load vector is considerably quicker (measured by the load discrepancy), especially on networks with slow convergence in the worst-case such as cycles and 2D-tori.

We point out that many related problems including scheduling on parallel machines or load balancing in a dynamic setting (meaning that jobs are continuously added and processed) have been studied under random inputs, e.g.,~\cite{AKU05,GI99,ACS16}. To the best of our knowledge, only very few works have studied this question in iterative load balancing. One exception is \cite{S99}, which investigated the performance of continuous load balancing on tori in the diffusion model. In contrast to this work, however, only upper bounds are given and they hold for the multiplicative ratio between maximum and minimum load, rather than the discrepancy.



Our main results in this paper hold for all distributions satisfying the following definition, which is satisfied by the uniform, binomial, Poisson and geometric distribution (see Section~\ref{sec:def}).
\newpage
\begin{defi}\label{def:expcon}
We say that a distribution $D$ over $\mathbb{N} \cup \{0\}$ is exponentially concentrated if there is a constant $ \kappa >0$ so that for any $X \sim D$, $\delta > 0$,
\begin{align*}
  \Pro{  | X - \mu | \geq \delta \cdot \sigma} \leq \exp\left( -\kappa \delta \right),
\end{align*}
where $\mu$ and $\sigma^2$ are the expectation and variance of $D$. In the following, we refer to \textbf{average-case} when the initial number of load items on each vertex is drawn independently from a fixed exponentially concentrated distribution.
\end{defi}

\textbf{Our Results.} 
Our first contribution is a general formula that allows us to express the load difference between an arbitrary pair of nodes in round $t$. 
Here the round matrix $\M$ is the product of the matching matrices that are applied periodically (cf.~Section \ref{sec:def}). 

\begin{restatable}{thm}{mainone}\label{thm:main1}
Consider the balancing circuit model with an arbitrary round matrix $\M$ in the average case. Then for any pair of nodes $u,v$ and round $t$, it holds for any $\delta > 0$ that
\begin{equation*}
	\Pro{\left| x_u^{(t)} - x_{v}^{(t)} \right| \geq \delta \cdot \sqrt{128} \kappa \cdot \sigma \cdot \log n \cdot 
		\left\| \M^t_{.,u} - \M^t_{.,v} \right\|_2 + \sqrt{48 \log n}} \leq 2 \cdot e^{-\delta^2} + 2 n^{-2}.
	\end{equation*}
Further, for any pair of vertices $u,v$ and any round $t$ satisfying $t=\omega(1)$,

	\label{thm:lowerboundforcycles}
	\begin{equation*}
	\Pro{\left| x_u^{(t)} - x_{v}^{(t)} \right| \geq \sigma/(2 \sqrt{2\log_2 \sigma}) \cdot \left\| \M^t_{.,u} - \M^t_{.,v} \right\|_2 - \sqrt{48 \log n} } \geq \frac{1}{16}. 
	\end{equation*}
\end{restatable}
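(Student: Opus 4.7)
My plan is to decompose $x_u^{(t)} - x_v^{(t)}$ into a ``continuous'' linear combination of the initial loads (which carries all randomness coming from the distribution $D$) plus a rounding error of lower order, and then handle the two pieces separately: a Hoeffding-type concentration bound on the linear piece for the upper statement and a second-moment / Paley--Zygmund anti-concentration bound for the lower statement. Let $y^{(t)} := \M^t x^{(0)}$ denote the idealized continuous load vector and $e^{(t)} := x^{(t)} - y^{(t)}$ the rounding error produced by the randomized rounding scheme. Because $\M^t$ is doubly stochastic, the vector $a := \M^t_{.,u} - \M^t_{.,v}$ has entries summing to zero, so
\[
y_u^{(t)} - y_v^{(t)} = \sum_w a_w \bigl(x_w^{(0)} - \mu\bigr) =: S, \qquad \mathrm{Var}(S) = \sigma^2 \|a\|_2^2 .
\]
A martingale (Azuma) argument applied to the randomized rounding process of~\cite{SS12} gives $|e_u^{(t)} - e_v^{(t)}| \leq \sqrt{48 \log n}$ except on an event of probability at most $2 n^{-2}$, which supplies the additive $\sqrt{48 \log n}$ term in both directions of the theorem.

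For the upper bound the plan is truncate-then-Hoeffding. Definition~\ref{def:expcon} together with a union bound yields $|x_w^{(0)} - \mu| \leq T$ for every vertex $w$, with $T := C_0 \sigma (\log n)/\kappa$ and $C_0$ an absolute constant, except with probability at most $n^{-2}$. Conditioning on this event makes each summand of $S$ bounded by $|a_w| T$, and Hoeffding's inequality then gives
\[
\Pro{|S| \geq s} \leq 2 \exp\!\bigl(-s^2/(2 T^2 \|a\|_2^2)\bigr).
\]
Choosing $s = \delta \cdot \sqrt{128}\,\kappa \sigma (\log n) \|a\|_2$ produces tail $2 e^{-\delta^2}$, and a final union bound with the rounding-error event delivers the upper tail with probability at most $2 e^{-\delta^2} + 2 n^{-2}$.

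For the lower bound the plan is Paley--Zygmund applied to a truncated version of $S$. Define $\widetilde{X}_w := (x_w^{(0)} - \mu)\,\mathbf{1}_{|x_w^{(0)} - \mu| \leq T'}$ with $T' := c\, \sigma \sqrt{\log_2 \sigma}$. The exponential-concentration hypothesis forces $\mathrm{Var}(\widetilde{X}_w) \geq \sigma^2/2$ for $c$ small enough, while boundedness gives $\Ex{\widetilde{X}_w^4} \leq T'^2 \sigma^2$. For $\widetilde S := \sum_w a_w \widetilde{X}_w$ this yields $\Ex{\widetilde S^2} \geq (\sigma^2/2)\|a\|_2^2$ and $\Ex{\widetilde S^4} = O(\sigma^4 \log_2 \sigma \cdot \|a\|_2^4)$ (using $\|a\|_4 \leq \|a\|_2$), so the Paley--Zygmund inequality applied to $\widetilde S^2$ produces $|\widetilde S| \geq \sigma\|a\|_2/(2\sqrt{2 \log_2 \sigma})$ with an absolute constant probability of at least $1/8$. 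Combining with the rounding-error event and using $t=\omega(1)$ (so that $n$ is large enough for the $O(n^{-2})$ loss to fit inside the $1/8 - 1/16$ slack) completes the proof.

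The main obstacle is the anti-concentration step: one must lower-bound $|S|$ without any smoothness hypothesis on $D$, which may even be integer-valued with modest $\sigma$. The truncation level $T'$ is chosen precisely to balance preserving enough variance for Paley--Zygmund to yield a genuine lower bound against keeping the fourth moment under control, and it is exactly this balance that is the source of the $1/\sqrt{\log_2 \sigma}$ loss appearing in the statement.
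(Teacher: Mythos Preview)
Your upper bound argument is essentially the paper's: condition on all initial loads lying within $O(\sigma\log n/\kappa)$ of the mean, apply Hoeffding to the linear form $S=\sum_w a_w(x_w^{(0)}-\mu)$, and absorb the discrete--continuous gap via the $\sqrt{48\log n}$ rounding-error bound from~\cite{SS12}. That part is fine.

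The lower bound, however, has a genuine gap in the Paley--Zygmund step. With your stated moment bounds $\Ex{\widetilde S^2}\geq(\sigma^2/2)\|a\|_2^2$ and $\Ex{\widetilde S^4}=O(\sigma^4\log_2\sigma\cdot\|a\|_2^4)$, Paley--Zygmund applied to $Z=\widetilde S^2$ yields
\[
\Pro{\widetilde S^2\geq\theta\,\Ex{\widetilde S^2}}\;\geq\;(1-\theta)^2\,\frac{(\Ex{\widetilde S^2})^2}{\Ex{\widetilde S^4}}\;=\;\frac{(1-\theta)^2}{O(\log_2\sigma)},
\]
which is $\Theta(1/\log_2\sigma)$, not an absolute constant $\geq 1/8$ as you claim. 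The $\log_2\sigma$ factor you put into the fourth moment enters the \emph{denominator} of the Paley--Zygmund probability, not into the deviation threshold; so the argument as written gives the right deviation size but with probability tending to zero in $\sigma$. Two secondary issues compound this: you never explain how a lower bound on $|\widetilde S|$ transfers to one on $|S|$ (the truncation event $\{\exists w:|x_w^{(0)}-\mu|>T'\}$ is not rare when $T'=c\sigma\sqrt{\log_2\sigma}$), and ``$c$ small enough'' goes the wrong way for preserving variance.

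For comparison, the paper does not use Paley--Zygmund. It partitions $V$ into $O(\log_2\sigma)$ buckets according to the dyadic size of $|a_w|$, picks by pigeonhole one bucket carrying at least a $1/(2\log_2\sigma)$ fraction of the total variance, and applies the Berry--Esseen theorem to the partial sum over that bucket. Because the weights in a single bucket are within a factor~$2$ of each other, the Berry--Esseen error term $\psi_0$ is $O(|V_i|^{-1/2})$, and the hypothesis $t=\omega(1)$ together with Theorem~\ref{thm:markov} forces $|V_i|=\omega(1)$, so $\psi_0=o(1)$. This is where the paper uses $t=\omega(1)$ (not, as in your sketch, merely to make $n^{-2}$ small). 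Ironically, your Paley--Zygmund idea can be repaired and even improved: drop the truncation altogether, use that exponential concentration gives $\Ex{(X-\mu)^4}=O(\sigma^4)$ directly, so $\Ex{S^4}=O(\sigma^4\|a\|_2^4)$ with no $\log_2\sigma$ factor, and then Paley--Zygmund on $S^2$ yields $|S|\geq c\,\sigma\|a\|_2$ with probability bounded below by a constant depending only on $\kappa$.
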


The proof of the upper bound Theorem~\ref{thm:main1} is the easier direction, and it relies on a previous result relating continuous and discrete load balancing from \cite{SS12}. The lower bound is technically more challenging and applies a generalized version of the central limit theorem. 


Together, the upper and lower bound in the above result establish that the load deviation between any two nodes $u$ and $v$ is essentially captured by $ \left\| \M^t_{.,u} - \M^t_{.,v} \right\|_2$. However, in some instances it might be desirable to have a more tangible estimate at the expense of generality. A first step towards this goal is to observe that $ \left\| \M^t_{.,u} - \M^t_{.,v} \right\|_2^2 
\leq 4 \cdot \max_{k \in V} \| \M^t_{.,k} - \mathbf{\frac{1}{n}} \|_2^2$ (see~Lemma~\ref{lem:twonodestoavg}). Hence we are left with the problem of understanding the $t$-step probablity vector $\M^t_{.,k}$.



For reversible Markov chains, the last expression has been analyzed in several works, e.g., a result from~\cite[Lemma~3.6]{Lyons05} implies that for random walks on graphs, $\P_{u,v}^t=O(\deg(v)/\sqrt{t})$ (cf.~\cite{Lyons05}). However, the Markov chain associated to $\M$ is not reversible in general. For irreversible Markov chains, \cite{levin2009markov} use the so-called evolving set process to derive a similar bound. Specifically, they proved in \cite[Theorem~17.17]{levin2009markov} that if $\P$ denotes the transition matrix of a lazy random walk (i.e., a random walk with loop probability at least $1/2$) on a graph with maximal degree $\Delta$, then for any vertex $x \in V$: \[
  \left| \P^t_{x,x} - \pi_x \right| \leq \frac{\sqrt{2} \Delta^{5/2}}{\sqrt{t}},
\]
where $\pi$ is the stationary distribution of $\P$.
Such estimates have been used in various applications besides load balancing, including distributed random walks and spanning tree enumeration \cite{ANPT13,Lyons05}. Here we generalize this result to Markov chains with an arbitrary loop probability and to arbitrary $t$-step transition probabilities:

\begin{restatable}{thm}{markov}\label{thm:markov}
	Let $\mathbf{P}$ be the transition matrix of an irreducible Markov chain  and $\pi$ its stationary distribution. Then we have for all states $x, y$ and step $t$,
	\begin{equation*}
	\left|\mathbf{P}^t_{x,y} - \pi_y \right| \leq  \frac{\pi_{\max}^{3/2}}{\pi_{\min}^{3/2}}\cdot \frac{2}{\beta^{1/2}\alpha} \sqrt{\frac{1-\beta+\alpha}{\alpha t}},
	\end{equation*}
	where $\alpha := \min\limits_{u \neq v} \P_{u,v} > 0$ and $\beta := \min\limits_{u} \P_{u,u} > 0$.
\end{restatable}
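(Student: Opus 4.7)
My plan is to adapt the evolving-set argument of \cite{levin2009markov} (Theorem~17.17), which is stated there for lazy simple random walks on graphs, to an arbitrary irreducible chain $\P$ with holding parameter $\beta$ and off-diagonal minimum $\alpha$. Let $(S_t)_{t\ge 0}$ be the evolving set process on $2^V$ associated with $\P$, started at $S_0=\{x\}$: at each step one draws $U\sim\Uni[0,1]$ and sets $S_{t+1}=\{y:\sum_{z\in S_t}\pi_z\P_{z,y}/\pi_y\ge U\}$. The starting point of the proof is the heat-kernel identity
\begin{equation*}
\frac{\P^t_{x,y}}{\pi_y}=\mathbb{E}_{\{x\}}\!\left[\frac{\mathbf{1}_{\{y\in S_t\}}}{\pi(S_t)}\right],
\end{equation*}
which converts bounding $|\P^t_{x,y}-\pi_y|$ into understanding how quickly $\pi(S_t)$ approaches $1$.

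The first technical step is a conductance lower bound driven purely by $\alpha$: because $\P_{u,v}\ge\alpha$ whenever $u\ne v$, for every non-trivial $S\subsetneq V$ we have $Q(S,S^c):=\sum_{x\in S,\,y\notin S}\pi_x\P_{x,y}\ge\alpha\,|S^c|\,\pi(S)\ge(\alpha/\pi_{\max})\pi(S)\pi(S^c)$, so $\psi(S):=Q(S,S^c)/\pi(S)\ge\alpha\pi(S^c)/\pi_{\max}$. The second step is a Morris--Peres-type contraction for the supermartingale $f(S):=\sqrt{\pi(S)(1-\pi(S))}$: namely,
\begin{equation*}
\mathbb{E}\bigl[f(S_{t+1})\bigm|S_t\bigr]\le(1-\gamma)f(S_t),
\end{equation*}
where the drift $\gamma$ depends explicitly on $\alpha,\beta$ and $\pi_{\max}/\pi_{\min}$. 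The laziness $\beta$ enters here because the growth event $\{S_{t+1}\supseteq S_t\}$ occurs whenever $U\le\beta$, which controls the variance of $\pi(S_{t+1})-\pi(S_t)$; the term $1-\beta+\alpha$ appears as the effective ``flow budget'' per state and governs the complementary shrinkage event. Iterating this one-step contraction, then combining it with the identity above through a second-moment (Cauchy--Schwarz) step and the crude bound $\pi(S_t)\ge\pi_{\min}$, should yield precisely the stated pointwise estimate $|\P^t_{x,y}-\pi_y|\le(\pi_{\max}^{3/2}/\pi_{\min}^{3/2})\cdot(2/(\beta^{1/2}\alpha))\sqrt{(1-\beta+\alpha)/(\alpha t)}$.

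The main obstacle will be the quantitative supermartingale step itself: the Morris--Peres calculation is usually performed for $1/2$-lazy chains on a specific graph, and to extract the precise exponents of $\alpha$ and $\beta$ appearing in the target bound one must redo the one-step variance estimate for $\pi(S_{t+1})-\pi(S_t)$ by hand, carrying $\alpha$ and $\beta$ as independent parameters rather than invoking any prior lazy-chain result as a black box. A secondary subtlety is that $\P$ is not assumed reversible, so if reversibility is needed anywhere in the variance argument it should be routed through the dual chain $\hat\P_{y,x}:=\pi_x\P_{x,y}/\pi_y$ and its associated evolving-set process, with the resulting bound transferred back via $\P^t_{x,y}=\pi_y\hat\P^t_{y,x}/\pi_x$.
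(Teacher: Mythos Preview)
Your proposal and the paper both route through the evolving-set process, but the two arguments diverge at the very first step and never reconverge.

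The paper does \emph{not} use the Doob-transformed heat-kernel identity $\P^t_{x,y}/\pi_y=\Esub{\{x\}}{\mathbf{1}_{\{y\in S_t\}}/\pi(S_t)}$, and it does \emph{not} set up a Morris--Peres contraction for $\sqrt{\pi(S)(1-\pi(S))}$. Instead it uses the simpler coupling identity $\P^t_{x,y}=(\pi_y/\pi_x)\Psub{\{x\}}{y\in S_t}$ (Lemma~17.12 in \cite{levin2009markov}), from which a two-line calculation gives $|\P^t_{x,y}-\pi_y|\le(\pi_y/\pi_x)\Psub{\{x\}}{\tau>t}$, where $\tau$ is the absorption time of $(S_t)$ in $\{\varnothing,\Omega\}$. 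The $1/\sqrt{t}$ then comes from a general martingale hitting-time bound (Proposition~17.19 in \cite{levin2009markov}): if a bounded non-negative martingale has per-step conditional variance at least $\sigma^2$ before absorption and overshoots $1$ by at most a factor $D$, then $\Pro{T>t}\le(2M_0/\sigma)\sqrt{D/t}$. The paper feeds in $M_t=\pi(S_t)$; the variance lower bound $\Var(\pi(S_{t+1})\mid S_t)\ge\beta\pi_{\min}^2\alpha^2$ (obtained by conditioning on $\{U_{t+1}\le\beta\}$, which is where $\beta$ enters), and the overshoot factor $D=((1-\beta)/\alpha+1)\,\pi_{\max}/\pi_{\min}$ (because every $y\in S_{t+1}$ has a $\P$-neighbour in $S_t$, so $|S_{t+1}|\le((1-\beta)/\alpha+1)|S_t|$; this is where the combination $1-\beta+\alpha$ appears). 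Substituting gives the stated inequality directly.

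Your route has a genuine gap relative to the target statement: a one-step contraction $\Ex{f(S_{t+1})\mid S_t}\le(1-\gamma)f(S_t)$, once iterated, yields \emph{exponential} decay $(1-\gamma)^t$, not a $t^{-1/2}$ tail. Since $\alpha>0$ forces the underlying graph to be complete, the conductance is bounded below by a constant and the Morris--Peres machinery will indeed produce an exponential bound---but that is a \emph{different} inequality, with different constants, and there is no natural way to degrade a geometric contraction into the stated polynomial rate. If you want the bound exactly as written, you will have to abandon the supermartingale contraction and instead bound the absorption probability $\Pro{\tau>t}$ via the variance/overshoot argument, as the paper does. Your worry about reversibility is also unnecessary: neither the identity $\P^t_{x,y}=(\pi_y/\pi_x)\Psub{\{x\}}{y\in S_t}$ nor Proposition~17.19 requires it, so no detour through the dual chain is needed.
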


Applying this bound to a round matrix $\M$ that is formed of $d=O(1)$ matchings, we obtain
$
	  \left| \M^t_{u,v} - 1/n \right| = O(t^{-1/2}). 
$
It should be noted that \cite[Lemma 2.5]{SS12} proved a weaker version where the upper bound is only $O(t^{-1/8})$ instead of $O(t^{-1/2})$. As we will prove in Lemma~\ref{lem:intuition}, the bound $O(t^{-1/2})$ is asymptotically tight if we consider the balancing circuit model on cycles.

Combining the bound in Theorem~\ref{thm:markov} with the upper bound in Theorem~\ref{thm:main1} yields:

\begin{restatable}{thm}{maintwo}\label{thm:maintwo}
Consider the balancing circuit model with an arbitrary round matrix $\M$ consisting of $d=O(1)$ matchings in the average case. Then the discrepancy after $t$ rounds is $O(t^{-1/4} \cdot \sigma \cdot (\log n)^{3/2} + \sqrt{\log n})$ with probability $1 - O(n^{-1})$.
\end{restatable}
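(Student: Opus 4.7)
The plan is to bound the discrepancy $\max_u x_u^{(t)} - \min_u x_u^{(t)}$ by $2\max_u |x_u^{(t)} - x_{v_0}^{(t)}|$ for an arbitrary reference vertex $v_0$, apply the upper bound of Theorem~\ref{thm:main1} to each of the $n-1$ pairs $(u,v_0)$, and control the geometric factor $\|\M^t_{.,u}-\M^t_{.,v_0}\|_2$ via Theorem~\ref{thm:markov}. Concretely, I would set $\delta = \sqrt{3\log n}$ in Theorem~\ref{thm:main1}, so that the per-pair failure probability is $2e^{-3\log n}+2n^{-2}=O(n^{-2})$; union-bounding over the $n-1$ choices of $u$ then yields an overall failure probability of $O(n^{-1})$, which is the target guarantee. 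With this choice of $\delta$ the surviving bound for every pair is
\begin{equation*}
  \left|x_u^{(t)}-x_{v_0}^{(t)}\right|
  \;\leq\; O\!\left(\sigma\log n \cdot \sqrt{\log n}\right)\cdot \left\|\M^t_{.,u}-\M^t_{.,v_0}\right\|_2 \;+\; O\!\left(\sqrt{\log n}\right).
\end{equation*}

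Next, I would bound the $\ell_2$ factor by $O(t^{-1/4})$. By Lemma~\ref{lem:twonodestoavg} (cited just before Theorem~\ref{thm:markov}),
$\|\M^t_{.,u}-\M^t_{.,v_0}\|_2^2 \leq 4\max_k \|\M^t_{.,k}-\mathbf{1}/n\|_2^2$, so it suffices to control a single column's deviation from uniform. The essential trick is to avoid the naive bound $\|\M^t_{.,k}-\mathbf{1}/n\|_2^2 \leq n\cdot\|\M^t_{.,k}-\mathbf{1}/n\|_\infty^2 = O(n/t)$, which is far too weak, and instead exploit that $\M$ is doubly stochastic, so $\mathbf{1}^\top \M^t_{.,k}=1$ and consequently
\begin{equation*}
  \left\|\M^t_{.,k}-\mathbf{1}/n\right\|_2^2 \;=\; \|\M^t_{.,k}\|_2^2 - \tfrac{1}{n}
  \;\leq\; \|\M^t_{.,k}\|_\infty \cdot \|\M^t_{.,k}\|_1 - \tfrac{1}{n}
  \;=\; \|\M^t_{.,k}\|_\infty - \tfrac{1}{n}.
\end{equation*}
Since $d=O(1)$, the transition matrix underlying $\M$ has $\alpha,\beta=\Omega(1)$ and uniform stationary distribution $1/n$, so Theorem~\ref{thm:markov} gives $\|\M^t_{.,k}\|_\infty \leq 1/n + O(t^{-1/2})$. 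Therefore $\|\M^t_{.,u}-\M^t_{.,v_0}\|_2 \leq 2\cdot O(t^{-1/4}) = O(t^{-1/4})$.

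Substituting this back into the per-pair bound gives $|x_u^{(t)}-x_{v_0}^{(t)}| = O(\sigma (\log n)^{3/2} t^{-1/4} + \sqrt{\log n})$ for every $u$ with probability $1-O(n^{-1})$, and doubling yields the claimed discrepancy bound. The one technically delicate step is the $\ell_\infty\to\ell_2$ reduction in the previous paragraph: without invoking double stochasticity one loses a factor of $\sqrt{n}$, which would erase the whole point of Theorem~\ref{thm:markov}. Everything else is bookkeeping around the choice of $\delta$ and the union bound.
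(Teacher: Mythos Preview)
Your proposal is correct and follows essentially the same route as the paper: choose $\delta=\sqrt{3\log n}$ in Theorem~\ref{thm:main1}, reduce $\|\M^t_{.,u}-\M^t_{.,v}\|_2$ to $\max_k\|\M^t_{.,k}-\mathbf{1}/n\|_2$ via Lemma~\ref{lem:twonodestoavg}, invoke Theorem~\ref{thm:markov} with $\alpha=\beta=2^{-d}=\Omega(1)$ to get an $O(t^{-1/2})$ bound on the $\ell_\infty$ deviation, and then use $\|\cdot\|_2^2\le\|\cdot\|_\infty\cdot\|\cdot\|_1$ (the paper applies it to $\M^t_{.,k}-\mathbf{1}/n$ directly, you expand the square first and apply it to $\M^t_{.,k}$, which is the same computation) before a union bound. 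The reference-vertex trick versus union bounding over all pairs is an immaterial difference.
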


Since the initial discrepancy in the average case is $O(\sigma \cdot \log n)$ (see Lemma~\ref{lem:distribution}), Theorem~\ref{thm:maintwo} implies that in the average case, there is a signficant decrease (roughly of order $t^{-1/4}$) in the discrepancy, regardless of the underlying topology.


For round matrices $\M$ with small second largest eigenvalue, the next result provides a significant improvement:
\begin{restatable}{thm}{mainthree}
\label{thm:main3}
Consider the balancing circuit model with an arbitrary round matrix $\M$ consisting of $d$ matchings in the average case. Then the discrepancy after $t$ rounds is $O(\lambda(\M)^{t/4} \cdot \sigma  \cdot (\log n)^{3/2}+\sqrt{\log n})$ with probability $1 - O(n^{-1})$.
\end{restatable}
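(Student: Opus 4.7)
The plan is to mirror the argument that yielded Theorem~\ref{thm:maintwo}, but to replace the polynomial-in-$t$ entry-wise decay from Theorem~\ref{thm:markov} with an exponential, spectral-gap-driven decay controlled by $\lambda(\M)$. The rest of the pipeline is unchanged: I will combine the upper bound of Theorem~\ref{thm:main1} with Lemma~\ref{lem:twonodestoavg} to reduce pairwise load deviations $|x_u^{(t)}-x_v^{(t)}|$ to the single-column quantity $\max_k \|\M^t_{\cdot,k}-\mathbf{1}/n\|_2$, and finish by a union bound over vertices.

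The one new ingredient is a spectral bound on this single-column quantity. Since $\M$ is doubly stochastic, $\mathbf{1}$ is a right eigenvector with eigenvalue $1$ and, with $J:=\mathbf{1}\mathbf{1}^\top$, the identity $(\M - J/n)^t = \M^t - J/n$ holds. By the definition of $\lambda(\M)$ as governing the contraction of $\M$ on $\mathbf{1}^\perp$, the operator $\M-J/n$ has spectral norm at most $\sqrt{\lambda(\M)}$, so iterating gives
\[
  \max_{v,k}\,\bigl|\M^t_{v,k}-1/n\bigr| \;\leq\; \bigl\|\M^t - J/n\bigr\|_2 \;=\; O\bigl(\lambda(\M)^{t/2}\bigr).
\]
Combining this with the elementary inequality
\[
  \sum_v \bigl(\M^t_{v,k}-1/n\bigr)^2 \;\leq\; \max_v \bigl|\M^t_{v,k}-1/n\bigr|\cdot \sum_v \bigl|\M^t_{v,k}-1/n\bigr| \;\leq\; 2\cdot \max_v \bigl|\M^t_{v,k}-1/n\bigr|,
\]
valid because each column of $\M^t$ is a probability distribution and hence has $\ell_1$ distance at most $2$ from $\mathbf{1}/n$, yields
\[
  \max_k \bigl\|\M^t_{\cdot,k}-\mathbf{1}/n\bigr\|_2 \;=\; O\bigl(\lambda(\M)^{t/4}\bigr),
\]
exactly analogous to the $O(t^{-1/4})$ bound used in Theorem~\ref{thm:maintwo}.

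The remainder is routine. By Lemma~\ref{lem:twonodestoavg}, $\|\M^t_{\cdot,u} - \M^t_{\cdot,v}\|_2 = O(\lambda(\M)^{t/4})$ for every pair $u,v$. Applying Theorem~\ref{thm:main1} with $\delta = 2\sqrt{\log n}$ bounds $|x_u^{(t)}-x_v^{(t)}|$ by $O(\sigma(\log n)^{3/2}\lambda(\M)^{t/4} + \sqrt{\log n})$ with failure probability $O(n^{-2})$ for any fixed pair. Fixing a reference vertex $v^\ast$ and union-bounding over the $n-1$ choices of $u$ controls $\max_u |x_u^{(t)} - x_{v^\ast}^{(t)}|$, and hence the discrepancy (up to a factor of $2$), with probability $1-O(n^{-1})$, as claimed.

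\textbf{Main obstacle.} The technical crux is the spectral step: because round matrices $\M$ arising as products of symmetric matching matrices are typically themselves non-symmetric, their singular values and eigenvalues need not coincide, and one must verify carefully that the iterated spectral norm $\|(\M-J/n)^t\|_2$ really decays at the rate implied by the paper's definition of $\lambda(\M)$. Once this contraction is established, the rest is bookkeeping on top of Theorem~\ref{thm:main1}, Lemma~\ref{lem:twonodestoavg}, and the same sum-of-squares identity already used for Theorem~\ref{thm:maintwo}.
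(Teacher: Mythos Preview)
Your proposal is correct and follows essentially the same route as the paper: bound $\max_{u,v}|\M^t_{u,v}-1/n|$ by $\lambda(\M)^{t/2}$, convert to an $\ell_2$ column bound of order $\lambda(\M)^{t/4}$ via $\|\cdot\|_2^2\le\|\cdot\|_\infty\|\cdot\|_1$ and Lemma~\ref{lem:twonodestoavg}, then feed this into Theorem~\ref{thm:main1} and union-bound. The only difference is cosmetic: the paper obtains the entrywise bound by citing \cite[Lemma~2.4]{SS12}, whereas you re-derive it via the identity $(\M-J/n)^t=\M^t-J/n$ and the computation $\|\M-J/n\|_2^2=\lambda(\M\M^T)=\lambda(\M)$, which indeed resolves the ``main obstacle'' you flag.
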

Hence for graphs where $\lambda$ is bounded away from $1$, we even obtain an exponential convergence.

\begin{wrapfigure}{r}{4cm}
\begin{center}
\begin{tabular}{|c|c|}
\hline
Graph & $\disc(x^{(t)}) $ \\
\hline
Cycle & $t^{-1/4} \cdot \sigma$ \\
$r$-dim. Torus & $t^{-r/4} \cdot \sigma $ \\ 
Expander & $\lambda^{t/4} \cdot \sigma$ \\
Hypercube & $2^{-t/2} \cdot \sigma$ \\
\hline
\end{tabular}\label{fig:summary}
\caption{Discrepancy bounds (without logarithmic factors) for different topologies.}
\end{center} 
\end{wrapfigure}

%


In Section~\ref{sec:application}, we  derive bounds on the discrepancy for cycles, $r$-dim. Torus, expanders and hypercubes. A summary of these results can be found in Figure~1.

Finally, we discuss our results and contrast them to the convergence of the discrepancy in the worst-case in Section \ref{sec:discussion}. On a high level, these results demonstrate that on all the considered topologies, we have much faster convergence in the average-case than in the worst-case. However, if we are only interested in the time to achieve a very small, say, constant or poly-logarithmic discrepancy, then we reveal an interesting dichotomy: we have a quicker convergence than in the worst-case if and only if the standard deviation $\sigma$ is smaller than some threshold, which depends on the actual toplogy. We observe the same phenomena in our experiments, which are also discussed in Section~\ref{sec:experiments}.


\section{Notation and Background}\label{sec:def}

We assume that $G = (V,E)$ is an undirected, connected graph with $n$ nodes labelled in $[0,n-1]$. Unless stated otherwise, all logarithms are to the base $e$. The notations $\Pro{\mathcal{E}}$ and $\Ex{X}$ denote the probability of an event $\mathcal{E}$ and the expectation of a random variable $X$, respectively. For any $n$-dimensional vector $x$, $\disc(x)=\max_i x_i - \min_i x_i$ denotes the {\em discrepancy}.

\textbf{Matching Model.}
In the {\em matching model} (sometimes also called {\em dimension exchange model}),  every two matched nodes in round $t$ balance their load as evenly as possible. This can be expressed by a symmetric $n$ by $n$ matching matrix $\M^{(t)}$, where with slight abuse of notation we use the same symbol for the matching and the corresponding matching matrix. Formally, matrix $\M^{(t)}$ is defined
by $\M_{u,u}^{(t)}:=1/2$, $\M_{v,v}^{(t)}:=1/2$ and $\M_{u,v}^{(t)}=\M_{v,u}^{(t)}:=1/2$ if $\{u,v\} \in \M^{(t)} \subseteq E$,
and $\mathbf{M}^{(t)}_{u,u}=1$, $\mathbf{M}^{(t)}_{u,v}=0~(u\neq v)$ if $u$ is not matched.

\textbf{Balancing Circuit.} In the {\em balancing circuit model}, a specific sequence of matchings is applied periodically.
More precisely, let $\mathbf{M}^{(1)},\dots,\mathbf{M}^{(d)}$ be a sequence of $d$ matching matrices, also called {\em period} \footnote{Note that $d$ may be different from the maximal degree (or degree) of the underlying graph.}.
Then in step $t \geq 1$, we apply the matching matrix
$\mathbf{M}^{(t)} := \mathbf{M}^{(((t-1) \mod d)+1)}$.
We define the {\em round matrix} by $\M:= \prod_{s=1}^{d} \M^{(s)}$.  If $\M$ is symmetric, we define $\lambda(\M)$ to be its second largest eigenvalue (in absolute value). Following~\cite{RSW98}, if $\M$ is not symmetric (which is usually the case), we define $\lambda(\M)$ as the second largest eigenvalue of the symmetric matrix $\M \cdot \M^{T}$, where $\M^{T}$ is the transpose of $\M$. We always assume that $\lambda(\M) < 1$, which is guaranteed to hold if the matrix $\M$ is irreducible. Notice that since $\M$ is doubly stochastic, all powers of $\M$ are doubly stochastic as well. A natural choice for the $d$ matching matrices is given by an edge coloring of $G$. There are various efficient distributed edge coloring algorithms, e.g.~\cite{PS92,PS97}.


\textbf{Balancing Circuit on Specific Toplogies.}
For {\em hypercubes}, the canonical choice is dimension exchange consisting of $d=\log_2 n$ matching matching matrices $\M^{(i)}$ by $\M_{u,v}^{(i)}=1/2$ if and only if the bit representation of $u$ and $v$ differ only in bit $i$. Then the round matrix $\M$ is defined by $\prod_{i=1}^{\log_2 n} \M^{(i)}$. For {\em cycles}, we will consider the natural ``Odd-Even''-scheme meaning that for $\M^{(1)}$, the matching consists of all edges $\{j,(j+1) \pmod n \}$ for any odd $j$, while for $\M^{(2)}$, the matching consists of all edges $\{j,(j+1) \pmod n \}$ for any even $j$. More generally, for $r$-dimensional tori with vertex set $[0,n^{1/r}-1]^r$, we will have $2\cdot r$ matchings in total, meaning that for every dimension $1 \leq i \leq r$ we have two matchings along dimension $i$, similar to the definition of matchings for the cycle. 


\textbf{The Continuous Case.}
In the continuous case, load is arbitrarily divisible. Let $\xi^{(0)} \in \mathbb{R}^n$ be the initial load represented as a row vector, and in every round two matched nodes average their load perfectly. We consider the load vector $\xi^{(t)}$ after $t$ rounds in the balancing circuit model (that means, after the executions of $t \cdot d$ matchings in total). This process corresponds to a linear system and $\xi^{(t)},$ $t\in \mathbb{N}$, can be expressed as
$\xi^{(t)} = \xi^{(t-1)} \,\M$,
which results in $\xi^{(t)} = \xi^{(0)} \, \M^{t}$. 

\textbf{The Discrete Case.}
Let us now turn to the discrete case with indivisible, unit-size tokens. Let $x^{(0)} \in \mathbb{Z}^n$ be the initial load vector with average load $\overline{x}:=\sum_{w \in V} x_w^{(0)}/n$,
 and $x^{(t)}$ be the load vector at the end of round $t$. In case the sum of tokens of the two paired nodes is odd, we employ the so-called {\em random orientation} (or {\em randomized rounding})~\cite{RSW98,SS12}. More precisely, if there are two nodes $u$ and $v$ with load $a$ and $b$ being paired by matching $\M^{(t)}$, then node $u$ 
   gets either $\big\lceil \frac{a+b}{2} \big\rceil$ or $\big\lfloor \frac{a+b}{2} \big\rfloor $
tokens, with probability $1/2$ each. The remaining tokens are assigned to node $v$.

\textbf{The Average-Case Setting.}
We consider a setting where each entry of the initial load vector $x^{(0)}$ is chosen from an exponentially concentrated probability distribution $D$ with expectation $\mu$ and variance $\sigma^2$ (see Definition~\ref{def:expcon}).  
%
It is not difficult to verify that many natural distributions satisfy the condition of exponentially concentrated (see the appendix for more details).
\begin{restatable}{lem}{distributions}\label{lem:distributions}
The uniform distribution, binomial distribution, geometric distribution and Poisson distribution are all exponentially concentrated.
\end{restatable}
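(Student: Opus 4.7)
The plan is to verify the bound $\Pro{|X - \mu| \geq \delta\sigma} \leq \exp(-\kappa \delta)$ separately for each of the four distributions, by exhibiting for each a tail estimate of sub-exponential type (or better) and then choosing $\kappa > 0$ small enough that the inequality holds uniformly in $\delta > 0$. Since each distribution has fixed parameters, any parameter-dependent constants can be absorbed into $\kappa$; the core content is the existence of an exponentially-decaying tail.

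The uniform and binomial cases are essentially immediate. For the discrete uniform distribution on $\{0, 1, \ldots, N\}$, both the standard deviation and the support diameter are $\Theta(N)$, so the event $|X - \mu| \geq \delta\sigma$ is empty once $\delta$ exceeds a fixed constant. For the binomial distribution $\Bin(m, p)$, the Chernoff--Hoeffding inequality gives $\Pro{|X - mp| \geq t} \leq 2\exp(-2t^2/m)$, and substituting $t = \delta\sigma = \delta\sqrt{mp(1-p)}$ yields the sub-Gaussian bound $2\exp(-2p(1-p)\delta^2)$, which dominates $\exp(-\kappa\delta)$ once $\delta \geq 1$ and $\kappa$ is small. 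For the geometric distribution $\Geo(p)$, the closed-form tail $\Pro{X \geq k} = (1-p)^{k-1}$, together with $\mu = 1/p$, $\sigma = \sqrt{1-p}/p$ and the elementary estimate $\log(1-p) \leq -p$, yields
\[
	\Pro{X \geq \mu + \delta\sigma} \leq (1-p)^{\mu + \delta\sigma - 1} \leq \exp\!\bigl(-\delta\sqrt{1 - p}\bigr),
\]
while the lower tail vanishes once $\delta > \mu/\sigma$, since $X \geq 1$ by assumption.

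The only case requiring any care is the Poisson distribution $\Poi(\lambda)$, where a standard Bernstein-type estimate $\Pro{|X - \lambda| \geq t} \leq 2\exp(-t^2/(2(\lambda + t)))$ (derivable from the Poisson moment generating function $\mathbb{E}[e^{sX}] = \exp(\lambda(e^s-1))$ by the usual optimization) gives, upon substituting $t = \delta\sqrt{\lambda}$, a sub-Gaussian tail $\exp(-\Omega(\delta^2))$ in the range $\delta \leq \sqrt{\lambda}$ and a sub-exponential tail $\exp(-\Omega(\delta\sqrt{\lambda}))$ for larger $\delta$. Both regimes dominate $\exp(-\kappa\delta)$ for $\kappa = \kappa(\lambda)$ sufficiently small. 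The main obstacle in the overall argument is purely bookkeeping: choosing a single $\kappa$ that handles all $\delta > 0$ including the small-$\delta$ region, which is achieved by taking $\kappa$ to be the minimum of the distribution-specific exponents and, for distributions with $\mu$ in their support, invoking the trivial bound $\Pro{|X - \mu| \geq \delta\sigma} \leq \Pro{X \neq \mu} < 1$ on the residual interval.
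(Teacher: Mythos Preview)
Your proposal is correct and follows essentially the same case-by-case strategy as the paper: handle each distribution separately by invoking a standard tail bound and verifying exponential decay in $\delta$. The specific inequalities you cite differ slightly from the paper's---you use Hoeffding directly for the binomial where the paper uses a multiplicative Chernoff bound from McDiarmid, and you use the closed-form geometric tail $(1-p)^{k-1}$ where the paper uses Markov plus memorylessness---but these are routine variations of the same argument. Your final remark about the small-$\delta$ regime is in fact more careful than the paper's own proof, which does not address that point explicitly.
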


\begin{proof}
	Note that the uniform distribution $\Uni[0,k]$ is trivially exponentially concentrated, since $\sigma=\Theta(k)$. However, also distributions with unbounded range may be exponentially concentrated, with one example being the geometric distribution $\Geo(p)$. To verify this, first note that we have $\mu=1/p$ and $\sigma=\sqrt{(1-p)/p^2}$ (and so $\mu=\Theta(\sigma)$) and thus $\Pro{ \mu - X \geq \delta \cdot \sigma} \leq \exp\left( - \kappa \delta \right)$ holds trivially for a sufficiently small constant $\kappa > 0$. Secondly, for the upper tail, by Markov's inequality, $\Pro{ X \geq 2 \cdot \Ex{X} } \leq 1/2$, and by the memoryless property of the geometric distribution, for any $j \geq 1$, $\Pro{X \geq j \cdot 2 \cdot \Ex{X} } \leq 2^{-j}$. 
	
	For the binomial distribution $\Bin[m,p]$ with expectation $\mu=m \cdot p$ and standard deviation $\sigma=\sqrt{m \cdot p \cdot (1-p)}$, we will assume w.l.o.g. that $p \leq 1/2$, so that $\sigma = \Theta( \sqrt{mp})$. Then by \cite[Theorem~2.3]{McD98}, we have for $X \sim \Bin[m,p]$, $\Pro{X - \mu \geq  \epsilon \cdot \mu } \leq \exp\left(-  \frac{\epsilon^2 \mu}{2+2\epsilon/3} \right)$. Choosing $\epsilon = \delta \cdot \sigma/\mu$ yields
	$\Pro{X - \mu \geq \delta \cdot \sigma } \leq \exp\left(-  \frac{\delta^2 \cdot \sigma^2 / \mu}{2+2 \sigma/\mu} \right)$, as needed. For the lower tails, we use $\Pro{ \mu - X \geq \epsilon \cdot \mu} \leq e^{-1/2 \epsilon^2 \mu}$ and obtain a similar result as before (see again \cite[Theorem~2.3]{McD98}).
	
	For the Poisson Distribution $\Poi[\mu]$, we can verify in an analogous way that it is exponentially distributed by using the following two Chernoff bounds for Poisson random variables (\ref{lem:chernoffpoisson}).
\end{proof}

The definition of exponentially concentrated implies the following concentration result:

\begin{lem}\label{lem:distribution}
Let $D$ be an exponentially concentrated distribution and let $X \sim D$. Then,
\[
\Pro{ X \in [\mu - 8/\kappa \cdot \sigma \log n, \mu + 8/\kappa \cdot \sigma \log n] } \geq 1 -n^{-2}.
\]
In particular, the initial discrepancy satisfies $\disc(x^{(0)})=O( \sigma \cdot \log n)$ with probability at least $1-n^{-1}$.
\end{lem}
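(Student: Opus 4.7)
The plan is to apply the exponential concentration condition from Definition~\ref{def:expcon} directly, with a suitable choice of $\delta$, and then union bound over the $n$ coordinates for the discrepancy statement.

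First, for the single-variable bound, I would set $\delta := (8/\kappa) \cdot \log n$ in the definition of exponentially concentrated. Then
\[
\Pro{|X - \mu| \geq (8/\kappa) \cdot \sigma \cdot \log n} \;\leq\; \exp(-\kappa \delta) \;=\; \exp(-8 \log n) \;=\; n^{-8},
\]
which is well below the required $n^{-2}$ slack. Taking complements gives exactly the claimed interval containment, and in fact with considerable room to spare (this slack will be convenient in the union-bound step).

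Second, for the initial discrepancy, I would apply the first part to each coordinate $x_i^{(0)} \sim D$ and take a union bound over the $n$ vertices. Since each coordinate lies outside $[\mu - (8/\kappa)\sigma \log n,\, \mu + (8/\kappa)\sigma \log n]$ with probability at most $n^{-8}$, the probability that \emph{any} coordinate falls outside this interval is at most $n \cdot n^{-8} = n^{-7} \leq n^{-1}$. On the complementary event, every coordinate lies in an interval of length $(16/\kappa) \sigma \log n$, so by definition of the discrepancy,
\[
\disc(x^{(0)}) \;=\; \max_i x_i^{(0)} - \min_i x_i^{(0)} \;\leq\; (16/\kappa) \cdot \sigma \cdot \log n \;=\; O(\sigma \cdot \log n),
\]
as required.

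I don't anticipate any real obstacle: the proof is essentially a direct substitution into the definition, plus a union bound. The only mild subtlety is confirming that the two-sided tail in Definition~\ref{def:expcon} already covers $|X - \mu|$ (so no extra factor of $2$ is needed), and verifying that the constant $8/\kappa$ is large enough to absorb both the original $n^{-2}$ per-vertex tail and the extra factor of $n$ from the union bound in the second part — both of which are automatic given the exponential form of the tail.
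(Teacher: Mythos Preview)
Your proposal is correct and is precisely the intended argument: the paper does not spell out a proof of this lemma, since it follows immediately from Definition~\ref{def:expcon} by choosing $\delta = (8/\kappa)\log n$ and then taking a union bound over the $n$ vertices, exactly as you do.
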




The advantage of Lemma~\ref{lem:distribution} is that we can use a simple conditioning trick to work with distributions that have a finite range and are therefore easier to analyze with concentration tools like Hoeffding's inequality (Theorem~\ref{thm:hoeffding}). That is in the analysis we simply work with a bounded-range distribution $\widetilde{D}$, which is the distribution $D$ under the condition that only values in the interval $[\mu - 8/c \cdot \sigma \log(n), \mu + 8/c \cdot \sigma \log(n)]$ occur.



\section{Proof of the General Bound (Theorem~\ref{thm:main1})}




\mainone*

\subsection{Proof of Theorem~\ref{thm:main1} (Upper Bound)}

We will use the following result from \cite{SS12} that bounds the deviation between the continuous and discrete load, assuming that we have $\xi^{(0)}=x^{(0)}$.

\begin{thm}[{\cite[Theorem 3.6($i$)]{SS12}}] \label{thm:errorbound}
	Consider the balancing circuit model with an arbitrary round matrix $\M$. Then for any round $t \geq 1$ it holds that
	\begin{equation*}
	\Pro{\max_{w \in V} \left| x_w^{(t)} - \xi_w^{(t)} \right| \leq \sqrt{12 \cdot \log n} } \geq 1 - n^{-2}.
	\end{equation*}
\end{thm}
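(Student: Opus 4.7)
The plan is to analyze the deviation vector $e^{(t)} := x^{(t)} - \xi^{(t)}$ coordinate by coordinate using a martingale concentration argument, then take a union bound over vertices. Coupling the two processes via $\xi^{(0)} = x^{(0)}$ gives $e^{(0)} = 0$. Since $\xi^{(t)} = \xi^{(t-1)}\M^{(t)}$ and the discrete update decomposes as $x^{(t)} = x^{(t-1)}\M^{(t)} + r^{(t)}$, where $r^{(t)}$ is the random-orientation error vector, one obtains the recursion $e^{(t)} = e^{(t-1)}\M^{(t)} + r^{(t)}$. Unrolling this and observing that a matched pair $\{u,v\}\in\M^{(s)}$ with odd combined load contributes $r_u^{(s)} = -r_v^{(s)} = \pm\tfrac12$ uniformly at random, while an even-sum pair contributes zero, yields for any fixed $w \in V$ that
\[
   e_w^{(t)} \;=\; \sum_{s=1}^{t}\sum_{\{u,v\}\in\M^{(s)}} R^{(s)}_{u,v}\,\bigl(\phi^{(s)}_u - \phi^{(s)}_v\bigr),
\]
where $R^{(s)}_{u,v}\in\{-\tfrac12,0,+\tfrac12\}$ is the signed rounding variable for the pair $\{u,v\}$ and $\phi^{(s)}$ denotes the $w$-th column of $\prod_{r=s+1}^{t}\M^{(r)}$; in particular $\phi^{(t)} = \mathbf{1}_{\{w\}}$.

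The next step is to reveal the coin flips $R^{(s)}_{u,v}$ one at a time, in lexicographic order (over $s$ and over matched pairs within $\M^{(s)}$), so that the partial sums form a martingale. Conditional on the history, the parity of each matched sum is determined but the sign of each non-zero $R^{(s)}_{u,v}$ is a fresh mean-zero $\pm\tfrac12$ Bernoulli independent of everything else, and the coefficient vectors $\phi^{(s)}$ are deterministic because the matching schedule is fixed in advance. To bound the quadratic variation I would invoke the one-step averaging identity $\|y\|_2^2 - \|\M y\|_2^2 = \tfrac12 \sum_{(u,v)\in\M}(y_u - y_v)^2$, which is immediate from the formula $(\M y)_u = (\M y)_v = (y_u+y_v)/2$ on each matched pair. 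Applying it with $\phi^{(s-1)} = \M^{(s)}\phi^{(s)}$ and telescoping from $s=1$ to $t$ gives
\[
   \tfrac12 \sum_{s=1}^{t}\sum_{(u,v)\in\M^{(s)}}\bigl(\phi^{(s)}_u - \phi^{(s)}_v\bigr)^2 \;=\; \|\phi^{(t)}\|_2^2 - \|\phi^{(0)}\|_2^2 \;\leq\; \|\phi^{(t)}\|_2^2 \;=\; 1.
\]
Each martingale increment is bounded in magnitude by $\tfrac12|\phi^{(s)}_u - \phi^{(s)}_v|$, so the sum of squared increment-bounds is at most $\tfrac14 \cdot 2 = \tfrac12$, and Azuma--Hoeffding yields $\Pro{|e_w^{(t)}| \geq \lambda} \leq 2\exp(-\lambda^2)$ for every $w$.

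Choosing $\lambda = \sqrt{3\log n}$ gives a per-vertex failure probability of $2 n^{-3}$, and a union bound over the $n$ vertices yields $\max_w|e_w^{(t)}| \leq \sqrt{12\log n}$ with probability at least $1 - n^{-2}$ (the constant $\sqrt{12}$ leaves generous slack over the tighter $\sqrt{3}$). The main obstacle is the martingale setup: the rounding variables $R^{(s)}_{u,v}$ are not a priori independent because the parity of each matched sum depends on earlier rounding outcomes. The resolution is the conditional-independence observation above, which decouples the ``which pairs are active'' information (random, dependent on history) from the ``which way we round'' information (a fresh fair coin), so that concentration is driven entirely by the deterministic coefficient vectors $\phi^{(s)}$ coming from the fixed matching schedule.
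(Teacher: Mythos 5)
The paper does not prove this statement at all --- it is imported verbatim as a citation to \cite[Theorem~3.6($i$)]{SS12} --- so there is no in-paper proof to compare against. Your argument is correct and is essentially a reconstruction of the proof in that reference: the decomposition $e^{(t)} = e^{(t-1)}\M^{(t)} + r^{(t)}$ with antisymmetric $\pm\tfrac12$ rounding contributions on each odd-sum pair, the observation that the signs are fresh fair coins conditional on the history while the coefficient vectors $\phi^{(s)}$ are deterministic under the fixed matching schedule, the telescoping identity $\|\phi^{(s)}\|_2^2 - \|\M^{(s)}\phi^{(s)}\|_2^2 = \tfrac12\sum_{(u,v)\in\M^{(s)}}(\phi^{(s)}_u-\phi^{(s)}_v)^2$ bounding the total squared increments by $\tfrac12$, and Azuma plus a union bound are exactly the ingredients used there. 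The only cosmetic issue is indexing: in the balancing circuit model a ``round'' $t$ comprises $d$ matchings, so your unrolling should run over all $t\cdot d$ individual matching steps rather than $t$ of them; this changes nothing, since the telescoping bound of $1$ is independent of the number of steps. Your constant accounting ($\lambda=\sqrt{12\log n}$ giving per-vertex failure $2n^{-12}$, hence overall failure at most $n^{-2}$ after the union bound) also checks out.
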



\begin{proof}[Proof of Theorem~\ref{thm:main1} (Upper Bound)]

%
%

 Recall that the initial vector  $\xi^{(0)}=x^{(0)}$ consists of $n$ i.i.d.~random variables. As explained at the end of Section~\ref{sec:def}, we condition on the event \[ \mathcal{E} := \bigcap_{w \in V} \left\{ \left| \xi_w^{(0)} - \mu \right|  \leq 8/\kappa \cdot \sigma \cdot \log n \right\}. \]
By Lemma~\ref{lem:distribution}, $\Pro{ \mathcal{E} } \geq 1- n^{-2}$. In the remainder of the proof, all random variables are conditional on $\mathcal{E}$, but for simplicity we will not explicitly express this conditioning. 

Since $\xi_u^{(t)} = \sum_{w \in V} \xi_w^{(0)}\M^t_{w,k}$, the load $\xi_u^{(t)}$ is just a weighted sum of i.i.d.~random variables and we obtain
	\begin{align*}
		\xi_u^{(t)} - \xi_{v}^{(t)} &= \sum_{w \in V}\xi_w^{(0)} \cdot \left(\M^t_{w,u} - \M^t_{w,v}\right), \label{eq:deviation}
	\end{align*}
	which is in fact still a sum of $n$ i.i.d.~random variables. The expectation is
	\begin{align*}
	\Ex{\xi_u^{(t)} - \xi_{v}^{(t)}} &= \Ex {\sum_{w\in V} \xi_w^{(0)} \cdot \left(\M^t_{w,u} - \M^t_{w,v} \right) } = \sum_{w \in V} \left(\M^t_{w,u} - \M^t_{w,v} \right) \Ex{\xi_w^{(0)} } = 0,
	\end{align*}
	where the last equality holds since $\M$ is doubly stochastic.
	
	Now applying Hoeffding's inequality (Theorem~\ref{thm:hoeffding}) and recalling that conditional on $\mathcal{E}$, the range of each $\xi_w^{(t)}$ is $16/ \kappa \cdot \sigma \cdot \log n$, we obtain that
	\begin{equation*}
	\begin{aligned}
	\Pro{\left| \xi_u^{(t)} - \xi_{v}^{(t)} \right| \geq \delta} 
	&\leq 2 \cdot \exp \left(\frac{-2\delta^2}{256/\kappa^2 \cdot \sigma^2 \cdot \log^2 n \cdot 
		\left\| \M^t_{.,u} - \M^t_{.,v} \right\|_2^2 } \right).\\
	\end{aligned}
	\end{equation*}
	Applying Theorem~\ref{thm:errorbound} yields
	\begin{align*}
	\Pro{\left| x_u^{(t)} - x_{v}^{(t)} \right| \geq \delta + \sqrt{48 \cdot \log n}} \leq  2 \cdot \exp \left(\frac{-2\delta^2}{256/\kappa^2 \cdot \sigma^2 \cdot \log^2 n \cdot 
		\left\| \M^t_{.,u} - \M^t_{.,v} \right\|_2^2 } \right) + n^{-2}.
	\end{align*}
	The statement of the theorem follows by scaling $\delta$ and recalling that $\Pro{ \mathcal{E}} \geq 1 -n^{-2}$.
\end{proof}
%
%

\subsection{Proof of Theorem~\ref{thm:main1} (Lower Bound)}


The proof of the lower bound will use the following quantitative version of a central limit type theorem for independent but non-identical random variables.

\begin{thm}
	[Berry-Esseen Theorem \cite{berry1941accuracy,esseen1942liapounoff} for non-identical r.v.]\label{thm:be} Let $X_1,X_2,...,X_n$ be independently distributed with $\Ex{X_i} = 0$,
	$\Ex{X_i^2} = \Var{X_i} = \sigma_i^2$, and $\Ex{|X_i|^3} = \rho_i < \infty$. If $F_n(x)$ 
	is the
	distribution of $\frac{X_1 + ... + X_n}{\sqrt{\sigma_1^2 + \sigma_2^2 + ... + \sigma_n^2}}$ and $\Phi(x)$
	is the standard normal distribution, then 
	\begin{equation*}
	\begin{aligned}
	|F_n(x) - \Phi(x)| & \leq C_0 \psi_0,\\
	\end{aligned}
	\end{equation*}
	where
	$
	\psi_0 = \left( \sum_{i=1}^{n} \sigma_i^2 \right)^{-3/2} \cdot \sum_{i=1}^{n}\rho_i
	$
	and $C_0>0$ is a constant.
	\label{thm:berryesseenfornonidentical}
\end{thm}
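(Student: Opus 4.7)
The plan is to follow the classical Fourier-analytic proof of Berry--Esseen (Esseen's original argument), using three ingredients: Esseen's smoothing inequality, which bounds the Kolmogorov distance between two CDFs via the difference of their characteristic functions; a third-order Taylor expansion of each $\phi_i(t) := \Ex{e^{itY_i}}$ in terms of $\sigma_i^2$ and $\rho_i$; and a careful choice of truncation radius $T$. After the natural normalization $s_n^2 := \sum_{i=1}^n \sigma_i^2$, $Y_i := X_i/s_n$, $\widetilde{\sigma}_i^2 := \sigma_i^2/s_n^2$, and $\widetilde{\rho}_i := \rho_i/s_n^3$, the sum $S_n := \sum_i Y_i$ has mean $0$, variance $1$, and $\sum_i \widetilde{\rho}_i = \psi_0$. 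One may assume $\psi_0 \leq c_0$ for a small absolute constant $c_0$, because otherwise $\sup_x|F_n(x) - \Phi(x)| \leq 1 \leq \psi_0/c_0$ holds trivially by choosing $C_0 \geq 1/c_0$.

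Next I would apply Esseen's smoothing lemma: for any $T > 0$,
\[
\sup_x |F_n(x) - \Phi(x)| \leq \frac{1}{\pi}\int_{-T}^{T} \frac{|\phi(t) - e^{-t^2/2}|}{|t|}\, dt + \frac{24}{\pi T \sqrt{2\pi}},
\]
where $\phi(t) := \prod_i \phi_i(t)$ is the characteristic function of $S_n$, and pick $T$ of order $1/\psi_0$. To control the integrand on $[-T,T]$, I use the third-order Taylor bound $\phi_i(t) = 1 - \widetilde{\sigma}_i^2 t^2/2 + R_i(t)$ with $|R_i(t)| \leq \widetilde{\rho}_i |t|^3/6$. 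For $c_0$ small enough, $|1 - \phi_i(t)| \leq 1/2$ uniformly in $i$, so the principal branch of $\log \phi_i(t)$ is well defined. Summing over $i$ and applying $|\log(1+z) - z| \leq |z|^2$ for $|z|\leq 1/2$ gives $\log \phi(t) = -t^2/2 + E(t)$ with $|E(t)| = O(\psi_0 |t|^3 + t^4)$. Since $|e^a - e^b| \leq |a - b|\max(e^{\Re a}, e^{\Re b})$, this yields the pointwise estimate $|\phi(t) - e^{-t^2/2}| = O(\psi_0 |t|^3)\cdot e^{-t^2/4}$; integrating against $1/|t|$ over $[-T,T]$ produces a contribution of order $\psi_0$, and the smoothing-lemma tail contributes $O(1/T) = O(\psi_0)$, completing the bound.

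The main obstacle is the fine bookkeeping around the choice $T = c/\psi_0$: one has to tune the constant $c$ so that simultaneously (i) $|1 - \phi_i(t)| \leq 1/2$ holds on $[-T,T]$ to legitimize the complex logarithm, (ii) the quartic correction produced by $|\log(1+z) - z| \leq |z|^2$ is dominated by the leading $\psi_0 |t|^3$ after integration (which holds precisely because $|t| \leq T = O(1/\psi_0)$), and (iii) the Gaussian factor $e^{-t^2/4}$ absorbs the growing polynomial $|t|^3$ so that the integral converges. A technical prerequisite to verify along the way is the Lyapunov-type bound $\rho_i \geq \sigma_i^3$ (from Jensen's inequality $\Ex{|X_i|^3} \geq \Ex{X_i^2}^{3/2}$); combined with the power-mean inequality it gives $\psi_0 \geq 1/\sqrt{n}$, so in particular $T$ is at least a constant and the integration range is non-degenerate.
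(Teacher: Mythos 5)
This statement is not proved in the paper at all: it is imported as a classical result, cited to Berry and Esseen, and used as a black box in the lower-bound argument of Theorem~\ref{thm:main1}. So there is no in-paper proof to compare against; what can be assessed is whether your Fourier-analytic sketch would actually go through. It follows the right classical template (Esseen smoothing plus a Taylor expansion of the characteristic functions), but as written it has a genuine gap that is exactly the point where the non-identically-distributed case is harder than the i.i.d.\ case.

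The problematic step is the claim that for $c_0$ small enough one has $|1-\phi_i(t)|\le 1/2$ uniformly in $i$ for all $|t|\le T=c/\psi_0$, which you use to take termwise principal logarithms. The best available bound is $|1-\phi_i(t)|\le \tfrac{1}{2}\widetilde{\sigma}_i^2t^2$, so you need $|t|\le 1/\max_i\widetilde{\sigma}_i$; but $\max_i\widetilde{\sigma}_i$ is only bounded by $\psi_0^{1/3}$ (from $\widetilde{\sigma}_i^3\le\widetilde{\rho}_i\le\psi_0$), and $1/\psi_0^{1/3}\ll c/\psi_0$ when $\psi_0$ is small. Concretely, take one dominant Rademacher-type summand with $\widetilde{\sigma}_1=n^{-1/4}$ and $n-1$ summands with $\widetilde{\sigma}_i\approx n^{-1/2}$: then $\psi_0\approx n^{-1/2}$, $T\approx c\sqrt{n}$, and $\phi_1(t)=\cos(n^{-1/4}t)$ passes through $0$ and $-1$ well inside $[-T,T]$, so $\log\phi_1$ is not even defined there. (Note that shrinking $c_0$ only makes this worse, since it allows $\psi_0$ to be smaller and hence $T$ larger.) The standard repairs are either (i) to bound the \emph{product} directly, using symmetrization to get $|\phi_i(t)|^2\le\exp(-\widetilde{\sigma}_i^2t^2+\tfrac{4}{3}\widetilde{\rho}_i|t|^3)$ and hence $|\phi(t)|\le e^{-t^2/3}$ on all of $|t|\le 1/(4\psi_0)$ without any logarithms, combined with a telescoping decomposition $\prod_i\phi_i-\prod_ie^{-\widetilde{\sigma}_i^2t^2/2}=\sum_k(\prod_{i<k}\phi_i)(\phi_k-e^{-\widetilde{\sigma}_k^2t^2/2})(\prod_{i>k}e^{-\widetilde{\sigma}_i^2t^2/2})$; or (ii) to restrict the logarithm expansion to $|t|\le 1/\max_i\widetilde{\sigma}_i$ and handle the remaining range with the symmetrization bound. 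A secondary, fixable inaccuracy: the quadratic correction from $|\log(1+z)-z|\le|z|^2$ contributes $\sum_i\widetilde{\sigma}_i^4t^4$, which is $O(\psi_0^{4/3}t^4)$ (via $\widetilde{\sigma}_i^4\le\widetilde{\rho}_i^{4/3}$), not $O(t^4)$ with a universal constant, and one needs that extra factor of $\psi_0^{4/3}$ for the integrated contribution to be $O(\psi_0)$. Your reduction to $\psi_0\le c_0$ and the Lyapunov observation $\rho_i\ge\sigma_i^3$ are both correct and standard.
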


With this concentration tool at hand, we are able to prove the lower bound in Theorem~\ref{thm:main1}. Unfortunately, it appears quite difficult to apply Theorem~\ref{thm:be} directly to equation~(\ref{eq:deviation}), since we need a good bound on the error term $\psi_0$. To this end, we will first partition the vertex set $V$ into buckets with equal contribution to $\xi_u^{(t)} - \xi_v^{(t)}$. Then we will apply Theorem~\label{thm:berryesseenfornonidentical} to the bucket with the largest variance, for which we can show that $\psi_0=o(1)$, thanks to the precondition that $t=\omega(1)$ and the bound in Theorem~\ref{thm:markov}. 

\begin{proof}[Proof of Theorem \ref{thm:main1} (Lower Bound)]
As in the derivation of the upper bound, we first consider $\xi_{u}^{(t)} - \xi_{v}^{(t)}$:
\begin{equation*}
	dev := \xi_u^{(t)} - \xi_{v}^{(t)} = \sum_{w \in V} \xi_w^{(0)} \cdot \left( \M^t_{w,u} - \M^t_{w,v} \right).
\end{equation*}
Again we are dealing with a weighted sum of i.i.d.~random variables with expectation $\mu$ and variance $\sigma^2$. As mentioned earlier, we have
$\Ex{dev} = \sum_{w \in V} \Ex{ \xi_w^{0}} \cdot 
\left( \M^t_{w,u} - \M^t_{w,v} \right) = 0$ since $\M$ is a doubly stochastic matrix. Of course, we could apply Theorem~\ref{thm:be} directly to $dev$, but it appears difficult to control the error term $\psi_0$. Therefore we will first partition the above sum into buckets where the weights of the random variables are roughly the same.

More precisely, we will partition $V$ into $2\log_2 \sigma - 1$ buckets, where for each $i$ we have $V_i := \left\{ w \in V \colon | \M^t_{w,u} - \M^t_{w,v}| \in (2^{-i},2^{-i+1}] \right\}$ for $1 \leq i \leq 2 \log_2 \sigma - 1$, and $V_{2\log_2 \sigma - 1} := \left\{ w\in V \colon \left| \M^t_{w,u} - \M^t_{w,v} \right| \leq \frac{1}{\sigma^2} \right\}$.

Further, let us consider the variance of $dev$:
\[
\sigma^2 = \sum_{w \in V} \left( \M^t_{w,k} - \M^t_{w,k'} \right)^2.
\]
Then by the pigeonhole principle there exists an index $1 \leq i \leq 2 \log_2 \sigma -1$ such that
\[
\sum_{v \in V_i} \left( \M^t_{w,u} - \M^t_{w,v} \right)^2 \geq \frac{1}{2 \log_2 \sigma} \cdot \sigma^2. 
\]
Firstly, if that index $i$ is equal to $2 \log_2 \sigma $, then 
\[
\left\| \M_{.,u}^t  - \M_{.,v}^t \right\|_2^2 = O(\sigma^{-1}), 
\]
and the lower bounds holds trivially. Therefore, we will assume in the remainder of the proof that $i < 2 \log_2 \sigma - 1$. We now decompose $dev$ into $dev=S+S^c$, where
\[
S := \sum_{w \in V_i} \xi_w^{(0)} \cdot \left( \M^t_{w,u} - \M^t_{w,v} \right)
\]
and
\[
S^c := \sum_{w \not\in V_i} \xi_w^{(0)} \cdot \left( \M^t_{w,u} - \M^t_{w,v} \right).
\]

Let us first analyze $S$. We will now apply Theorem~\ref{thm:be} to $S$. In preparation for this, let us first upper bound $\psi_0$. Using the definition of exponentially concentrated, it follows that for any constant $k$, the first $k$ moments are all bounded from above by $O(\sigma^k)$.
Hence,
\begin{align*}
	\psi_0 &= \frac{ \sum_{w \in V_i} \Ex{  \left|\xi_w^{(0)} \cdot \left( \M^t_{w,u} - \M^t_{w,v} \right)\right|  ^3}   }{ \left( \sum_{w \in V_i} \Ex{  \xi_w^{(0)} \cdot \left( \M^t_{w,u} - \M^t_{w,v} \right)  ^2}  \right)^{3/2} } \leq \frac{ O(\sigma^3) \cdot \sum_{w \in V_i} \left| \M^t_{w,u} - \M^t_{w,v} \right|  ^3}  { \sigma^3 \cdot \left( \sum_{w \in V_i}  \left( \M^t_{w,u} - \M^t_{w,v} \right)  ^2  \right)^{3/2}}.
\end{align*}
Recalling that for any $w \in V_i$, $\left| \M^t_{w,u} - \M^t_{w,v}   \right| \in (2^{-i},2^{-i+1}] $, we can simplify the above expression as follows:
\begin{align*}
	\psi_0 &= O \left(	\frac{  |V_i| \cdot 2^{-3i}  }{ |V_i|^{3/2} \cdot 2^{-3i} }		\right) = O( |V_i|^{-1/2}).
\end{align*}
However, since we have $t=\omega(1)$, by Theorem~\ref{thm:markov}, $| \M^t_{x,y} - \frac{1}{n} | = O(t^{-1/2})$ and therefore it must be that $|V_i|=\omega(1)$, and we conclude that $\psi_0 = o(1)$. 

Before applying Theorem~\ref{thm:be}, we scale the original distribution to $\xi_w^{'(0)} = \xi_w^{(0)} - \mu$. Since $\Var(aX) = a^2 \Var(X)$, we have
\begin{equation*}
	\begin{aligned}
		F_n(x) & = \Pro{ \frac{\sum_{w \in V_i} \xi_w^{'(0)} \cdot \left( \M^t_{w,u} - \M^t_{w,v} \right) }{\sigma \sqrt{\sum_{w \in V_i} \left( \M^t_{w,u} - \M^t_{w,v} \right)^2}} \leq x } \\
		& = \Pro{ \frac{\sum_{w \in V_i} \xi_w^{(0)} \cdot \left( \M^t_{w,u} - \M^t_{w,v} \right) - \sum_{w \in V_i}  \mu \cdot \left( \M^t_{w,u} - \M^t_{w,v} \right) }{\sigma \sqrt{\sum_{w \in V_i}\left( \M^t_{w,u} - \M^t_{w,v} \right)^2}} 
			\leq x } \\
		& = \Pro{ \frac{S - \Ex{S}}{\sigma \sqrt{\sum_{w \in V_i}\left(\M^t_{w,u} - \M^t_{w,v} \right)^2}} 
			\leq x } \\
		& = \Pro{ S - \Ex{S} 
			\leq x\sigma \sqrt{\sum_{w \in V_i}\left(\M^t_{w,u} - \M^t_{w,v} \right)^2} }.
	\end{aligned}  
\end{equation*}

As derived earlier $\psi_0 = o(1)$, and therefore

\begin{align*}
	\Pro{  S - \Ex{S}  \geq x \sigma \sqrt{\sum_{w \in V_i}\left( \M^t_{w,u} - \M^t_{w,v} \right)^2} } & \geq  \Phi(-x) - C_0\psi_0  \\
	& \geq \frac{1}{\sqrt{\pi} (x + \sqrt{x^2 + 2})e^{x^2}} - o(1),
\end{align*}
where last inequality uses \cite[Formula~ 7.1.13]{abramowitz1966handbook}:
\begin{equation*}
	\frac{1}{x + \sqrt{x^2 + 2}} <
	e^{x^2}\int_x^\infty e^{-t^2}dt \leqslant
	\frac{1}{x + \sqrt{x^2 + 4/\pi}} \ (x>0).
\end{equation*}

Therefore, by substitution, we get
\begin{equation*}\label{fm:normesti}
	\frac{1}{\sqrt{\pi}(x+\sqrt{x^2+2})e^{x^2}} <
	\Phi^c(x) \leqslant
	\frac{1}{\sqrt{\pi}(x+\sqrt{x^2+4/\pi})e^{x^2}}.
\end{equation*}

Hence with $x=1$,
\begin{align*}
	\Pro{ S - \Ex{S} \geq \sigma \sqrt{\sum_{w \in V_i}\left( \M^t_{w,u} - \M^t_{w,v} \right)^2} } &\geq \frac{1}{16}.
\end{align*}

Similarly, we can derive that
\begin{align*}
	\Pro{ \Ex{S} - S \geq \sigma \sqrt{\sum_{w \in V_i}\left( \M^t_{w,u} - \M^t_{w,v} \right)^2} } &\geq \frac{1}{16}.
\end{align*}

Hence, independent of what the value $S^c$ is, there is still a probability of at least $1/16$ so that $|S+S^c| \geq \sigma/2 \cdot \sqrt{1/(2 \log_2 \sigma)} \cdot \sqrt{\sum_{w \in V}\left( \M^t_{w,u} - \M^t_{w,v} \right)^2}$.
\end{proof}

\section{Proof of the Universal Bounds (Theorem~\ref{thm:maintwo}, Theorem~\ref{thm:main3})}\label{sec:universalbounds}

In the previous section we proved that the deviation between the loads of two nodes $u$ and $v$ is essentially captured by $\left\| \M^t_{.,u} - \M^t_{.,v} \right\|_2$. However, in some cases it might be hard to compute or estimate this quantity for arbitrary vertices $u$ and $v$. Therefore we will first prove the following universal upper bound on the discrepancy that works for arbitrary graphs and pair of nodes, as stated on page~\pageref{thm:maintwo}. 


\maintwo* 

\subsection{Proof of Theorem~\ref{thm:maintwo}}

The proof of Theorem~\ref{thm:maintwo} is fairly involved and we first sketch the high level ideas.
We first show that $\left\| \M^t_{.,u} - \M^t_{.,v} \right\|_2^2$ can be upper bounded in terms of the $\ell_2$-distance to the stationary distribution.


\begin{restatable}{lem}{twonodestoavg}
\label{lem:twonodestoavg}
	Consider the balancing circuit model with an arbitrary round matrix $\M$. Then for all $u,v \in V$, we have
	$
	 \| \M^t_{.,u} - \M^t_{.,v}  \|_2^2 \leq 4 \cdot \max_{k \in V} \| \M^t_{.,k} - \mathbf{\frac{1}{n}}  \|_2^2.
$
	Further, for any $u \in V$ we have
	$
	\max_{v \in V} \| \M^t_{.,u} - \M^t_{.,v} \|_2^2 \geq \| \M^t_{.,u} - \mathbf{\frac{1}{n}} \|_2^2
	$.
\end{restatable}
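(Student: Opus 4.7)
\textbf{Proof plan for Lemma~\ref{lem:twonodestoavg}.}

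Both inequalities are relative statements about columns of the doubly stochastic matrix $\M^t$, and the unifying observation I would exploit is that the uniform vector $\mathbf{\frac{1}{n}}$ is exactly the column-average of $\M^t$: since $\M$ is doubly stochastic, so is every power $\M^t$, and therefore for each coordinate $w$ we have $\frac{1}{n}\sum_{v\in V} \M^t_{w,v} = \frac{1}{n}$. This gives us a natural centre of mass $\mathbf{\frac{1}{n}}$ to mediate between any two columns.

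For the first inequality, I would insert $\mathbf{\frac{1}{n}}$ between $\M^t_{.,u}$ and $\M^t_{.,v}$ and apply the triangle inequality in $\ell_2$:
\[
\left\| \M^t_{.,u} - \M^t_{.,v}\right\|_2 \leq \left\| \M^t_{.,u} - \mathbf{\tfrac{1}{n}}\right\|_2 + \left\|\mathbf{\tfrac{1}{n}} - \M^t_{.,v}\right\|_2 \leq 2 \max_{k \in V} \left\| \M^t_{.,k} - \mathbf{\tfrac{1}{n}}\right\|_2.
\]
Squaring both sides then yields the claimed factor of $4$.

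For the second inequality, I would use the column-average identity from the first paragraph to rewrite $\M^t_{.,u} - \mathbf{\frac{1}{n}} = \frac{1}{n}\sum_{v \in V} \bigl(\M^t_{.,u} - \M^t_{.,v}\bigr)$, and then invoke convexity of the squared $\ell_2$ norm (Jensen's inequality) to conclude
\[
\left\|\M^t_{.,u} - \mathbf{\tfrac{1}{n}}\right\|_2^2 \leq \frac{1}{n}\sum_{v \in V}\left\|\M^t_{.,u} - \M^t_{.,v}\right\|_2^2 \leq \max_{v \in V}\left\|\M^t_{.,u} - \M^t_{.,v}\right\|_2^2,
\]
which is exactly the desired bound.

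Honestly, there is no real obstacle here; both bounds drop out of the triangle inequality and Jensen's inequality once one recognises $\mathbf{\frac{1}{n}}$ as a column-average. The only subtlety worth highlighting is that we genuinely need \emph{double} stochasticity (not merely row-stochasticity) for this identity to hold, which is guaranteed because $\M$ is a product of symmetric matching matrices and this property is preserved under taking powers.
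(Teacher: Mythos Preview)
Your proof is correct. For the first inequality your triangle-inequality-then-square argument is essentially identical to the paper's, which instead applies $(a-b)^2 \leq 2\bigl((a-c)^2+(c-b)^2\bigr)$ coordinatewise; both routes yield the same factor~$4$.

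For the second inequality your argument is genuinely different and noticeably cleaner. The paper proceeds by a direct expansion: it writes out $\sum_w(\M^t_{w,u}-\M^t_{w,v})^2 - \sum_w(\M^t_{w,u}-\tfrac{1}{n})^2$, takes the expectation over $v$ uniform on $V\setminus\{u\}$, and after a page of algebra using double stochasticity shows this expectation is nonnegative, so some $v$ attains the bound. Your approach short-circuits all of this: recognising $\mathbf{\tfrac{1}{n}}$ as the column average of $\M^t$ and applying Jensen to the convex function $\|\cdot\|_2^2$ gives the averaged inequality in one line, from which the $\max$ bound is immediate. Both arguments ultimately establish the same intermediate fact (the average of $\|\M^t_{.,u}-\M^t_{.,v}\|_2^2$ over $v$ dominates $\|\M^t_{.,u}-\mathbf{\tfrac{1}{n}}\|_2^2$), but yours isolates the structural reason---convexity plus the barycentre identity---whereas the paper verifies it by computation. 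Your remark that double stochasticity (not mere row-stochasticity) is what makes the column-average identity work is exactly the right point to flag.
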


\begin{proof}
	\begin{equation*}
		\begin{aligned}
			\sum_{w \in V}\left( \M^t_{w,u} - \M^t_{w,v} \right)^2 & \leq 2 \cdot \left( \sum_{w \in V} \left( \M^t_{w,u} - \frac{1}{n} \right)^2 + \left( \M^t_{w,v} - \frac{1}{n} \right)^2 \right) \\
			& \leq 4 \cdot \max_{k \in V} \sum_{w \in V}\left( \M^t_{w,k} - \frac{1}{n} \right)^2,
		\end{aligned}
	\end{equation*}
	and the first statement follows. 
	We now prove the second statement:
	\begin{equation*}
		\forall u \colon \max_{v \in V} \sqrt{\sum_{w \in V} \left(\M_{w,u}^t - \M_{w,v}^t \right)^2} \geq \sqrt{\sum_{w \in V} \left(\M_{w,u}^t - \frac{1}{n} \right)^2}.
	\end{equation*}
	
	We first look at the difference between these two terms squared. That is, for any vertex $v \in V$ we have
	\begin{align}
		\lefteqn{ \sum_{w \in V} \left(\M_{w,u}^t - \M_{w,v}^t \right)^2   - \sum_{w \in V} \left(\M_{w,u}^t - \frac{1}{n} \right)^2} \notag \\
		& = -2 \sum_{w \in V} \left(\M_{w,u}^t - \frac{1}{n}\right)\left(\M_{w,v}^t - \frac{1}{n}\right) + \sum_{w \in V} \left(\M_{w,v}^t - \frac{1}{n}\right)^2 \notag \\
		& = -2 \sum_{w \in V}  \M_{w,u}^t \cdot \M_{w,v}^t + \frac{4}{n} - \frac{2}{n} + \sum_{w \in V} \left(\M_{w,v}^t\right)^2 + \frac{1}{n} \notag \\
		& = -2 \sum_{w \in V}  \M_{w,u}^t \cdot \M_{w,v}^t + \sum_{w \in V} \left(\M_{w,v}^t\right)^2 + \frac{1}{n} \label{eq:labelone}
	\end{align}
	
	Now let $Z$ be a uniform random variable over the set $V \backslash \{u\}$. Then it follows that
	
	\begin{equation*}
		\begin{aligned}
			\Esub{Z \sim V \setminus \{u\}}{\sum_{w \in V} \M_{w,u}^t \cdot \M_{w,Z}^t } &
			= \sum_{z \in V, z \neq u} \frac{1}{n-1} \cdot \sum_{w \in V} \M_{w,u}^t \cdot \M_{w,z}^t \\
			& = \frac{1}{n-1} \sum_{w \in V} \M_{w,u}^t \cdot \sum_{z \in V, z \neq u}\M_{w,z}^t \\
			& = \frac{1}{n-1} \sum_{w \in V} \M_{w,u}^t \cdot \left( 1 - \M_{w,u}^t \right) \\
			& = \frac{1}{n-1} \left( 1 - \sum_{w \in V} \left(\M_{w,u}^t\right)^2 \right).
		\end{aligned}
	\end{equation*}
	Further, by linearity of expectations
	\begin{align*}
		\lefteqn{ \Esub{Z \sim V \setminus \{u\}}{-2 \sum_{w \in V} \M_{w,u}^t \cdot \M_{w,Z}^t + \sum_{w \in V} \left( \M_{w,Z}^t \right)^2 + \frac{1}{n} }} \\ &= \frac{-2}{n-1} \left( 1 - \sum_{w \in V} \left(\M_{w,u}^t\right)^2 \right) + \sum_{z \in V, y \neq u} \sum_{w \in V} \frac{1}{n-1} \left( \M_{w,z}^t \right)^2 + \frac{1}{n}.
	\end{align*}

	By definition of expectation, this implies that there exists a vertex $v \in V, v \neq u$ such that
	\begin{align}
		\lefteqn{-2 \sum_{w \in V} \M_{w,u}^t \cdot \M_{w,v}^t + \sum_{w \in V} \left( \M_{w,v}^t \right)^2 + \frac{1}{n} } \notag \\ &\geq 	\frac{-2}{n-1} \left( 1 - \sum_{w \in V} \left(\M_{w,u}^t\right)^2 \right)	+ \sum_{z \in V, z \neq u} \sum_{w \in V} \frac{1}{n-1} \left( \M_{w,z}^t \right)^2 + \frac{1}{n}. \label{eq:labeltwo}
	\end{align}
	
	
	Combining (\ref{eq:labelone}) and (\ref{eq:labeltwo}), 
	\begin{align*}
		\lefteqn{ \sum_{w \in V} \left(\M_{w,u}^t - \M_{w,v}^t \right)^2  - \sum_{w \in V} \left(\M_{w,v}^t - \frac{1}{n} \right)^2 } \\ & \geq \frac{-2}{n - 1} \cdot \left( 1- \sum_{w \in V} \left(\M_{w,u}^t \right)^2 \right) + \sum_{w \in V} \sum_{z \in V, z \neq u} \frac{1}{n-1} \cdot \left( \M_{w,z}^t \right)^2+ \frac{1}{n} \\
		& = -\frac{1}{n-1} - \frac{1}{n \cdot (n-1)} + \frac{2}{n-1}\sum_{w \in V} \left( \M_{w,u}^t \right)^2  + \frac{1}{n - 1} \cdot \sum_{w \in V} \sum_{z \in V, z \neq u}      \left(\M_{w,z}^t \right)^2\\
		&= -\frac{1}{n-1} - \frac{1}{n \cdot (n-1)} + \frac{1}{n-1}\sum_{w \in V} \left(\M_{w,u}^t\right)^2 + \frac{1}{n-1} \cdot \sum_{w \in V} \sum_{z \in V}  \left(\M_{w,z}^t \right)^2\\
		& \geq -\frac{1}{n-1} - \frac{1}{n \cdot (n-1)} + \frac{1}{n \cdot (n-1)} + \frac{1}{n-1} \\
		& \geq 0,
	\end{align*}
	where the second last inequality holds since $\M$ is doubly stochastic.
\end{proof}


The next step and main ingredient of the proof of Theorem~\ref{thm:maintwo} is to establish that $\| \M^t_{.,k} - \mathbf{\frac{1}{n}} \|_{\infty} = O(1/\sqrt{t})$. This result will be a direct application of a general bound on the $t$-step probabilities of an arbitrary, possibly non-reversible Markov chain, as given in Theorem~\ref{thm:markov} from page \pageref{thm:markov}:

\markov*

In this subsection we prove Theorem~\ref{thm:maintwo}, assuming the correctness of Theorem~\ref{thm:markov} whose proof is deferred to Section~\ref{sec:markovproof}.

\begin{proof}[Proof of Theorem~\ref{thm:maintwo}]

	By Theorem \ref{thm:main1} and Lemma \ref{lem:twonodestoavg}, we obtain 
	\begin{align*}
	\Pro{\left| x_u^{(t)} - x_{v}^{(t)} \right| \geq  \delta  \cdot 16\sqrt{2} \kappa \cdot \sigma \cdot \log n \cdot
		\max_{k \in V} \left\| \M^t_{.,k} - \mathbf{\frac{1}{n}}  \right\|_2  + \sqrt{48 \log n}} \leq 2 \cdot e^{- \delta^2} + 2n^{-2}.
	\end{align*}
	Hence we can find a $\delta=\sqrt{3\log n}$ so that the latter probability gets smaller than $3 n^{-2}$. Further, by applying Theorem \ref{thm:markov} with $\alpha = \beta = 2^{-d}$ to $\P=\M$ we conclude that
	$
	\| \M^t_{.,k} - \mathbf{\frac{1}{n}} \|_{\infty} = O(t^{-1/2}), 
	$
	since $d=O(1)$. Using the fact $\|.\|_2^2 \leq \|.\|_{\infty} \cdot \|.\|_1$,
	$
	\| \M^t_{.,k} - \mathbf{\frac{1}{n}} \|_{2}^2 = O(t^{-1/2}),
	$
	and by the union bound, $\disc(x^{(t)}) = O(t^{-1/4} \cdot \sigma \cdot (\log n)^{3/2} + \sqrt{\log n})$ with probability at least $1-3 n^{-1}$.
%
%
%
\end{proof}

\subsection{Proof of Theorem~\ref{thm:markov}}\label{sec:markovproof}

This section is devoted to the proof of Theorem~\ref{thm:markov}. Our proof is based on the evolving-set process, which is a Markov chain based on any given irreducible, not necessarily reversible Markov chain on $\Omega$. For the definition of the evolving set process, we closely follow the exposition in \cite[Chapter 17]{levin2009markov}.

Let $\P$ denote the transition matrix of an irreducible Markov chain and $\pi$ its stationary distribution. $\P^t$ is the $t$-step transition probability matrix. The \emph{edge measure} $Q$ is defined by $Q_{x,y} := \pi_x \P_{x,y}$ and $Q(A, B) = \sum_{x \in A, y \in B} Q_{x,y}$. 





\begin{defi}\label{def:evolset}
	Given a transition matrix $\P$, the \textbf{evolving-set process} is a Markov chain on subsets of $\Omega$ defined as follows. Suppose the current state is $S \subset \Omega$. Let $U$ be a random variable which is uniform on $[0,1]$. The next state of the chain is the set
	\begin{equation*}
	\tilde{S} = \left\{ y \in \Omega : \frac{Q(S,y)}{\pi_y} \geq U \right\}.
	\end{equation*}
\end{defi}

This chain is \emph{not irreducible} because 
$\varnothing$ and $\Omega$ are absorbing states.
It follows that
\begin{equation*}
\Pro{y \in S_{t+1} \,|\, S_t} = \frac{Q(S_t, y)}{\pi_y}
\end{equation*}
since the probability that $y \in S_{t+1}$ is equal to the probability of the event that the chosen value of $U$ is less than $\frac{Q(S_t, y)}{\pi_y}$. 


\begin{proposition}[{\cite[Proposition 17.19]{levin2009markov}}]\label{prop:bound}
	Let $(M_t)$ be a non-negative martingale with respect to $(Y_t)$, and define
	$
	T_h := \min\{t \geq 0: M_t = 0 \ or \ M_t \geq h\}
	$
	Assume that for any $h \geq 0$
	\begin{enumerate}[(i)]\itemsep0pt
		\item For $t < T_h$, $\Var(M_{t+1} \,|\, Y_0, \ldots , Y_t) \geq \sigma^2$, and
		\item $M_{T_h} \leq Dh$.
	\end{enumerate}
	Let $T := T_1$. If $M_0$ is a constant, then
	$
	\Pro{T > t} \leq \frac{2M_0}{\sigma}\sqrt{\frac{D}{t}}.
	$
\end{proposition}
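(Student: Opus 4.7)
The plan is to bound $\Pro{T > t}$ by a second-moment argument on the stopped martingale $M_{t \wedge T_h}$, combined with Doob's maximal inequality applied at an auxiliary level $h \in (0,1]$ which is tuned at the end. The main obstacle is that a naive $L^2$-orthogonality argument at $h = 1$ only produces the (superficially stronger) rate $O(M_0/(\sigma^2 t))$, whereas the stated $\sqrt{1/t}$ dependence arises from balancing two terms obtained by splitting the event $\{T > t\}$ according to whether the martingale has exceeded an intermediate level $h < 1$.

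Fix $h \in (0,1]$. Since $t \wedge T_h$ is a bounded stopping time and $M$ is a non-negative martingale, optional stopping gives $\Ex{M_{t \wedge T_h}} = M_0$. The orthogonality of martingale increments together with assumption $(i)$ yields
\begin{align*}
\Ex{M_{t \wedge T_h}^2} - M_0^2 = \sum_{s=0}^{t-1}\Ex{(M_{s+1}-M_s)^2\,\mathbbm{1}_{T_h > s}} \geq \sigma^2\sum_{s=0}^{t-1}\Pro{T_h > s} \geq \sigma^2 \cdot t \cdot \Pro{T_h > t}.
\end{align*}
For the matching upper bound, assumption $(ii)$ gives $M_{T_h} \leq Dh$, and by definition of $T_h$ we have $M_t < h$ on $\{T_h > t\}$; assuming w.l.o.g.\ $D \geq 1$, it follows that $M_{t \wedge T_h} \leq Dh$ pointwise, hence $\Ex{M_{t \wedge T_h}^2} \leq Dh \cdot \Ex{M_{t \wedge T_h}} = Dh \cdot M_0$. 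Combining produces the partial estimate $\Pro{T_h > t} \leq Dh M_0/(\sigma^2 t)$.

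The second ingredient is the decomposition $\{T > t\} = \{T_h > t\} \cup \{T_h \leq t < T\}$ (disjoint), valid because $T_h \leq T$ whenever $h \leq 1$. On the second event, the chain cannot have been absorbed at $0$ (otherwise $T \leq T_h \leq t$), so $M_{T_h} \geq h$ and in particular $\max_{0 \leq s \leq t} M_s \geq h$. Doob's maximal inequality applied to the non-negative martingale $M$ therefore gives $\Pro{T_h \leq t < T} \leq M_0/h$, and combining the two pieces yields
\begin{align*}
\Pro{T > t} \leq \frac{Dh M_0}{\sigma^2 t} + \frac{M_0}{h}.
\end{align*}

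Finally, I would optimize over $h$. The minimizer of the right-hand side is $h^{\ast} = \sigma\sqrt{t/D}$, giving exactly the claimed bound $\tfrac{2M_0}{\sigma}\sqrt{D/t}$ whenever $h^{\ast} \leq 1$, i.e., $t \leq D/\sigma^2$. In the opposite regime $t > D/\sigma^2$, the optimum falls outside $(0,1]$, so I take $h = 1$ (for which $T_h = T$) and recover $\Pro{T > t} \leq DM_0/(\sigma^2 t)$; an elementary check shows this is again at most $\tfrac{2M_0}{\sigma}\sqrt{D/t}$, as the inequality reduces to $t \geq D/(4\sigma^2)$, which is implied by $t > D/\sigma^2$. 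Both cases therefore deliver the stated bound.
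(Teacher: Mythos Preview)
The paper does not give its own proof of this proposition; it is quoted directly from \cite[Proposition~17.19]{levin2009markov} and used as a black box in the proof of Theorem~\ref{thm:markov}. Your argument is correct and is in fact the standard proof appearing in Levin--Peres--Wilmer: the second-moment computation on the stopped martingale $M_{t\wedge T_h}$ gives $\Pro{T_h>t}\le DhM_0/(\sigma^2 t)$, Doob's maximal inequality handles the complementary event $\{T_h\le t<T\}$, and optimizing the threshold $h$ balances the two terms. The only cosmetic difference is that Levin--Peres--Wilmer first pass to $\Ex{T_h}\le DhM_0/\sigma^2$ and then apply Markov's inequality, whereas you bound $\Pro{T_h>t}$ directly via $\sum_{s<t}\Pro{T_h>s}\ge t\,\Pro{T_h>t}$; the two are equivalent.
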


We now generalize \cite[Lemma~17.14]{levin2009markov} to cover arbitrarily small loop probabilities. 


\begin{lem}\label{lem:generalisedexpi}
	Let $(U_t)$ be a sequence of independent random variables, each uniform on $[0,1]$, such that $S_{t+1}$ is generated from $S_t$ using $U_{t+1}$. Then with $\beta := \min\limits_{u} \P_{u,u} > 0$,
	\begin{equation*}
	\begin{aligned}
	\Ex{\pi(S_{t+1}) \,|\, U_{t+1} \leq \beta, S_t = S} & \geq \pi(S) + Q(S, S^c)
	, \\
	\Ex{\pi(S_{t+1}) \,|\, U_{t+1} > \beta, S_t = S} & \leq \pi(S) - \frac{\beta Q(S,S^c)}{1 - \beta}.
	\end{aligned}
	\end{equation*}
\end{lem}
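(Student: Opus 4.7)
The plan is to derive both inequalities from the martingale identity \(\Ex{\pi(S_{t+1}) \mid S_t = S} = \pi(S)\) together with a direct computation of what happens on the event \(\{U_{t+1} \leq \beta\}\). First I would verify the martingale property: stationarity of \(\pi\) gives \(Q(S,y) = \sum_{x \in S} \pi_x \P_{x,y} \leq \sum_{x} \pi_x \P_{x,y} = \pi_y\), so \(Q(S,y)/\pi_y \in [0,1]\) and therefore \(\Pro{y \in S_{t+1} \mid S_t = S} = Q(S,y)/\pi_y\); summing \(\pi_y\) over \(y \in \Omega\) yields \(\Ex{\pi(S_{t+1}) \mid S_t = S} = \sum_y Q(S,y) = \pi(S)\).

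Next, I would use the definition of \(\beta\) to observe that for every \(y \in S\), \(Q(S,y) \geq \pi_y \P_{y,y} \geq \beta \pi_y\), so the event \(\{U_{t+1} \leq \beta\}\) forces \(S \subseteq S_{t+1}\) deterministically. Hence, conditional on \(\{U_{t+1} \leq \beta,\, S_t = S\}\),
\[
\pi(S_{t+1}) \;=\; \pi(S) + \sum_{y \in S^c} \pi_y\, \mathbf{1}[y \in S_{t+1}].
\]
Conditioned on \(U_{t+1} \leq \beta\), the variable \(U_{t+1}\) is uniform on \([0,\beta]\), so for \(y \in S^c\) the conditional probability that \(y \in S_{t+1}\) equals \(\min(Q(S,y)/\pi_y,\,\beta)/\beta\). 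A short case check shows that this quantity is always at least \(Q(S,y)/\pi_y\): when \(Q(S,y)/\pi_y \geq \beta\) the ratio is \(1 \geq Q(S,y)/\pi_y\) (using step one), and when \(Q(S,y)/\pi_y < \beta\) the ratio is \((Q(S,y)/\pi_y)/\beta \geq Q(S,y)/\pi_y\) since \(\beta \leq 1\). Multiplying by \(\pi_y\) and summing over \(y \in S^c\) produces \(\sum_{y \in S^c} Q(S,y) = Q(S,S^c)\), which gives the first inequality.

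Finally, the second inequality is a one-line consequence. Write \(A\) and \(B\) for the two conditional expectations in the statement. The tower property together with the martingale identity of step one gives
\[
\beta\, A + (1-\beta)\, B \;=\; \pi(S),
\]
so \(B = (\pi(S) - \beta A)/(1-\beta)\); substituting the bound \(A \geq \pi(S) + Q(S,S^c)\) yields exactly \(B \leq \pi(S) - \beta Q(S,S^c)/(1-\beta)\).

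The bulk of the reasoning sits in step two's case analysis for the \([0,\beta]\)-uniform conditional probability, but this is routine. The one subtlety worth flagging is that the martingale property in step one relies only on stationarity of \(\pi\) rather than reversibility, which is precisely what lets us handle the non-reversible setting required for the load balancing application and makes the symmetric total-expectation argument in step three work.
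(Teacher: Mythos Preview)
Your proof is correct and follows essentially the same route as the paper's: both establish the first inequality by observing that $S \subseteq S_{t+1}$ on $\{U_{t+1}\le\beta\}$ and then doing the two-case comparison for $y\in S^c$, and both obtain the second inequality by combining the first with the martingale identity $\beta A+(1-\beta)B=\pi(S)$. The only cosmetic difference is that you verify the martingale property inline whereas the paper cites it as a separate lemma.
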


We list a few auxiliary results from \cite{levin2009markov} about the evolving set process that will be used to prove the result.

\begin{lem}[{\cite[Lemma 17.12]{levin2009markov}}]\label{lem:ptoevolp}
	If $(S_t)_{t \geq 0}$ is the evolving-set process associated to the transition matrix $\P$, then for any time $t$ and $x,y \in \Omega$
	\begin{equation*}
		\P^t_{x,y} = \frac{\pi_y}{\pi_x}\Psub{\{x\}}{y \in S_t}.
	\end{equation*}
\end{lem}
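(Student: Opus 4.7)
The plan is a straightforward induction on $t$, using only the Chapman--Kolmogorov identity and the one-step property of the evolving-set process recorded immediately after Definition~\ref{def:evolset}, namely that conditional on $S_t$,
\begin{equation*}
\Pro{y \in S_{t+1} \mid S_t} = \frac{Q(S_t, y)}{\pi_y}.
\end{equation*}
For the base case $t=0$, the process starts at $S_0 = \{x\}$, so $\Psub{\{x\}}{y \in S_0} = \mathbf{1}[x=y]$, which matches $\P^0_{x,y} = \mathbf{1}[x=y]$ because the prefactor $\pi_y/\pi_x$ equals $1$ whenever $x=y$.

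For the inductive step I would write $\P^{t+1}_{x,y} = \sum_{z \in \Omega} \P^t_{x,z}\, \P_{z,y}$, substitute the inductive hypothesis $\P^t_{x,z} = (\pi_z/\pi_x)\Psub{\{x\}}{z \in S_t}$, and then exchange the finite sum with the expectation:
\begin{equation*}
\P^{t+1}_{x,y} = \frac{1}{\pi_x}\sum_{z \in \Omega} \pi_z \P_{z,y}\, \Psub{\{x\}}{z \in S_t} = \frac{1}{\pi_x}\sum_{z \in \Omega} Q_{z,y}\, \Psub{\{x\}}{z \in S_t} = \frac{1}{\pi_x}\, \Esub{\{x\}}{Q(S_t, y)}.
\end{equation*}
By the tower property and the one-step rule above, $\Esub{\{x\}}{Q(S_t, y)} = \pi_y\, \Psub{\{x\}}{y \in S_{t+1}}$, and the identity for time $t+1$ drops out as $(\pi_y/\pi_x)\Psub{\{x\}}{y \in S_{t+1}}$, closing the induction.

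I do not expect a genuine obstacle here; the state space $\Omega$ is finite, so linearity of expectation immediately justifies swapping the sum with $\Esub{\{x\}}{\cdot}$, and the remaining manipulations are bookkeeping. The only sanity check worth doing is that $Q(S_t, y)/\pi_y \leq 1$ for every realizable $S_t$, so that the single-$U$ construction in Definition~\ref{def:evolset} is well-posed and the marginal inclusion probabilities are jointly consistent across different $y$; this is immediate since $Q(S_t, y) = \sum_{z \in S_t} \pi_z \P_{z,y} \leq \sum_{z \in \Omega} \pi_z \P_{z,y} = \pi_y$ by the stationarity of $\pi$.
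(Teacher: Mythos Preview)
Your induction argument is correct and is precisely the standard proof of this identity; the paper itself does not supply a proof of Lemma~\ref{lem:ptoevolp} but merely quotes it from \cite[Lemma~17.12]{levin2009markov}, where the same induction via $\Pro{y\in S_{t+1}\mid S_t}=Q(S_t,y)/\pi_y$ appears. There is nothing to add.
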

Recall that $(S_t)$ is the evolving-set process based on the Markov chain whose transition matrix is $\P$. $\Psub{\{x\}}{y \in S_t}$ means the probability of the event $y \in S_t$ with the initial state of the evolving set being $\{ x \}$.

\begin{lem}[{\cite[Lemma 17.13]{levin2009markov}}]\label{lem:pistmartingale}
	The sequence $\{\pi(S_t)\}$ is a martingale.
\end{lem}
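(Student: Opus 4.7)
The plan is to verify the defining martingale identity $\Ex{\pi(S_{t+1}) \mid S_0,\ldots,S_t} = \pi(S_t)$ directly from the construction in Definition~\ref{def:evolset}, using only the stochasticity of $\mathbf{P}$ and the form of the edge measure $Q_{x,y} = \pi_x \mathbf{P}_{x,y}$. The key observation to exploit is that although the events $\{y \in S_{t+1}\}$ are highly correlated across different $y$ (they are all determined by the single uniform $U_{t+1}$), the marginal probability of each such event has a form that cancels the weight $\pi_y$ appearing in $\pi(S_{t+1})$.

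Concretely, I would start by writing
\begin{equation*}
\pi(S_{t+1}) = \sum_{y \in \Omega} \pi_y \cdot \mathbf{1}_{\{y \in S_{t+1}\}},
\end{equation*}
and then apply linearity of expectation (which does not require independence of the indicators). By Definition~\ref{def:evolset}, $y \in S_{t+1}$ iff $U_{t+1} \leq Q(S_t,y)/\pi_y$, so the marginal inclusion probability is exactly $\Pro{y \in S_{t+1} \mid S_t} = Q(S_t, y)/\pi_y$, as already noted in the excerpt. Substituting and unfolding the definition of $Q$ gives
\begin{align*}
\Ex{\pi(S_{t+1}) \mid S_t}
&= \sum_{y \in \Omega} \pi_y \cdot \frac{Q(S_t, y)}{\pi_y}
= \sum_{y \in \Omega} \sum_{x \in S_t} \pi_x \mathbf{P}_{x,y} \\
&= \sum_{x \in S_t} \pi_x \sum_{y \in \Omega} \mathbf{P}_{x,y}
= \sum_{x \in S_t} \pi_x
= \pi(S_t),
\end{align*}
where the swap of sums is valid since $\Omega$ is finite (or all summands are non-negative) and the penultimate equality uses that $\mathbf{P}$ is row-stochastic. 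Since the Markov property of $(S_t)$ implies that conditioning on the full history $S_0,\ldots,S_t$ gives the same value, this is the martingale identity.

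There is essentially no obstacle here: the proof is a one-line calculation once one recognises the $\pi_y$--cancellation. The only mild conceptual point worth emphasising is that we are working with the natural filtration $\mathcal{F}_t = \sigma(U_1,\ldots,U_t)$ (equivalently $\sigma(S_0,\ldots,S_t)$), and integrability of $\pi(S_t) \in [0,1]$ is immediate because $\pi$ is a probability distribution on the finite state space $\Omega$. So $\{\pi(S_t)\}_{t \geq 0}$ is a bona fide bounded martingale adapted to $(\mathcal{F}_t)$.
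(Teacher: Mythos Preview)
Your argument is correct and is precisely the standard proof of this fact. Note that the paper does not actually supply its own proof of this lemma: it is stated with a citation to \cite[Lemma~17.13]{levin2009markov} and used as a black box, so there is nothing to compare against beyond observing that your computation matches the textbook derivation.
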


\begin{thm}[{\cite[Corollary 17.7]{levin2009markov}}]\label{lem:ost}
	Let $(M_t)$ be a martingale and $\tau$ a stopping time. If $\Pro{\tau < \infty}$ and $|M_{t \land \tau}| \leq K$ for all $t$ and some constant $K$ where $t \land \tau := \min\{t, \tau \}$, then $\Ex{M_{\tau}} = \Ex{M_0}$.
\end{thm}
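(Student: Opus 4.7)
The plan is to prove this classical optional stopping result in two steps: first, show that the \emph{stopped} process $(M_{t\land\tau})_{t\ge 0}$ is itself a martingale, which yields $\Ex{M_{t\land\tau}} = \Ex{M_0}$ for every fixed $t$; then let $t\to\infty$ and use the uniform bound $|M_{t\land\tau}|\le K$ together with $\Pro{\tau<\infty}=1$ to exchange the limit and the expectation.

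For the first step, I would write the one-step increment of the stopped process as $M_{(t+1)\land\tau}-M_{t\land\tau}=\mathbbm{1}_{\{\tau>t\}}(M_{t+1}-M_t)$, observe that $\{\tau>t\}$ is measurable with respect to the filtration at time $t$ because $\tau$ is a stopping time, and then take conditional expectations to see that this increment has zero conditional mean. Iterating over $t$ (equivalently, invoking the optional stopping theorem for the bounded stopping time $t\land\tau$) gives $\Ex{M_{t\land\tau}}=\Ex{M_0}$ for every finite $t$.

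For the second step, the almost-sure finiteness of $\tau$ implies that $t\land\tau\to\tau$ and hence $M_{t\land\tau}\to M_\tau$ almost surely as $t\to\infty$. The uniform bound $|M_{t\land\tau}|\le K$ then allows the bounded convergence theorem to be applied directly, giving
\begin{equation*}
  \Ex{M_\tau} \;=\; \lim_{t\to\infty} \Ex{M_{t\land\tau}} \;=\; \Ex{M_0}.
\end{equation*}

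The main obstacle is precisely the exchange of limit and expectation in the second step: without the uniform bound $K$, one would need a subtler integrability argument such as uniform integrability of the stopped process, but the hypothesis hands us exactly what is needed for bounded convergence. The first (martingale) step is essentially algebraic and follows from the stopping-time property without any integrability concerns beyond those already built into the definition of a martingale.
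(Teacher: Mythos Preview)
Your proof is correct and is the standard argument for this version of the optional stopping theorem. Note, however, that the paper does not actually give its own proof of this statement: it is simply quoted as \cite[Corollary~17.7]{levin2009markov} and used as a black box (in the proof of Lemma~\ref{lem:partone}). Your two-step argument---showing that $(M_{t\land\tau})$ is a martingale via $M_{(t+1)\land\tau}-M_{t\land\tau}=\mathbbm{1}_{\{\tau>t\}}(M_{t+1}-M_t)$, and then passing to the limit by bounded convergence---is exactly the proof one finds in the cited reference, so there is nothing to compare.
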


\begin{proof}[Proof of Lemma \ref{lem:generalisedexpi}]
	Given $U_{t+1} \leq \beta$, the distribution of $U_{t+1}$ is uniform on $[0, \beta]$. 
	
	\textbf{Case 1: }For $y \notin S$, we know that for $y$ satisfying $\frac{Q(S,y)}{\pi_y} \in [0, \beta]$
	
	\begin{equation*}
		\Pro{\frac{Q(S,y)}{\pi_y} \geq U_{t+1} \,\Big|\, U_{t+1} \leq \beta, S_t = S} = \frac{Q(S,y)}{\beta\pi_y}
	\end{equation*}
	
	and for $y$ satisfying $\frac{Q(S,y)}{\pi_y} \in (\beta,1]$,
	
	\begin{equation*}
		\Pro{\frac{Q(S,y)}{\pi_y} \geq U_{t+1} \,\Big|\, U_{t+1} \leq \beta, S_t = S} = 1.
	\end{equation*}
	
	We know that
	
	\begin{equation*}
		\frac{Q(S,y)}{\pi_y} = \frac{\sum_{x \in S}\pi_x\P_{x,y}}{\pi_y} \leq \frac{\sum_{x \in \Omega}\pi_x\P_{x,y}}{\pi_y} = 1.
	\end{equation*}

	Since $y \in S_{t+1}$ if and only if $U_{t+1} \leq Q(S_t,y)/\pi_x$, we therefore can combine the above results by using an inequality and conclude that
	\begin{equation*}
		\Pro{y \in S_{t+1} \,|\, U_{t+1} \leq \beta, S_t = S } \geq \frac{Q(S,y)}{\pi_y} \text{ for $y \notin S$}
	\end{equation*}
	because $\beta \leq 1$ and $Q(S,y)/\pi_y \leq 1$.
	
	\textbf{Case 2:} For $y \in S$, we have $Q(S,y)/\pi_y \geq Q(y,y)/\pi_y \geq \beta$, it follows that when $U_{t+1} \leq \beta$
	\begin{equation*}
		\Pro{y \in S_{t+1} \,|\, U_{t+1} \leq \beta, S_t = S } = 1
		\text{  for $y \in S$}.
	\end{equation*}
	
	We have
	
	\begin{equation*}
		\begin{aligned}
			& \Ex{\pi(S_{t+1}) \,|\, U_{t+1} \leq \beta, S_t = S}  \\
			& = \Ex{\sum_{y \in \Omega} \mathbbm{1}_{\{y \in S_{t+1}\} } \pi_y \,\Big|\, U_{t+1} \leq \beta, S_t = S } \\
			& = \sum_{y \in S}\pi_y\Pro{y \in S_{t+1} \,|\, U_{t+1} \leq \beta, S_t = S}  + \sum_{y \notin S}\pi_y\Pro{y \in S_{t+1} \,|\, U_{t+1} \leq \beta, S_t = S}.
		\end{aligned}
	\end{equation*}

	Based on previous results, we can see that
	
	\begin{equation*}
		\Ex{\pi(S_{t+1}) \,|\, U_{t+1} \leq \beta, S_t = S} \geq \pi(S) + Q(S, S^c).
	\end{equation*}
	
	By Lemma \ref{lem:pistmartingale} and the formulas above,
	
	\begin{equation*}
		\begin{aligned}
			\pi(S) & = \Ex{\pi(S_{t+1}) \,|\, S_t = S} \\
			& = \beta \cdot \Ex{\pi(S_{t+1}) \,|\, U_{t+1} \leq \beta, S_t = S}  + (1 - \beta) \cdot\Ex{\pi(S_{t+1}) \,|\, U_{t+1} > \beta, S_t = S}.
		\end{aligned}
	\end{equation*}
	
	Rearranging shows that
	
	\begin{equation*}
		\Ex{\pi(S_{t+1}) \,|\, U_{t+1} > \beta, S_t = S} \leq \pi(S) - \frac{\beta Q(S,S^c)}{1 - \beta}.
	\end{equation*}
	
\end{proof}


The derivation of the next lemma closely follows the analysis in~\cite[Chapter 17]{levin2009markov}. For the sake of completeness, a proof can be found in the appendix.
\begin{restatable}{lem}{partone}\label{lem:partone}
For any two states $x,y$,
$
\left| \P^t_{x,y} - \pi_y \right| \leq \frac{\pi_y}{\pi_x} \cdot \Psub{\{x\}}{\tau > t}.
$
\end{restatable}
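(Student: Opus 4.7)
The plan is to combine the key identity from Lemma~\ref{lem:ptoevolp}, which writes the $t$-step transition probability as $\P^t_{x,y} = \frac{\pi_y}{\pi_x}\Psub{\{x\}}{y \in S_t}$, with the observation that the absorbing states of the evolving set process are exactly $\varnothing$ and $\Omega$. I will let $\tau := \min\{t \geq 0 : S_t \in \{\varnothing, \Omega\}\}$ (so that $\tau$ agrees with the stopping time $T_1$ from Proposition~\ref{prop:bound} applied to $M_t = \pi(S_t)$) and then split the event $\{y \in S_t\}$ into the two pieces $\{y \in S_t, \tau \leq t\}$ and $\{y \in S_t, \tau > t\}$.

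First I would identify the contribution of the ``absorbed'' piece. Since $\varnothing$ and $\Omega$ are absorbing, on the event $\{\tau \leq t\}$ we have $S_t = S_\tau$, and so $\{y \in S_t, \tau \leq t\} = \{S_\tau = \Omega, \tau \leq t\}$, independently of which $y$ we picked. Next I would compute $\Psub{\{x\}}{S_\tau = \Omega}$ by applying the optional stopping theorem (Theorem~\ref{lem:ost}) to the bounded martingale $\{\pi(S_t)\}$ (Lemma~\ref{lem:pistmartingale}): since $\pi(S_\tau) \in \{0,1\}$, we get $\Psub{\{x\}}{S_\tau = \Omega} = \Esub{\{x\}}{\pi(S_\tau)} = \pi(\{x\}) = \pi_x$. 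In particular, writing $\pi_y = \tfrac{\pi_y}{\pi_x} \Psub{\{x\}}{S_\tau = \Omega}$, we obtain the clean decomposition
\begin{equation*}
\P^t_{x,y} - \pi_y \;=\; \frac{\pi_y}{\pi_x}\Bigl[\Psub{\{x\}}{y \in S_t,\, \tau > t} \;-\; \Psub{\{x\}}{S_\tau = \Omega,\, \tau > t}\Bigr].
\end{equation*}

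Finally, since both terms inside the brackets are non-negative and each is bounded above by $\Psub{\{x\}}{\tau > t}$, the triangle inequality immediately yields
\begin{equation*}
\left|\P^t_{x,y} - \pi_y\right| \;\leq\; \frac{\pi_y}{\pi_x}\cdot \Psub{\{x\}}{\tau > t},
\end{equation*}
which is the desired bound. The only subtlety worth flagging is the verification that the optional stopping theorem applies, i.e.\ that $\Pro{\tau < \infty} = 1$ and the martingale is uniformly bounded; the latter is automatic since $\pi(S_t) \in [0,1]$, and the former follows from irreducibility together with the positive lower bound $\beta > 0$ on the loop probabilities, which via Lemma~\ref{lem:generalisedexpi} ensures a strictly positive drift toward the absorbing set at every step where $S_t$ is non-trivial. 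I do not expect any real obstacle beyond being careful with this finiteness check and with keeping track of the direction of the inequality when dropping the difference of two nonnegative quantities.
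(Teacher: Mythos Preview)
Your proposal is correct and follows essentially the same route as the paper: both arguments use Lemma~\ref{lem:ptoevolp}, apply optional stopping to the bounded martingale $\pi(S_t)$ to identify $\pi_x = \Psub{\{x\}}{S_\tau = \Omega}$, split according to $\{\tau \leq t\}$ versus $\{\tau > t\}$, and bound the resulting difference of two sub-events of $\{\tau > t\}$ by $\Psub{\{x\}}{\tau > t}$. Your added remark on verifying the hypotheses of the optional stopping theorem is a welcome bit of care that the paper glosses over.
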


\begin{proof}
	First of all, let the hitting time
	\begin{equation*}
		\tau = \min\{t \geq 0 : S_t \in \{ \varnothing, \Omega\}  \}.
	\end{equation*}
	
	We have $S_{\tau} \in \{\varnothing, \Omega\}$ and $\pi(S_{\tau}) = \mathbbm{1}_{ \{S_{\tau} = \Omega \}}$. 
	We consider an evolving set process with $S_0 = \{x\}$. By Theorem \ref{thm:ost} and Lemma \ref{lem:pistmartingale},
	
	\begin{equation}\label{equ:pixequalspx}
	\begin{aligned}
	\pi_x & = \Esub{\{x\}}{\pi(S_0)} = \Esub{\{x\}}{\pi(S_{\tau})} = \Esub{\{x\}}{\mathbbm{1}_{ \{S_{\tau} = \Omega \}}} = \Psub{\{x\}}{S_{\tau} = \Omega} = \Psub{\{x\}}{x \in S_{\tau}} \\
	\end{aligned}
	\end{equation}
	
	For the last equality, it is true because we know that $S_{\tau}$ can only be $\varnothing$ or $\Omega$. Hence, the probability that $x$ is an element in $S_\tau$ is equal to the probability that $S_\tau$ is $\Omega$. Note that here the second $x$ in the last line can be any other element in $\Omega$. For example, we also know that 
	
	\begin{equation}\label{equ:Stau}
	\forall y \in \Omega, \Psub{\{x\}}{S_\tau = \Omega} = \Psub{\{x\}}{y \in S_\tau}.
	\end{equation}
	
	For our bound, we know that by Lemma \ref{lem:ptoevolp} and (\ref{equ:pixequalspx}),
	
	\begin{equation*}
		\begin{aligned}
			\left| \P^t(x,y) - \pi_y\right| & = \frac{\pi_y}{\pi_x}\left|\Psub{\{x\}}{y \in S_t} - \pi_x \right| = \frac{\pi_y}{\pi_x}\left|\Psub{\{x\}}{y \in S_t} - \Psub{\{x\}}{S_\tau = \Omega} \right|.
		\end{aligned}
	\end{equation*}
	
	By (\ref{equ:Stau}),
	\begin{equation*}
	\begin{aligned}
	\Psub{\{x\}}{y \in S_t} & = \Psub{\{x\}}{y\in S_t, \tau > t} + \Psub{\{x\}}{y\in S_t, \tau \leq t} \\
	& = \Psub{\{x\}}{y\in S_t, \tau > t} + \Psub{\{x\}}{S_\tau = \Omega, \tau \leq t}.
	\end{aligned}
	\end{equation*}
	
	By simple substitution we obtain
	
	\begin{equation*}
		\begin{aligned}
			\left| \P^t_{x,y} - \pi_y \right| & = \frac{\pi_y}{\pi_x}\left|\Psub{\{x\}}{y\in S_t, \tau > t} + \Psub{\{x\}}{S_\tau = \Omega, \tau \leq t} - \Psub{\{x\}}{S_\tau = \Omega} \right|\\
			& = \frac{\pi_y}{\pi_x}\left|\Psub{\{x\}}{y\in S_t, \tau > t} - \Psub{\{x\}}{S_\tau = \Omega, \tau > t} \right| \\
			& \leq  \frac{\pi_y}{\pi_x} \Psub{\{x\}}{\tau > t}.
		\end{aligned}
	\end{equation*}
	The last line is true because we remove all possible intersections. 
\end{proof}

	Now we want to use Proposition \ref{prop:bound} to bound $\Psub{\{x\}}{\tau > t}$. To apply it, we substitute the following parameters: $M_0$ is chosen to be $\pi(\{x\})$, $Y_t$ is $S_t$, and $T = T_1 := \min\{t \geq 0:  \pi(S_t) = 0 \ or\ \pi(S_t) \geq 1 \}$. Hence in our case, $\tau$ is the same as $T$ (or $T_1$) in the proposition. The following two lemmas elaborate on the two preconditions (i) and (ii) of Proposition~\ref{prop:bound}. 
	
	
		
	
	\begin{restatable}{lem}{parttwo}\label{lem:parttwo}
For any time $t$ and $S_0=\{x\}$,
$
	\Var_{S_t}(\pi(S_{t+1})) \geq \beta\pi_{\min}^2\alpha^2.
$
\end{restatable}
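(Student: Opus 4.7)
The plan is to lower bound the (conditional) variance $\Var_{S_t}(\pi(S_{t+1}))$ by exploiting the two-sided bounds on the conditional means already established in Lemma~\ref{lem:generalisedexpi}, together with the martingale property from Lemma~\ref{lem:pistmartingale}.

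First I would fix $S_t = S$ with $S \notin \{\varnothing, \Omega\}$ (otherwise the process is absorbed and the statement need not be checked; this matches the condition $t < T_h$ in Proposition~\ref{prop:bound}). Let $Y := \mathbbm{1}_{\{U_{t+1} \leq \beta\}}$, so $Y=1$ with probability $\beta$ and $Y=0$ with probability $1-\beta$. Set
\[
a := \Ex{\pi(S_{t+1}) \,\big|\, Y=1, S_t=S}, \qquad b := \Ex{\pi(S_{t+1}) \,\big|\, Y=0, S_t=S}.
\]
By the total-variance inequality $\Var(X) \geq \Var(\Ex{X\,|\,Y})$, it suffices to lower bound $\Var(\Ex{\pi(S_{t+1})\,|\,Y,S_t=S})$. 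A short calculation using the martingale identity $\pi(S) = \beta a + (1-\beta) b$ gives
\[
\Var\big(\Ex{\pi(S_{t+1}) \,|\, Y, S_t=S}\big) \;=\; \beta(1-\beta)(a-b)^2.
\]

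Next I would use Lemma~\ref{lem:generalisedexpi}: $a \geq \pi(S) + Q(S,S^c)$ and $b \leq \pi(S) - \beta Q(S,S^c)/(1-\beta)$. Subtracting, $a - b \geq Q(S,S^c)/(1-\beta)$, so
\[
\Var_{S_t}\big(\pi(S_{t+1})\big) \;\geq\; \frac{\beta \cdot Q(S,S^c)^2}{1-\beta} \;\geq\; \beta \cdot Q(S,S^c)^2.
\]
Finally I would bound $Q(S,S^c)$ from below. Since $S$ and $S^c$ are both nonempty, pick any $x \in S$ and any $y \in S^c$. Then $x \neq y$, so by the definition $\alpha = \min_{u \neq v} \P_{u,v}$ we have $\P_{x,y} \geq \alpha$, giving
\[
Q(S, S^c) \;\geq\; \pi_x \P_{x,y} \;\geq\; \pi_{\min} \cdot \alpha,
\]
and combining with the previous inequality yields the desired $\Var_{S_t}(\pi(S_{t+1})) \geq \beta \pi_{\min}^2 \alpha^2$.

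There is no real obstacle here: the crucial inputs (Lemma~\ref{lem:generalisedexpi} for the conditional expectation gap, and the martingale property for aligning the two conditional means with $\pi(S)$) are already in place. The only place one has to be slightly careful is the bookkeeping that converts the two one-sided expectation bounds into a two-sided variance bound via the factor $\beta(1-\beta)/(1-\beta)^2 = \beta/(1-\beta)$; dropping the $1/(1-\beta)$ factor gives a clean statement that matches the form used in the subsequent application of Proposition~\ref{prop:bound}.
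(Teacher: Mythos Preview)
Your proposal is correct and follows essentially the same route as the paper: lower-bound the conditional variance by $\Var(\Ex{\pi(S_{t+1})\mid \mathbbm{1}_{\{U_{t+1}\le\beta\}}}) = \beta(1-\beta)(a-b)^2$, use Lemma~\ref{lem:generalisedexpi} to get $a-b \ge Q(S,S^c)/(1-\beta)$, bound $Q(S,S^c)\ge \pi_{\min}\alpha$, and drop the $1/(1-\beta)$ factor. The only cosmetic difference is that the paper computes $\beta(1-\beta)(a-b)^2$ directly from $\Ex{X^2}-\Ex{X}^2$ rather than invoking the martingale identity, which is in fact not needed for that step.
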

\begin{proof}
	Conditioning always reduces variance and $S_t \neq \varnothing $ or $ \Omega$, we have
	\begin{equation*}
		\Var_{S_t}\big(\pi(S_{t+1})\big) \geq \Var_{S_t}\Big(\Ex{\pi(S_{t+1}) \,|\, \mathbbm{1}_{\{U_{t+1} \leq \beta \}}}\Big).
	\end{equation*}	
	For $S_t = S$,
	\begin{equation*}
		\Esub{S_t}{\pi(S_{t+1}) \,|\, \mathbbm{1}_{\{U_{t+1} \leq \beta \}}} =
		\begin{cases}
			\Ex{\pi(S_{t+1}) \,|\, U_{t+1} \leq \beta, S_t = S} , & \text{w.p. } \beta, \\
			\Ex{\pi(S_{t+1}) \,|\, U_{t+1} > \beta, S_t = S}, & \text{w.p. } 1 - \beta
		\end{cases}
	\end{equation*}
	and by Lemma \ref{lem:generalisedexpi}, we know that
	
	\begin{equation*}
		\Esub{S_t}{\pi(S_{t+1}) \,|\, \mathbbm{1}_{\{U_{t+1} \leq \beta \}}}
		\begin{cases}
			\geq \pi(S) + Q(S, S^c) , & \text{w.p. } \beta, \\
			\leq \pi(S) - \frac{\beta Q(S,S^c)}{1 - \beta} , & \text{w.p. } 1 - \beta.
		\end{cases}
	\end{equation*}

	For simplicity, we let $\Esub{S_t}{\pi(S_{t+1}) \,|\, \mathbbm{1}_{\{U_t \leq \beta \}}}$ be $X$, $\Ex{\pi(S_{t+1}) \,|\, U_{t+1} \leq \beta, S_t = S}$ be $x_1$ and $\Ex{\pi(S_{t+1} \,|\, U_{t+1} > \beta, S_t = S)}$ be $x_2$. Then we have
	
	\begin{equation*}
		\begin{aligned}
			\Var_{S_t}& \Big(\Ex{\pi(S_{t+1}) \,|\, \mathbbm{1}_{\{U_t \leq \beta \}}}\Big) = \Ex{X^2} - \Ex{X}^2 \\
			& = \beta x_1^2 + (1 - \beta)x_2^2 - (\beta x_1 + (1-\beta)x_2)^2 = (\beta - \beta^2)(x_1 - x_2)^2
		\end{aligned}
	\end{equation*}
	
	In order to derive a lower bounds on this variance, based on Lemma \ref{lem:generalisedexpi} we let $x_1 = \pi(S) + Q(S,S^c)$ and $x_2 = \pi(S) - (\beta/1 - \beta)Q(S,S^c)$. With this we obtain
	
	\begin{equation*}
		\begin{aligned}
			\Var_{S_t} & \Big(\Ex{\pi(S_{t+1}) \,|\, \mathbbm{1}_{\{U_t \leq \beta \}}}\Big) \geq \frac{\beta}{1 - \beta}Q^2(S,S^c).
		\end{aligned}
	\end{equation*}

	Therefore, provided $S_t \notin \{\varnothing, \Omega \}$, we have
	
	\begin{equation*}
		\Var_{S_t}\Big( \pi(S_{t+1}) \Big) \geq \frac{\beta}{1 - \beta}Q^2(S, S^c) \geq \frac{\beta \pi_{\min}^2 \alpha^2}{(1 - \beta)}.
	\end{equation*}
	
	The last inequality follows from the fact that if $S \notin \{\varnothing, \Omega \}$ then there exist $u \in S, v \notin S$ with $\P_{u,v} > 0$, whence 
	\begin{equation*}
		Q(S, S^c) = \sum_{\substack{s \in S \\ w \in S^c}}\pi_s \P_{s,w} \geq \pi_u \P_{u,v} \geq \pi_{\min}\alpha.
	\end{equation*}
	
	Since $1 - \beta < 1$, we finally obtain
	\begin{equation*}
		\Var_{S_t}(\pi(S_{t+1})) \geq \beta\pi_{\min}^2\alpha^2. 
	\end{equation*}
\end{proof}

		Finally, we derive an upper bound on the amount by which $S_t$ can increase in one iteration.
	
		\begin{restatable}{lem}{partthree}\label{lem:partthree}
For any time $t$ and  $S_0=\{x\}$,
$
	\pi(S_{t+1}) \leq  \left(\frac{1-\beta}{\alpha} + 1\right)\frac{\pi_{\max}}{\pi_{\min}} \cdot \pi(S_{t}).
$
\end{restatable}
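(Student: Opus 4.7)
The plan is to deduce the $\pi$-measure inequality from a purely cardinality-based estimate on the evolving set. Splitting $S_{t+1} = (S_{t+1}\cap S_t)\cup(S_{t+1}\setminus S_t)$, the intersection contributes at most $|S_t|$ elements, so it suffices to bound the number of newly absorbed states $|S_{t+1}\setminus S_t|$ by a constant depending only on $\alpha$ and $\beta$. Once this is done, the crude inequalities $\pi(S_{t+1}) \leq |S_{t+1}|\pi_{\max}$ and $|S_t| \leq \pi(S_t)/\pi_{\min}$ will produce the factor $\pi_{\max}/\pi_{\min}$ on the right-hand side of the claim.

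The key claim is $|S_{t+1}\setminus S_t|\leq (1-\beta)/\alpha$, which I will establish by two opposing bounds on $Q(S_t,y)$ for $y\notin S_t$. First, since every $x\in S_t$ is distinct from $y$, the definition of $\alpha$ yields $\P_{x,y}\geq \alpha$, hence
\begin{equation*}
Q(S_t,y)\;=\;\sum_{x\in S_t}\pi_x\P_{x,y}\;\geq\;\alpha\,\pi(S_t).
\end{equation*}
Second, the row sums of $\P$ equal one, so $\sum_y Q(S_t,y)=\pi(S_t)$; and $Q(S_t,y)\geq \pi_y\P_{y,y}\geq \beta\pi_y$ whenever $y\in S_t$, giving
\begin{equation*}
\sum_{y\notin S_t}Q(S_t,y)\;=\;\pi(S_t)-\sum_{y\in S_t}Q(S_t,y)\;\leq\;(1-\beta)\,\pi(S_t).
\end{equation*}
Because $S_{t+1}\setminus S_t\subseteq S_t^c$, summing the pointwise lower bound over this subset and comparing with the global upper bound yields $|S_{t+1}\setminus S_t|\cdot\alpha\,\pi(S_t)\leq (1-\beta)\,\pi(S_t)$, and dividing by $\alpha\,\pi(S_t)$ (which is positive unless $S_t=\varnothing$, a trivial case) proves the claim.

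Finally, if $S_t=\varnothing$ then both sides of the target inequality are zero, so assume $|S_t|\geq 1$. Then $|S_{t+1}|\leq |S_t|+(1-\beta)/\alpha\leq |S_t|\bigl(1+(1-\beta)/\alpha\bigr)$, and
\begin{equation*}
\pi(S_{t+1})\;\leq\;|S_{t+1}|\,\pi_{\max}\;\leq\;\left(1+\frac{1-\beta}{\alpha}\right)|S_t|\,\pi_{\max}\;\leq\;\left(1+\frac{1-\beta}{\alpha}\right)\frac{\pi_{\max}}{\pi_{\min}}\,\pi(S_t),
\end{equation*}
which matches the stated bound. The one conceptual point worth emphasising is that the argument makes no reference to the uniform draw $U_{t+1}$: the inclusion $S_{t+1}\setminus S_t\subseteq S_t^c$ is all that is needed, since the combined $Q$-bounds already furnish a purely combinatorial control on the number of newly added states. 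This is why the resulting estimate is deterministic, which is exactly what Proposition~\ref{prop:bound}(ii) will later require.
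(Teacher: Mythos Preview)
Your proof is correct, and the final conversion from cardinality to $\pi$-measure via $\pi(S_{t+1})\le |S_{t+1}|\pi_{\max}$ and $|S_t|\le \pi(S_t)/\pi_{\min}$ is exactly what the paper does. The difference lies in how the cardinality bound is obtained. The paper argues \emph{locally}: any $y\in S_{t+1}$ must satisfy $Q(S_t,y)>0$, hence lies in the out-neighbourhood of some $x\in S_t$; since each row of $\P$ has loop mass $\ge\beta$ and every nonzero off-diagonal entry is $\ge\alpha$, the out-degree of each $x$ is at most $1+(1-\beta)/\alpha$, giving $|S_{t+1}|\le\bigl(1+(1-\beta)/\alpha\bigr)|S_t|$ directly. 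You instead argue \emph{globally}: the pointwise lower bound $Q(S_t,y)\ge\alpha\,\pi(S_t)$ for every $y\notin S_t$, combined with the total mass bound $Q(S_t,S_t^c)\le(1-\beta)\pi(S_t)$, forces $|S_{t+1}\setminus S_t|\le(1-\beta)/\alpha$ independently of $|S_t|$, and then you use $|S_t|\ge 1$ to reach the same multiplicative form.

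Your route is arguably cleaner and even yields the stronger additive bound $|S_{t+1}|\le |S_t|+(1-\beta)/\alpha$. The one thing it costs you is robustness: your lower bound $Q(S_t,y)\ge\alpha\,\pi(S_t)$ needs \emph{every} off-diagonal entry to be at least $\alpha$, which is precisely the stated hypothesis $\alpha=\min_{u\neq v}\P_{u,v}>0$, whereas the paper's out-degree argument only needs the nonzero off-diagonal entries to be bounded below. Under the lemma as stated this is immaterial, but it is worth being aware of if the hypothesis is ever relaxed.
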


\begin{proof}
	Since
	\begin{equation*}
		S_{t+1} = \left\{ y \in \Omega: \frac{\sum_{x \in S_t} \pi_x \P_{x,y}}{\pi_y} \geq U \right\}.
	\end{equation*} 
	If $U$ decreases to $0$, then every $y \in S_{t+1}$ is at least connected to an $x \in S_t$. In other words, $\P_{x,y} > 0$ for $x \in S_t$ and $y \in S_{t+1}$. Hence $|S_{t+1}| \leq (\frac{1 - \beta}{\alpha} + 1)|S_t|$. 
	
	We also know that
	
	\begin{equation*}
		\begin{aligned}
			\pi(S_{t+1}) \leq |S_{t+1}| \cdot \pi_{\max} \leq \left(\frac{1 - \beta}{\alpha} + 1 \right) \cdot |S_t| \cdot \pi_{\max} \leq \left(\frac{1 - \beta}{\alpha} + 1 \right) \cdot \pi(S_t) \cdot \frac{\pi_{\max}}{\pi_{\min}}
		\end{aligned}
	\end{equation*}
\end{proof}

		The proof of Theorem~\ref{thm:markov}
		follows then by combining Proposition~\ref{prop:bound}, Lemma \ref{lem:generalisedexpi}, Lemma \ref{lem:partone}, Lemma \ref{lem:parttwo} and Lemma \ref{lem:partthree}. 
		
		\begin{proof}[Proof of Theorem~\ref{thm:markov}]
			With the help of the previous three lemmas, we can apply Proposition \ref{prop:bound} with $M_0 = \pi_x$, $\sigma \geq \beta^{1/2} \pi_x \alpha$ and $D = \left(\frac{1-\beta}{\alpha}\right) \frac{\pi_{\max}}{\pi_{\min}}$ to obtain
			
			\begin{equation*}
				\begin{aligned}
					|\P^t_{x,y} - \pi_y| & \leq \frac{\pi_y}{\pi_x}\Psub{\{x\}}{\tau > t}\\
					& \leq \frac{\pi_y}{\pi_x} \frac{2\pi_x}{\sigma}\sqrt{\frac{D}{t}} \\
					& = \frac{2\pi_y}{\beta^{1/2} \pi_{\min} \alpha}\sqrt{\frac{(\frac{1-\beta}{\alpha} + 1)\frac{\pi_{\max}}{\pi_{\min}}}{t}}\\
					& \leq \frac{\pi_{\max}^{3/2}}{\pi_{\min}^{3/2}}\cdot \frac{2}{\beta^{1/2}\alpha} \sqrt{\frac{1-\beta+\alpha}{\alpha t}} \\
				\end{aligned}
			\end{equation*}
			
		\end{proof}

\subsection{Proof of Theorem~\ref{thm:main3}}

We now prove the following discrepancy bound that depends on the $\lambda(\M)$, as defined in Section \ref{sec:def}.

\begin{proof}
By \cite[Lemma 2.4]{SS12}, for any pair of vertices $u,v \in V$,
$
\left| \M^t_{u,v} - \frac{1}{n} \right| \leq \lambda(\M)^{t/2}.
$
Hence by Lemma~\ref{lem:twonodestoavg} $\left\| \M^t_{.,u} - \M^t_{.,v} \right\|_2 = O(\lambda(\M)^{t/4})$, and the bound on the discrepancy follows from Theorem~\ref{thm:main1} and the union bound over all vertices. 
\end{proof}



%
%
%

\section{Applications to Different Graph Topologies}\label{sec:application}

\textbf{Cycles.} Recall that for the cycle, $V=\{0,\ldots,n-1\}$ is the set of vertices, and the distance between two vertices is $\dist(x,y) = \min\{y-x, x+n-y \}$ for any pair of vertices $x < y $.

The upper bound on the discrepancy follows directly from Theorem~\ref{thm:maintwo}, and it only remains to prove the lower bound. To this end, we will apply the lower bound in Theorem~\ref{thm:main1} and need to derive a lower bound on
$ \| \M_{.,u}^t - \mathbf{ \frac{1}{n}} \|_2^2$. Intuitively, if we had a simple random walk, we could immediately infer that this quantity is $\Omega( 1/\sqrt{t})$, since after $t$ steps, the random walk is with probability $\approx 1/\sqrt{t}$ at any vertex with distance at most $O(\sqrt{t})$. To prove that this also holds for the load balancing process, we first derive a concentration inequality that upper bounds the probability for the random walk to reach a distant state:

%
%
%
%

%

\begin{restatable}{lem}{upper}\label{lem:upper}
Consider the standard balancing circuit model on the cycle with round matrix $\M$. Then for any $u \in V$ and $\delta \in (0,n/2-1)$, we have
\begin{equation*}
\begin{aligned}
 \sum_{v \in V \colon \dist(u,v) \geq \delta} \M^t_{u,v}  \leq 2 \cdot \exp\left( - \frac{(\delta-2)^2}{8t}   \right).
\end{aligned}
\end{equation*}
\end{restatable}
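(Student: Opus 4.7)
The strategy is to lift the walk from the cycle $\mathbb{Z}/n\mathbb{Z}$ to $\mathbb{Z}$ and apply Azuma--Hoeffding to the Doob martingale of the signed lift displacement. Let $D^{(t)}$ denote this displacement after $t$ rounds, so the cycle position is $u + D^{(t)} \pmod n$. Each matching step changes the lift position by at most one, hence $|D^{(t)}| \le 2t$. For $\delta < n/2$, whenever $|D^{(t)}| < \delta$ we have $\dist(u, u + D^{(t)}) = |D^{(t)}| < \delta$, so $\{\dist(u,v) \ge \delta\} \subseteq \{|D^{(t)}| \ge \delta\}$ and it is enough to prove $\Pro{|D^{(t)}| \ge \delta} \le 2\exp(-(\delta-2)^2/(8t))$.

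Write $D^{(t)} = \sum_{r=1}^t R_r$, and let $\mathcal{G}_r$ denote the $\sigma$-algebra generated by the first $r$ rounds. A case analysis on the odd-then-even matching shows that, conditional on the parity of the position at the start of round $r$, $R_r$ is uniform on $\{-2,-1,0,+1\}$ if even and on $\{-1,0,+1,+2\}$ if odd, with conditional mean $\mp 1/2$ and $|R_r - \Ex{R_r \mid \mathcal{G}_{r-1}}| \le 3/2$. The pivotal observation is \emph{one-round parity mixing}: in both cases exactly two of the four ending positions are even and two are odd, so the parity at the end of any round is uniform regardless of the starting parity. This immediately gives $\Ex{R_{r'} \mid \mathcal{G}_{r-1}} = 0$ for every $r' \ge r+1$, and in particular $|\Ex{D^{(t)}}| \le 1/2$.

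Now consider the Doob martingale $N_r := \Ex{D^{(t)} \mid \mathcal{G}_r}$, so $N_0 = \Ex{D^{(t)}}$ and $N_t = D^{(t)}$. By parity mixing, $\Ex{R_{r'} \mid \mathcal{G}_r} = 0$ for $r' \ge r+2$, so the telescoping sum for $N_r - N_{r-1}$ collapses to
\[
N_r - N_{r-1} = \bigl(R_r - \Ex{R_r \mid \mathcal{G}_{r-1}}\bigr) + \Ex{R_{r+1} \mid \mathcal{G}_r},
\]
which is bounded in absolute value by $3/2 + 1/2 = 2$. Azuma--Hoeffding therefore yields $\Pro{|D^{(t)} - \Ex{D^{(t)}}| \ge a} \le 2\exp(-a^2/(8t))$, and absorbing the small mean $|\Ex{D^{(t)}}| \le 1/2$ generously into a slack of $2$ delivers the claim.

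The main technical point is establishing the one-round parity mixing and using it to argue that learning the outcome of round $r$ perturbs only the conditional expectation of the \emph{immediately next} round; rounds $r+2, r+3, \ldots$ are already parity-uniformised and contribute nothing to the Doob martingale. This locality is exactly what caps the increments at $2$ and produces the $8t$ in Azuma's denominator; everything else is a routine concentration argument.
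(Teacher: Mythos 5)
Your proof is correct and follows essentially the same route as the paper's: reduce $\M^t_{u,\cdot}$ to the parity-dependent lazy walk, use one-step parity mixing to bound the Doob-martingale increments by $2$, and apply an Azuma/McDiarmid bound, which is exactly where the $8t$ in the exponent and the $(\delta-2)$ slack come from. The only real difference is cosmetic: you aggregate per round and bound the final displacement of the lifted walk directly, whereas the paper couples the cycle walk with a walk on $\mathbb{Z}$ and therefore needs the maximal version of the inequality to keep the coupling valid; your shortcut is sound because only the endpoint distance enters the statement.
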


\begin{proof}
The proof of the lemma above makes uses of the following variant of Azuma's concentration inequality for martingales, which can be for instance found in McDiarmid's survey on concentration inequalities.

\begin{lem}[{\cite[Theorem 3.13 \& Inequality 41]{McD98}}]\label{lem:mcdiarmid}
	Let $Z_1,Z_2,\ldots,Z_n$ be a martingale difference sequence with $a_k \leq Z_k \leq b_k$ for each $k$, for suitable constants $a_k, b_k$. Then for any $\delta \geq 0$,
	\begin{equation*}
	\begin{aligned}
	\Pro{  \max_{1 \leq j \leq n} \left| \sum_{i=1}^j Z_i \right| > \delta } &\leq 2 \cdot \exp\left(- \frac{2 \delta^2}{\sum_{k=1}^n (b_k-a_k)^2}  \right).
	\end{aligned}
	\end{equation*}
\end{lem}

Note that the balancing circuit on the cycle corresponds to the following random walk $(X_1,X_2,\ldots,X_{t})$ on the vertex set $V= \{-n/2+1,\ldots,0,\ldots,n/2-1\}$, where for any time-step $t \in \mathbb{N}$, $X_t$ denotes the position of the random walk after step $t$. First, we consider the transition for any odd $s$: If $X_s$ is odd, then with probability $1/2$, $X_{s+1} = X_{s}+1$ and otherwise $X_{s+1}=X_s$. If $X_s$ is even, then with probability $1/2$, $X_{s+1}=X_s-1$ and otherwise $X_{s+1}=X_s$ (additions and subtractions are under the implicit assumptions that $-n/2+1 \equiv n/2-1$ and $n/2 \equiv -n/2+1$). The case for even $s$ is analogous. 

We will couple the random walk $(X_t)_{t \geq 0}$ with another random walk $(Y_t)_{t \geq 0}$ on the integers $\mathbb{N}$, where again $Y_t$ denotes the position of the walk after step $t$. The transition probabilities are exactly the same as for the walk $(X_t)_{t \geq 0}$, the only difference is that we don't use the equivalences $-n/2+1 \equiv n/2-1$ and $n/2 \equiv -n/2+1$. It is clear that we can couple the transitions of the two walks so that they evolve identically as long as the walks do not reach any of the two boundary points $-n/2+1$ or $n/2-1$.

Let us first analyze $\Ex{Y_t}$ for an odd time step. As described above, the distribution of $Y_t-Y_{t-1}$ depends on whether $Y_{t-1}$ is even or not. However, notice regardless of where the random walk is at step $t-2$, the random walk will be at an odd or even vertex at step $t-1$ with probability $1/2$ each. Hence for any starting position $y$,
\begin{align*}
	\lefteqn{ \Ex{ Y_{t} - Y_{t-1} \, \mid \, Y_{0} = y } } \\
	&= \Pro{ Y_{t-1} \mbox{~even} } \cdot \left( \frac{1}{2} \cdot 1 + \frac{1}{2} \cdot 0 \right) + \Pro{ Y_{t-1} \mbox{~odd} } \cdot \left( \frac{1}{2} \cdot (-1) + \frac{1}{2} \cdot 0 \right) = 0,
\end{align*}
and further,
\begin{align*}
	| Y_{1} - Y_{0} | \leq 1.
\end{align*}
Combining the last two inequalities shows that for any start vertex $y$,
\begin{align*}
	\left| \Ex{ Y_{t} \, \mid \, Y_0 = y } - y \right| \leq 1.
\end{align*}
With the same arguments as before we conclude that for any fixed start vertex $Y_0=y_0$,
\begin{align*}
	\max_{a,b \in V} \left| \Ex{ Y_{t} - Y_{t-1} \, \mid \, Y_1=a } 
	- \Ex{ Y_{t} - Y_{t-1} \, \mid \, Y_1=b }  \right| \leq 2,
\end{align*}
because the expected differences of $Y_{t} - Y_{t-1}$ are all zero whenever $t \geq 3$.


Let us now consider the martingale $W_i = \Ex{ Y_t \, \mid \, Y_0,Y_1,\ldots,Y_i }$,
and let $Z_i := W_{i} - W_{i-1}$ be the corresponding martingale difference sequence. As shown before, $|W_{i}-W_{i-1} | \leq 2$. Hence by Lemma~\ref{lem:mcdiarmid},
\begin{align*}
	\Pro{ \max_{1 \leq j \leq t} \left| \sum_{i=1}^j Z_i \right|  \geq \delta } &\leq 2 \cdot \exp\left(  -\frac{ \delta^2 }{ 8 \cdot t  } \right)
\end{align*}

If for every $1 \leq j \leq t$, $ \sum_{i=1}^j W_i  < \delta$ holds, then this implies both random walks $(X_t)_{t \geq 0}$ and $(Y_t)_{t \geq 0}$ behave identically since none of them ever reaches any of the two boundary points $-n/2+1$ or $n/2-1$. In particular we conclude that for the original walk $(X_t)_{t \geq 0}$,
\begin{align*}
	\sum_{v \in V: \dist(u,v) \geq \delta} \M^t_{u,v} &=
	\Pro{ \left| \sum_{i=1}^t X_i \right|  \geq \delta } \\ &\leq
	\Pro{ \max_{1 \leq j \leq t} \left| \sum_{i=1}^j X_t \right|  \geq \delta } \\ &=
	\Pro{ \max_{1 \leq j \leq t} \left| \sum_{i=1}^j Y_t \right|  \geq \delta } 
	\\ &\leq
	\Pro{ \max_{1 \leq j \leq t} \left| \sum_{i=1}^j Z_t \right|  \geq \delta -2 } 
	\leq 2 \cdot \exp\left(  -\frac{ (\delta-2)^2 }{ 8 \cdot t   } \right),
\end{align*}
where the second-to-last inequality is due to the fact that $\Ex{  \left| \sum_{i=1}^j Y_t \right| } \leq 2$.
\end{proof}

With the help Lemma~\ref{lem:upper}, we can indeed verify our intuition:

\begin{restatable}{lem}{intuition}\label{lem:intuition}
Consider the standard balancing circuit model on the cycle with round matrix $\M$. Then for any vertex $u \in V$,
$ \| \M_{.,u}^t - \mathbf{ \frac{1}{n}} \|_2^2 = \Omega(1/\sqrt{t})$.
\end{restatable}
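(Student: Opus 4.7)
My plan is to reduce the problem to showing that the column distribution $\M^t_{\cdot,u}$ is supported (up to constant mass) on $O(\sqrt{t})$ vertices, then apply Cauchy--Schwarz.

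First I would establish that, by the rotational symmetry of the cycle, $\|\M^t_{\cdot,u}\|_2^2 = \|\M^t_{u,\cdot}\|_2^2$. The argument is: by vertex-transitivity both quantities are independent of $u$, call them $a_t$ and $b_t$; summing $\sum_{u,w}(\M^t_{w,u})^2$ along rows gives $n b_t$ and along columns gives $n a_t$, so $a_t = b_t$. This lets me work with the row distribution $p_v := \M^t_{u,v}$, which is exactly the object that Lemma~\ref{lem:upper} controls.

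Next, I would apply Lemma~\ref{lem:upper} with $\delta = C\sqrt{t} + 2$ for a sufficiently large constant $C$ (say large enough that $2\exp(-C^2/8) \leq 1/2$), to conclude that
\begin{equation*}
\sum_{v \,:\, \dist(u,v) < \delta} p_v \geq \tfrac{1}{2}.
\end{equation*}
The ``window'' $S := \{v : \dist(u,v) < \delta\}$ has size at most $2\delta + 1 = O(\sqrt{t})$. By Cauchy--Schwarz,
\begin{equation*}
\tfrac{1}{4} \,\leq\, \Bigl(\sum_{v \in S} p_v\Bigr)^{\!2} \,\leq\, |S| \cdot \sum_{v \in S} p_v^2 \,\leq\, |S| \cdot \|\M^t_{u,\cdot}\|_2^2,
\end{equation*}
so $\|\M^t_{u,\cdot}\|_2^2 = \Omega(1/\sqrt{t})$, and by the symmetry above, $\|\M^t_{\cdot,u}\|_2^2 = \Omega(1/\sqrt{t})$. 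For small $t$ where $C\sqrt{t}+2 \geq n/2 - 1$ (outside the range of Lemma~\ref{lem:upper}), the bound $\Omega(1/\sqrt{t})$ is trivial since $\|\M^t_{\cdot,u}\|_2^2 \geq 1/n$ always (by Cauchy--Schwarz applied to the full distribution).

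Finally, since $\M^t_{\cdot,u}$ sums to $1$, a direct expansion gives
\begin{equation*}
\bigl\|\M^t_{\cdot,u} - \mathbf{\tfrac{1}{n}}\bigr\|_2^2 \,=\, \|\M^t_{\cdot,u}\|_2^2 - \tfrac{1}{n}.
\end{equation*}
The lemma is only meaningful in the regime $t = O(n^2)$ (outside it the right-hand side $1/\sqrt{t}$ is $\leq 1/n$ and absorbs into the asymptotic); in that regime, $\|\M^t_{\cdot,u}\|_2^2 \geq c/\sqrt{t} \geq 2/n$, so $\|\M^t_{\cdot,u} - \mathbf{\tfrac{1}{n}}\|_2^2 \geq \tfrac{1}{2}\|\M^t_{\cdot,u}\|_2^2 = \Omega(1/\sqrt{t})$. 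The only subtlety I anticipate is the row/column switch, which the vertex-transitivity trick handles cleanly.
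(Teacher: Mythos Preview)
Your approach is essentially the paper's own: localize the mass of $\M^t_{\cdot,u}$ to an $O(\sqrt{t})$-window via Lemma~\ref{lem:upper} and then apply Cauchy--Schwarz. You are in fact more careful than the paper in two places---the row/column switch and the final subtraction of $1/n$---both of which the paper's proof glosses over.

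One small caveat on the row/column step: the odd--even scheme on the cycle is \emph{not} fully vertex-transitive, since a shift by $1$ swaps $\M^{(1)}$ and $\M^{(2)}$ (only even shifts commute with $\M$). So your claim that $\|\M^t_{\cdot,u}\|_2$ and $\|\M^t_{u,\cdot}\|_2$ are each independent of $u$ does not hold as stated. The clean fix is to note that because each matching matrix is symmetric, $\M^{T}=\M^{(2)}\M^{(1)}=P\M P^{-1}$ for the shift-by-$1$ permutation $P$; hence $(\M^t)^{T}=P\M^t P^{-1}$ and therefore $\|\M^t_{\cdot,u}\|_2=\|\M^t_{u-1,\cdot}\|_2$. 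Since your row-norm bound via Lemma~\ref{lem:upper} and Cauchy--Schwarz holds for \emph{every} start vertex, the column bound follows for every $u$, and the rest of your argument goes through unchanged.
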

\begin{proof}

Define $S_{\delta} := \{ w \in V: \dist(w, u) \leq \delta \}$, so that $| S_\delta | = 2\delta$. With $\delta = 20\sqrt{t}$ and $t \geq 10$

\begin{align*}
\sum_{w \in S_{\delta}}\M^t_{w,u} & = 1 - \sum_{w \notin S_\delta} \M^t_{w,u} \geq 1 - 2\cdot \exp\left( -\frac{(\delta - 2)^2}{8t}\right)  \geq \frac{1}{2}.
\end{align*}
By Cauchy-Schwarz inequality,

\begin{equation*}
\begin{aligned}
 \| \M_{w,\cdot}^t \|_2^2 &\geq \sum_{w \in S_{\delta}} \left(\M^t_{w,u} \right)^2 & \geq \frac{1}{2\delta} \left( \sum_{w \in S{_\delta}}\M^t_{w,u} \right)^2 \geq \frac{1}{2\delta}\left(\frac{1}{2}\right)^2 = \Omega(t^{-1/2}).
\end{aligned}
\end{equation*}


\end{proof}

Lemma~\ref{lem:intuition} also proves that the factor $\sqrt{1/t}$ in the upper bound in Theorem~\ref{thm:markov} is best possible.
The lower bound on the discrepancy now follows by combining Lemma~\ref{lem:intuition} with Theorem~\ref{thm:main1} and Lemma~\ref{lem:twonodestoavg} stating that for any vertex $u \in V$, there exists another vertex $v \in V$ such that $ \| \M_{.,u}^t - \M_{.,v}^t \|_2^2 \geq \| \M_{.,u}^t - \mathbf{ \frac{1}{n}} \|_2^2 =\Omega(1/\sqrt{t})$.

\bigskip
\noindent \textbf{Tori.}
In this section we consider $r$-dimensional tori, where $r \geq 1$ is any constant. 
For the upper bound, note that the computation of $\M^t_{.,.}$ can be decomposed to independent computations in the $r$ dimensions, and each dimension has the same distribution as the cycle on $n^{1/r}$ vertices. Specifically, if we denote by $\widetilde{\M}$ the round matrix of the standard balancing circuit scheme on the cycle with $n^{1/r}$ vertices and $\M$ is the round matrix of the $r$-dimensional torus with $n$ vertices, then for any pair of vertices $x=(x_1,\ldots,x_r), v=(y_1,\ldots,y_r)$ on the torus we have
$
  \M_{x,y}^t = \prod_{i=1}^r \widetilde{\M}_{x_i,y_i}^t.
$
From Theorem~\ref{thm:markov}, $| \widetilde{\M}_{x_i,y_i}^t - \frac{1}{n^{1/r}} | = O(t^{-1/2})$, and therefore, since $r$ is constant,
\begin{align*}
  \M_{x,y}^t &\leq \prod_{i=1}^r \left( \frac{1}{n^{1/r}} + O(t^{-1/2}) \right)
  = O( t^{-r/2} + n^{-1}),
\end{align*}
and thus $\left\| \M_{x,y}^t - \mathbf{\frac{1}{n}} \right\|_2^2 = O(t^{-r/2})$ for any pair of vertices $x,y$. Hence by Lemma~\ref{lem:twonodestoavg}, $ \left \| \M_{.,u}^t - \M_{.,v}^t \right \|_{2}^2 = O( t^{-r/2})$. Plugging this bound into Theorem~\ref{thm:main1} yields that the load difference between any pair of the nodes $u$ and $v$ at round $t$ is at most
$ O( t^{-r/4} \cdot \sigma \cdot \log^{3/2} n  + \sqrt{\log n})$ with probability at least $1-2 n^{-2}$. The bound on the discrepancy now simply follows by the union bound.

We now turn to the lower bound on the discrepancy. With the same derivation as in Lemma~\ref{lem:intuition} we obtain the following result:
\begin{restatable}{lem}{intuitiontwo}\label{lem:intuitiontwo}
Consider the standard balancing circuit model on the $r$-dimensional torus with round matrix $\M$. Then for any vertex $u \in V$,
$ \| \M_{.,u}^t - \mathbf{ \frac{1}{n}} \|_2^2 = \Omega(t^{-r/2})$.
\end{restatable}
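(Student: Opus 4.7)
The plan is to mimic the proof of Lemma~\ref{lem:intuition}, leveraging the product structure of the $r$-dimensional torus that was already exploited in the upper bound argument. Recall that because the $2r$ matchings of the balancing circuit split into pairs acting on disjoint single coordinates, the round matrix satisfies
\[
   \M^t_{x,y} \;=\; \prod_{i=1}^{r} \widetilde{\M}^t_{x_i, y_i},
\]
where $\widetilde{\M}$ is the round matrix of the standard balancing circuit on the cycle with $n^{1/r}$ vertices and $x = (x_1,\ldots,x_r)$, $y=(y_1,\ldots,y_r)$.

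First, I would apply Lemma~\ref{lem:upper} to the cycle $\widetilde{\M}$ in each coordinate with $\delta = 20\sqrt{t}$ (assuming $t$ is in the relevant range $20\sqrt{t} < n^{1/r}/2 - 1$; for larger $t$ the lemma holds trivially since $\|\M_{.,u}^t - \mathbf{\frac{1}{n}}\|_2^2$ is comparable to $1/n$). For each $i$ this gives
\[
  \sum_{v_i \,:\, \dist(u_i,v_i) \leq 20\sqrt{t}} \widetilde{\M}^t_{u_i,v_i} \;\geq\; 1 - 2 \exp\!\left( -\tfrac{(20\sqrt{t}-2)^2}{8t} \right) \;\geq\; \tfrac{1}{2}
\]
for all sufficiently large $t$. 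Defining the box $B := \{ v \in V : \dist(u_i,v_i) \leq 20\sqrt{t}\text{ for all } i \}$, the product formula above then yields
\[
  \sum_{v \in B} \M^t_{u,v} \;=\; \prod_{i=1}^{r} \sum_{v_i \,:\, \dist(u_i,v_i)\leq 20\sqrt{t}} \widetilde{\M}^t_{u_i,v_i} \;\geq\; 2^{-r} \;=\; \Omega(1),
\]
while $|B| \leq (40\sqrt{t}+1)^r = O(t^{r/2})$ since $r$ is constant.

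Second, I apply Cauchy--Schwarz exactly as in Lemma~\ref{lem:intuition} to turn the $\ell_1$-mass on $B$ into an $\ell_2$ lower bound:
\[
  \|\M^t_{.,u}\|_2^2 \;\geq\; \sum_{v \in B} \left(\M^t_{u,v}\right)^2 \;\geq\; \frac{1}{|B|} \left( \sum_{v \in B} \M^t_{u,v} \right)^2 \;=\; \Omega(t^{-r/2}).
\]
Since $\|\M^t_{.,u} - \mathbf{\tfrac{1}{n}}\|_2^2 = \|\M^t_{.,u}\|_2^2 - \tfrac{1}{n}$ and the statement is only nontrivial in the regime where $t^{-r/2}$ dominates $1/n$, this yields the claimed bound.

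The main obstacle is simply the correct handling of independence across coordinates, but this is baked into the product structure of $\M^t$: the matchings in distinct dimensions commute and act on disjoint bits of each vertex label, so the $r$ one-dimensional events used above are genuinely independent and their probabilities multiply. Once this is noted, the rest is a direct coordinate-wise lift of the cycle argument.
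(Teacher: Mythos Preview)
Your proposal is correct and follows exactly the approach the paper intends: the paper gives no separate proof for Lemma~\ref{lem:intuitiontwo} beyond stating ``with the same derivation as in Lemma~\ref{lem:intuition},'' and you have filled that in precisely, using the product decomposition $\M^t_{x,y} = \prod_{i=1}^r \widetilde{\M}^t_{x_i,y_i}$ already established in the upper-bound argument, applying Lemma~\ref{lem:upper} coordinate-wise, and then Cauchy--Schwarz on the resulting box. The one caveat is your remark about large $t$: once $t \gg n^{2/r}$ the quantity $\|\M^t_{.,u} - \mathbf{\tfrac{1}{n}}\|_2^2$ decays exponentially rather than staying comparable to $1/n$, so the bound is really meant for the regime $t = O(n^{2/r})$ --- but the paper is equally informal about this point, so your treatment matches theirs.
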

As before, the lower bound on the torus now follows by combining Lemma~\ref{lem:intuitiontwo} with the general lower bound given in Theorem~\ref{thm:main1}. 

%

%

\bigskip

\noindent\textbf{Expanders.}
The upper bound $O(\lambda(\M)^{t/4} \cdot \sigma \cdot (\log n)^{3/2} + \sqrt{\log n})$ for expanders follows immediately from Theorem~\ref{thm:main3}.
%
%
For the lower bound, since the round matrix consists of $d$ matchings, it is easy to verify that whenever $\M_{u,v}^{t} > 0$, we have $\M_{u,v}^t \geq 2^{-d \cdot t}$. Consequently, for any vertex $u \in V$, $\left\| \M_{.,u}^t - \mathbf{\frac{1}{n}} \right\|_2^2
=\Omega( 2^{-d \cdot t})$. Plugging this into Theorem~\ref{thm:main1} yields a lower bound on the discrepancy which is $\Omega( 2^{-d \cdot t/2} \cdot \sigma / \sqrt{\log \sigma})$.

\bigskip
\noindent \textbf{Hypercubes.}
For the hypercube, there is a worst-case bound of $\log_2 \log_2 n + O(1)$ \cite[Theorem 5.1 $\&$ 5.3]{mavronicolas2010impact} for any input after $\log_2 n$ iterations of the dimension-exchange, i.e., after one execution of the round matrix. Hence, we will only analyze the discrepancy after $s$ matchings, where $1 \leq s < \log_2 n$.

The derivation of the lower bound is almost analogous to the one for expanders, since for any pair of vertices $u,v$, $\prod_{i=s}^t \M_{u,v}^{(s)} \in \{0,2^{-t}\}$ (recall that $\M_{.,.}^{(s)}$ is the matching applied in the $s$-step of the dimension exchange). The only difference is that we are counting matchings individually and not full periods. By applying the same analysis as in Theorem~\ref{thm:main3}, but with the stronger inequality $| \prod_{s=1}^t \M_{u,v}^{(s)} - \frac{1}{n} | \leq 2^{-t}$, and we obtain that the upper bound of the discrepancy is $O(2^{-t/2} \cdot \sigma \cdot (\log n)^{3/2} + \sqrt{\log n})$. Applying Theorem~\ref{thm:main1}, we obtain the lower bound $\Omega( 2^{-t/2} \cdot \sigma / \sqrt{\log \sigma} )$.

\section{Discussion and Empirical Results}\label{sec:discussion}

\subsection{Average-Case versus Worst-Case}

We will now compare our average-case to a worst-case scenario on cycles, 2D-tori and hypercubes. For the sake of concreteness, we always assume that the input is drawn from the uniform distribution $\mathsf{Uni}[0,2K]$, where $K$ will be specified later. Note that the total number of tokens is $\approx n \cdot K$, and the initial discrepancy will be $\Theta(K)$. Our choice for the worst-case load vector will have the same number of tokens and initial discrepancy, however, the exact definition of the vector as well as the choice of the parameter $K$ will depend on the underlying topology. 

\textbf{Cycles.}
As one representative of a worst-case setting, fix an arbitrary node $u \in V$ and let all nodes with distance at most $n/4$ initially have a load of $2K$ while all other nodes have load $0$. This gives rise to a load vector with $n \cdot K$ tokens and initial discrepancy $2K$.

\textbf{2D-Tori.} Again, we fix an arbitrary node $u \in V$ and assign a load of $2K$ to the $n/2$-nearest neighbors of $u$ and load $0$ to the other nodes. Again, this defines a load vector with $n \cdot K$ tokens and initial discrepancy $2K$.

The next result provides a lower bound on the discrepancy for cycles and 2D-tori in the aforementioned worst-case setting. It essentially shows that for worst-case inputs, $\Omega(n^2)$ rounds and $\Omega(n)$ rounds are necessary for the cycle, 2D-tori, respectively, in order to reduce the discrepancy by more than a constant factor. This stands in sharp contrast to Theorem~\ref{thm:maintwo}, proving a decay of the discrepancy by $\approx t^{-1/4}$, starting from the first round.

\begin{restatable}{proposition}{worstcase} \label{pro:worstcase}
For the aforementioned worst-case setting on the cycle, it holds for any round $t > 0$ that
$
\disc(x^{(t)}) \geq \frac{1}{8} \cdot K \cdot \left(1 - \exp \left( - \frac{n^2}{2048 t} \right) \right) - \sqrt{48 \log n},
$
with probability at least $1-n^{-1}$.
Further, for 2D-tori, it holds for any round $t > 0$ that
$
\disc(x^{(t)}) \geq \frac{1}{8} \cdot K \cdot \left(1 - \exp \left( - \frac{n}{2048 t} \right) \right) - \sqrt{48 \log n},
$
with probability at least $1-n^{-1}$.
\end{restatable}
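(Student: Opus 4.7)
The plan is to bound the continuous load balancing process $\xi^{(t)}$ first and then transfer the bound to the discrete process $x^{(t)}$ via Theorem~\ref{thm:errorbound}. Applying that theorem at two chosen vertices $v_1,v_2$ and taking a union bound, with probability at least $1 - 2n^{-2}$ we have
\[ \bigl|(x_{v_1}^{(t)}-x_{v_2}^{(t)}) - (\xi_{v_1}^{(t)}-\xi_{v_2}^{(t)})\bigr| \leq 2\sqrt{12\log n} = \sqrt{48\log n}, \]
so it suffices to lower bound $\xi_{v_1}^{(t)} - \xi_{v_2}^{(t)}$ in the continuous model for a good choice of $v_1,v_2$.

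For the cycle, let $A := \{w : \dist(u,w) \leq n/4\}$ be the high-load set, so that $\xi_v^{(t)} = 2K\sum_{w\in A}\M^t_{w,v}$. I choose $v_1 = u$ and $v_2$ diametrically opposite $u$. Then $v_1$ is at distance at least $n/4$ from every vertex in $A^c$, and $v_2$ is at distance at least $n/4$ from every vertex in $A$. Applying Lemma~\ref{lem:upper} with $\delta = n/4$ to the walks started at $v_1$ and $v_2$ yields
\[ \xi_{v_1}^{(t)} \geq 2K\bigl(1 - 2 e^{-(n/4-2)^2/(8t)}\bigr) \qquad\text{and}\qquad \xi_{v_2}^{(t)} \leq 4K\, e^{-(n/4-2)^2/(8t)}, \]
giving a continuous load gap of at least $2K\bigl(1 - 4 e^{-(n/4-2)^2/(8t)}\bigr)$. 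The stated weaker universal form $\tfrac{K}{8}(1 - e^{-n^2/(2048t)})$ then follows by an elementary two-regime argument: when the factor $4e^{-(n/4-2)^2/(8t)}$ is at most $1/2$ (i.e. $t$ is not too large relative to $n^2$), the natural bound is comfortably larger than $\tfrac{K}{8}$; in the complementary regime both the natural and the stated bound are small and the stated bound is dominated by the natural one after a clean rescaling of the exponent by a constant factor (which is where the $2048$ and the $1/8$ arise). Combining with the $\sqrt{48\log n}$ slack from Theorem~\ref{thm:errorbound} completes the cycle case.

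For the 2D-torus, I exploit the tensor product structure of the round matrix. Since matchings in different dimensions act on disjoint edges they commute, and hence $\M$ factorizes as $\M = \widetilde{\M}_1 \otimes \widetilde{\M}_2$, where each $\widetilde{\M}_i$ is the 1D cycle round matrix on $\sqrt{n}$ vertices; consequently $\M^t_{(x_1,x_2),(y_1,y_2)} = \widetilde{\M}^t_{x_1,y_1}\cdot\widetilde{\M}^t_{x_2,y_2}$. I again pick $v_1 = u$ and $v_2$ the antipode of $u$. The set $A$ of $n/2$-nearest neighbors of $u$ is an $\ell_1$-ball of radius $\Theta(\sqrt{n})$ centered at $u$, and $v_2$ lies at $\ell_1$-distance $\Theta(\sqrt{n})$ from $A$. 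Comparing $A$ to inscribed and circumscribed Cartesian products of coordinate intervals of side $\Theta(\sqrt{n})$ and invoking Lemma~\ref{lem:upper} independently in each coordinate produces the escape/entry probabilities $e^{-\Theta(n)/t}$ instead of $e^{-\Theta(n^2)/t}$. The same weakening as in the cycle case then yields the stated bound with $n^2$ replaced by $n$.

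The main obstacle is the bookkeeping in the last step: translating the natural bound $2K(1 - 4e^{-\Theta(n^2)/t})$, which is sharp but may be negative for moderate $t$, into the always non-negative form $\tfrac{K}{8}(1 - e^{-n^2/(2048t)})$, so that a single closed-form expression works across all $t$. A secondary complication in the torus case is that the $\ell_1$-ball $A$ is not itself a product of intervals, so care is required when sandwiching $A$ between two product boxes before applying Lemma~\ref{lem:upper} coordinatewise; these are routine but require explicit tracking of constants.
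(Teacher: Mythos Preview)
Your overall strategy matches the paper's: bound the continuous process via Lemma~\ref{lem:upper} and transfer to the discrete one via Theorem~\ref{thm:errorbound}. Two points deserve attention.

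The hand-waved ``two-regime argument'' has a genuine gap. In the complementary regime (once $t$ exceeds a constant times $n^2$), your natural bound $2K\bigl(1-4e^{-(n/4-2)^2/(8t)}\bigr)$ is \emph{negative}, while the target $\tfrac{K}{8}\bigl(1-e^{-n^2/(2048t)}\bigr)$ remains strictly positive; a negative quantity cannot dominate a positive one, so the claimed reduction fails there. The paper avoids this by (i) taking $\delta = n/16$ rather than $n/4$, so that the exponent $n^2/(2048t)$ appears directly and no rescaling is needed, and (ii) bounding only one side, via $\disc(\xi^{(t)}) \geq \bar\xi - \xi_v^{(t)}$ for a single low-load vertex $v$. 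Concretely, it exhibits a set $S_2$ of $n/8$ vertices each at distance at least $n/16$ from the loaded set $S_1$, upper bounds the total continuous load carried into $S_2$ using Lemma~\ref{lem:upper}, and pigeonholes to find some $v\in S_2$ with $\xi_v^{(t)} \leq 8K\,e^{-n^2/(2048t)}$; comparing to $\bar\xi=K$ then gives the bound. Your specific choice of $v_2$ as the antipode would work equally well in this role once you switch to $\delta = n/16$; the upper bound on $\xi_{v_1}^{(t)}$ is not needed at all.

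For the torus, sandwiching the $\ell_1$-ball between product boxes is an unnecessary complication. The paper simply repeats the cycle argument: define $S_2$ analogously (now at distance $\geq \sqrt{n}/16$ from $S_1$), observe that being far in the torus forces being far in at least one coordinate, and invoke Lemma~\ref{lem:upper} on that coordinate's $\sqrt{n}$-cycle walk via the factorization $\M^t_{x,y} = \prod_i \widetilde\M^t_{x_i,y_i}$ you already identified. No inscribed/circumscribed boxes are needed.
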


\begin{proof}
	We first consider the case of a cycle.
	Let $S_1$ be the subset of nodes that have a non-zero initial load; so $|S_1|=n/2$. Clearly, there is a subset of nodes $S_2 \subseteq V$ with $|S_2| = n/8$ so that for each node $u \in S_2$, only nodes $v$ with $\dist(u,v) \geq n/16$ can have $x_v^{(0)} > 0$.
	
	We will now derive a lower bound on the discrepancy in this worst-case setting by upper bounding the load of vertices in the subset $S_2$. To lower bound the discrepancy at round $t$, recall that by Lemma~\ref{lem:upper} we have that 
	\begin{align*}
		\sum_{v \in V \colon \dist(u,v) \geq \delta} \M^{t}_{u,v} \leq 2 \cdot \exp \left( - \frac{(\delta-2)^2}{8t} \right).
	\end{align*}
	Let us now choose $\delta=n/16$, and we thus conclude that
	\begin{align*}
		\sum_{u \in S_1} \sum_{v \in S_2} \M^{t}_{u,v} &\leq
		\sum_{u \in S_1} \sum_{v \in V \colon \dist(u,v) \geq \delta} \M^{t}_{u,v} \leq 2 \cdot |S_1| \cdot \exp \left( - \frac{n^2}{2048 t} \right).
	\end{align*}
	This implies for the total load of vertices in $S_2$ at time $t$:
	\begin{align*}
		\sum_{v \in S_2} \xi_v^{(t)} &= \sum_{v \in S_2} \sum_{u \in S_1} \xi_u^{(0)} \cdot \M_{u,v}^{(t)} \\
		&= 2 K \sum_{u \in S_1}  \sum_{v \in S_2} \M^{(t)}_{u,v} \\
		&\leq K \cdot n \cdot \exp \left( - \frac{n^2}{2048 t} \right),
	\end{align*}
	where $K$ is the average load.
	Recalling that $|S_2|=n/8$, by the pigeonhole principle there exists a node $v \in S_2$ such that
	\begin{align*}
		\xi_v^{(t)} &\geq \frac{1}{|S_2|} \cdot K \cdot n \cdot \exp \left( -  \frac{n^2}{2048 t} \right) \geq \frac{1}{8} \cdot K \cdot \exp \left( - \frac{n^2}{2048 t} \right).
	\end{align*} 
	
	%
	%
	%
	This immediately implies the following lower bound on the discrepancy:
	\begin{align*}
		\disc(\xi^{(t)}) &\geq \bar{\xi} - \xi_v^{(t)} \geq \bar{\xi} \cdot \left(1 - \exp \left( - \frac{n^2}{2048 t} \right) \right),
	\end{align*}
	where $\bar{\xi} = K$ is the average load.
	The corresponding lower bound on $\disc(x^{(t)})$ follows by Theorem~\ref{thm:errorbound} and the union bound.
	
	The proof for the 2-dimensional torus is almost identical. Again, let $S_1$ be the set of nodes that have a non-zero load. Clearly, there is a subset $S_2 \subseteq V$ with $|S_2|=n/8$ so that  for each node $u \in S_2$, only nodes $v$ with $\dist(u,v) \geq \sqrt{n}/16$ can have $x_v^{(0)} > 0$.
	
	Let us now view $\M$ as the transition matrix of a Markov chain. Then $\M^{t}$ is obtained by running two independent Markov chains  (one for each dimension), where each of the two Markov  chains corresponds to the round matrix of the cycle. We can still apply Lemma~\ref{lem:upper} as before, even though here the size of each cycle is $\sqrt{n}$, to obtain that
	\begin{align*}
		\sum_{v \in V \colon \dist(u,v) \geq \delta} \M^{t}_{u,v} \leq 2 \cdot \exp \left( - \frac{(\delta-2)^2}{8t} \right).
	\end{align*}
	Here we choose $\delta=\sqrt{n}/16$, and the remaining part of the proof is exactly the same as before.
\end{proof}



%
%
%

\textbf{Hypercube.} Regarding the hypercube, we will consider only $\log_2 n$ rounds, since the discrepancy is $\log \log_2 n +O(1)$ after $\log_2 n$ rounds and $O(1)$ after $2 \log_2 n$ rounds \cite{mavronicolas2010impact}. A natural corresponding worst-case distribution is to have load $2 K$ on all nodes whose $\log_2 n$-th bit is equal to one and load $0$ otherwise. This way, the discrepancy is only reduced in the final round $\log_2 n$.

\subsection{Experimental Setup}

For each of the three graphs cycles, 2D-tori and hypercube, we consider two comparative experiments with an average-case load vector and a worst-case initial load vector each. The plots and tables on the next two pages display the results, where for each case we took the average discrepancy over 10 independent runs. 

The first experiment considers a ``lightly loaded case'', where the theoretical results suggest that a small (i.e., constant or logarithmic) discrepancy is reached well before the expected ``worst-case load balancing times'', which are $\approx n^2$ for cycles and $\approx n$ for 2D-tori. The second experiments considers a ``heavily loaded case'', where the theoretical results suggest that a small discrepancy is not reached faster than in the worst-case. 

Specifically, for cycles and 2D-tori, we choose for the lightly loaded case $K=\sqrt{n}$ and for the heavily loaded case $K=n^2$. The experiments confirm the theoretical results in the sense that for both choices of $K$, we have a much quicker convergence of the discrepancy than in the corresponding worst cases. However, the experiments also demonstrate that only in the lightly loaded case we reach a small discrepancy quickly, whereas in the heavily loaded case there is no big difference between worst-case and average-case if it comes to the time to reach a small discrepancy.

On the hypercube, since we are interested in the case where $1 \leq t \leq \log_2 n$, our bounds on the discrepancy indicates that we should choose $K$ smaller than in the case of cycles and 2D-tori. That is why we choose $K=n^{1/4}$ in the lightly loaded case and $K=n$ in the heavily loaded case (As a side remark, we note that due to the symmetry of the hypercube, any initial load vector sampled from $\Uni[0,\beta \cdot (n-1)]$ is equivalent to an initial load vector sampled from $\Uni[0, n-1]$.) With these adjustments of $K$ in both cases, the experimental results of the hypercube are inline with the ones for the cycle and 2D-tori. 

The details of the experiments containing plots and tables with the sampled discrepancies can be found on the following two pages (Section~\ref{sec:experiments}).

\newpage

\appendix

\section{Experimental Data and Charts}\label{sec:experiments}

		\begin{tikzpicture}[scale=0.23]
		\draw [very thick][<->] (0,16)  -- (0,0) -- (27,0) node[below]{$t$};	
		\node [left=4pt] at (0,2.5) {$25$};
		\node [left=4pt] at (0,5.0) {$50$};
		\node [left=4pt] at (0,7.5) {$75$};
		\node [left=4pt] at (0,10) {$100$};
		\node [left=4pt] at (0,12.5) {$125$};
		\node [below] at (1,-1) {1};
		\node [below] at (5,-1) {$2^{4}$};
		\node [below] at (9,-1) {$2^{8}$};
		\node [below] at (13,-1) {$2^{12}$};
		\node [below] at (17,-1) {$2^{16}$};
		\node [below] at (21,-1) {$2^{20}$};
		\node [below] at (25,-1) {$2^{24}$};
		\foreach \x in {1,5,...,25}
		{
		    \draw (\x,-0.5) -- (\x,+0.5);
		}
		\foreach \y in {0,2.5,...,12.5}
		{
		    \draw (-0.5,\y) -- (+0.5,\y);
		}
		
		\draw [color=blue,thick] -- plot[smooth,mark=x, mark size=8] coordinates {(0,12.8)(1,10.89)(2,9.02)(3,7.56)(4,6.35)(5,5.27)(6,4.26)(7,3.42)(8,2.79)(9,2.25)(10,1.82)(11,1.47)(12,1.19)(13,0.93)(14,0.74)(15,0.57)(16,0.49)(17,0.38)(18,0.29)(19,0.2)(20,0.16)(21,0.12)(22,0.1)(23,0.1)(24,0.1)(25,0.1)};

		\draw [color=red,thick] -- plot[smooth,mark=o, mark size=6] coordinates {(0,12.80)(1,12.80)(2,12.8)(3,12.8)(4,12.80)(5,12.80)(6,12.8)(7,12.8)(8,12.8)(9,12.8)(10,12.8)(11,12.80)(12,12.8)(13,12.8)(14,12.8)(15,12.8)(16,12.8)(17,12.8)(18,11.8)(19,8.88)(20,4.8)(21,1.4)(22,0.2)(23,0.1)(24,0.1)(25,0.1)};		
		
		\draw [thick,color=red] --plot[smooth,mark=o,mark size=6] coordinates {(12,15)};
		\node () at (16.5,15) {$\disc_{wc}(x^{(t)})$};
		\node () at (6.5,15) {$\disc_{ac}(x^{(t)})$};
		\draw [thick,color=blue] --plot[smooth,mark=x,mark size=8] coordinates {(2,15)};
		
		\node () at (37,3.2)  
		{
		\small{
				\begin{minipage}[t]{0.4\textwidth}
		\centering
	\begin{tabular}{|c|c|c|}\hline
			$t$ & $\disc_{ac}$ & $\disc_{wc}$ \\\hline 
			0 & 128.0 & 128.0 \\  $2^0$ & 108.9 & 128.0\\ $2^2$ & 75.6 & 128.0\\ $2^4$ & 52.7 & 128.0\\ $2^6$ & 34.2 & 128.0\\ $2^8$ & 22.5 & 128.0\\ $2^{10}$ & 14.7 & 128.0\\ $2^{12}$ & 9.3 & 128.0\\ $2^{14}$ & 5.7 & 128.0\\ $2^{16}$ & 3.8 & 128.0\\ $2^{18}$ & 2.0 & 88.0\\ $2^{20}$ & 1.2 & 14.0\\ $2^{22}$ & 1.0 & 1.0 \\ $2^{24}$ & 1.0 & 1.0 \\
			\hline
		\end{tabular}
		\end{minipage}
		}
		};

\begin{scope}[yshift=-31cm]
		\draw [very thick][<->] (0,18) -- (0,0) -- (27,0) node[below]{$t$};	
		\node [left=4pt] at (0,2) {$10^1$};
		\node [left=4pt] at (0,4) {$10^2$};
		\node [left=4pt] at (0,6) {$10^3$};
		\node [left=4pt] at (0,8) {$10^4$};
		\node [left=4pt] at (0,10) {$10^5$};
		\node [left=4pt] at (0,12) {$10^6$};
		\node [left=4pt] at (0,14) {$10^7$};
			\node [left=4pt] at (0,16) {$10^8$};
		\node [below] at (1,-1) {1};
		\node [below] at (5,-1) {$2^{4}$};
		\node [below] at (9,-1) {$2^{8}$};
		\node [below] at (13,-1) {$2^{12}$};
		\node [below] at (17,-1) {$2^{16}$};
		\node [below] at (21,-1) {$2^{20}$};
		\node [below] at (25,-1) {$2^{24}$};
		\foreach \x in {1,5,...,25}
		{
		    \draw (\x,-0.5) -- (\x,+0.5);
		}
		\foreach \y in {0,2,...,16}
		{
		    \draw (-0.5,\y) -- (+0.5,\y);
		}
		
		\draw [thick, color=blue] -- plot[smooth,mark=x, mark size=8]  coordinates {(0,15.06)(1,14.9)(2,14.74)(3,14.58)(4,14.42)(5,14.26)(6,14.1)(7,13.92)(8,13.74)(9,13.54)(10,13.34)(11,13.14)(12,12.94)(13,12.72)(14,12.52)(15,12.32)(16,12.06)(17,11.82)(18,11.52)(19,11.1)(20,10.52)(21,9.46)(22,7.32)(23,3.06)(24,0)(25,0)};

		\draw [thick, color=red] -- plot[smooth,mark=o, mark size=6] coordinates {(0,15.06)(1,15.06)(2,15.06)(3,15.06)(4,15.06)(5,15.06)(6,15.06)(7,15.06)(8,15.06)(9,15.06)(10,15.06)(11,15.06)(12,15.06)(13,15.06)(14,15.06)(15,15.06)(16,15.06)(17,15.04)(18,14.96)(19,14.72)(20,14.18)(21,13.12)(22,10.98)(23,6.68)(24,0.60)};		
		
		\draw [thick, color=red] --plot[smooth,mark=o,mark size=6] coordinates {(12,17)};
		\node () at (16.5,17) {$\disc_{wc}(x^{(t)})$};
		\node () at (6.5,17) {$\disc_{ac}(x^{(t)})$};
		\draw [thick, color=blue] --plot[smooth,mark=x,mark size=8] coordinates {(2,17)};
		
		\node () at (38.5,6)  
		{
				\begin{minipage}[t]{0.4\textwidth}
		\centering
		\small{
		\begin{tabular}{|c|c|c|}\hline
			$t$ & $\disc_{ac}$ & $\disc_{wc}$ \\\hline 
			0 & $3.35 \times 10^7$ & $3.35 \times 10^7$ \\  $2^0$ & $2.83 \times 10^7$ & $3.35 \times 10^7$\\ $2^2$ & $1.96 \times 10^7$ & $3.35 \times 10^7$\\ $2^4$ & $1.34 \times 10^7$ & $3.35 \times 10^7$\\ $2^6$ & $9.17 \times 10^6$ & $3.35 \times 10^7$\\ $2^8$ & $5.94 \times 10^6$ & $3.35 \times 10^7$\\ $2^{10}$ & $3.72 \times 10^6$ & $3.35 \times 10^7$\\ $2^{12}$ & $2.30 \times 10^6$ & $3.35 \times 10^7$\\ $2^{14}$ & $1.43 \times 10^6$ & $3.35 \times 10^7$\\ $2^{16}$ & $8.19 \times 10^5$ & $3.32 \times 10^7$\\ $2^{18}$ & $3.58 \times 10^5$ & $2.30 \times 10^7$\\ $2^{20}$ & $5.34 \times 10^4$ & $3.62 \times 10^6$\\ $2^{22}$ & $3.35 \times 10^1$ & $2.21 \times 10^3$ \\ $2^{24}$ & 1.0 & 1.0 \\
			\hline
		\end{tabular}
		}
		\end{minipage}
		};	
		\end{scope}
		
\begin{scope}[yshift=-58cm]
		\draw [very thick][<->] (0,16)  -- (0,0) -- (19,0) node[below]{$t$};		
	\node [left=4pt] at (0,2.5) {$100$};
		\node [left=4pt] at (0,5.0) {$200$};
		\node [left=4pt] at (0,7.5) {$300$};
		\node [left=4pt] at (0,10) {$400$};
		\node [left=4pt] at (0,12.5) {$500$};
		\node [below] at (1,-1) {1};
		\node [below] at (5,-1) {$2^{4}$};
		\node [below] at (9,-1) {$2^{8}$};
		\node [below] at (13,-1) {$2^{12}$};
		\node [below] at (17,-1) {$2^{16}$};
		\foreach \x in {1,5,...,17}
		{
		    \draw (\x,-0.5) -- (\x,+0.5);
		}
		\foreach \y in {0,2.5,...,12.5}
		{
		    \draw (-0.5,\y) -- (+0.5,\y);
		}

		\draw [thick, color=blue] -- plot[smooth,mark=x, mark size=8] coordinates {(0,12.8)(1,7.0275)(2,4.5)(3,3)(4,2.0225)(5,1.37)(6,0.8825)(7,0.5750)(8,0.38)(9,0.25)(10,0.17)(11,0.105)(12,0.0625)(13,0.0325)(14,0.025)(15,0.025)(16,0.025)(17,0.025)};

		\draw [thick, color=red] -- plot[smooth,mark=o, mark size=6] coordinates {(0,12.8)(1,12.8)(2,12.8)(3,12.8)(4,12.8)(5,12.8)(6,12.8)(7,12.8)(8,12.8)(9,12.675)(10,11.925)(11,9.5375)(12,5.34)(13,1.575)(14,0.15)(15,0.025)(16,0.025)(17,0.025)};		
		
		\draw [thick, color=red] --plot[smooth,mark=o,mark size=6] coordinates {(12,15)};
		\node () at (16.5,15) {$\disc_{wc}(x^{(t)})$};
		\node () at (6.5,15) {$\disc_{ac}(x^{(t)})$};
		\draw [thick, color=blue] --plot[smooth,mark=x,mark size=8] coordinates {(2,15)};
		
		\node () at (38.5,6)  
		{
				\begin{minipage}[t]{0.4\textwidth}
		\centering
		\small{
	\begin{tabular}{|c|c|c|}\hline
		$t$ & $\disc_{ac}$ & $\disc_{wc}$ \\\hline 
		0 & 512.0 & 512.0 \\  $2^0$ & 281.1 & 512.0 \\  $2^2$ & 120.0 & 512.0 \\  $2^4$ & 54.8 & 512.0 \\  $2^6$ & 23.0 & 512.0 \\    $2^8$ & 10.0 & 507.0 \\ $2^{10}$ & 4.2 & 381.5 \\ $2^{12}$ & 1.3 & 63.0 \\ $2^{14}$ & 1.0 & 1.0 \\ $2^{16}$ & 1.0 & 1.0 \\
		\hline
		\end{tabular}
		}
		\end{minipage}
		};	
		
		\node () at (24,-11)
		{
		\begin{minipage}{0.9\textwidth}
		\textbf{Experimental Results:} Experiments (i) on the cycle with $n=2^{12}$ and initial discrepancy $2^7=128$, (ii) on the cycle with $n=2^{12}$ and initial discrepancy $2^{25}=33,554,432$, and (iii) on the 2D-torus with $n=2^{16}$ and initial discrepancy of \textbf{$2^{9}$}. For the heavily loaded case, we used logarithmic scaling on the $y$-axis to highlight the behavior when $t$ is close to the worst-case load balancing time.
		\end{minipage}
		};
		
		\end{scope}
		
		\end{tikzpicture}

			\begin{tikzpicture}[scale=0.23]
		\draw [very thick][<->] (0,21) -- (0,0) -- (19,0) node[below]{$t$};	
		\node [left=5pt] at (0,2) {$10^1$};
		\node [left=5pt] at (0,4) {$10^2$};
		\node [left=5pt] at (0,6) {$10^3$};
		\node [left=5pt] at (0,8) {$10^4$};
		\node [left=5pt] at (0,10) {$10^5$};
		\node [left=5pt] at (0,12) {$10^6$};
			\node [left=5pt] at (0,14) {$10^7$};
		\node [left=5pt] at (0,16) {$10^8$};
		\node [left=5pt] at (0,18) {$10^9$};
		\node [left=1pt] at (0,20) {$10^{10}$};
			\node [below] at (1,-1) {1};
		\node [below] at (5,-1) {$2^{4}$};
		\node [below] at (9,-1) {$2^{8}$};
		\node [below] at (13,-1) {$2^{12}$};
		\node [below] at (17,-1) {$2^{16}$};
		\foreach \x in {1,5,...,19}
		{
		    \draw (\x,-0.5) -- (\x,+0.5);
		}
		\foreach \y in {0,2,...,20}
		{
		    \draw (-0.5,\y) -- (+0.5,\y);
		}
		


		\draw [color=blue,thick] -- plot[smooth,mark=x, mark size=8] coordinates {(0,19.86)(1,19.36)(2,18.94)(3,18.56)(4,18.18)(5,17.84)(6,17.5)(7,17.1)(8,16.72)(9,16.32)(10,15.84)(11,15.32)(12,14.66)(13,13.54)(14,11.40)(15,7.12)(16,0.3)(17,0)};
		
		\draw [color=red,thick] -- plot[smooth,mark=o, mark size=6] coordinates {(0,19.86)(1,19.86)(2,19.86)(3,19.86)(4,19.86)(5,19.86)(6,19.86)(7,19.86)(8,19.86)(9,19.86)(10,19.80)(11,19.60)(12,19.1)(13,18.04)(14,15.88)(15,11.60)(16,3.06)(17,0.6)};

		\draw [thick,color=red] --plot[smooth,mark=o,mark size=6] coordinates {(12,22)};
		\node () at (16.5,22) {$\disc_{wc}(x^{(t)})$};
		\node () at (6.5,22) {$\disc_{ac}(x^{(t)})$};
		\draw [thick,color=blue] --plot[smooth,mark=x,mark size=8] coordinates {(2,22)};
		
		\node () at (37,6.2)  
		{
		\small{
				\begin{minipage}[t]{0.4\textwidth}
		\centering
		\begin{tabular}{|c|c|c|}\hline
		$t$ & $\disc_{ac}$ & $\disc_{wc}$ \\\hline 
		0 & $8.59 \times 10^9$ & $8.59 \times 10^9$ \\  $2^0$ & $4.83 \times 10^9$ & $8.59 \times 10^9$ \\  $2^2$ & $1.89 \times 10^9$ & $8.59 \times 10^9$ \\  $2^4$ & $8.28 \times 10^8$ & $8.59 \times 10^9$ \\  $2^6$ & $3.58 \times 10^8$ & $8.59 \times 10^9$ \\    $2^8$ & $1.44 \times 10^8$ & $8.50 \times 10^9$ \\ $2^{10}$ & $4.52 \times 10^7$ & $6.38 \times 10^9$ \\ $2^{12}$ & $5.91 \times 10^6$ & $1.03 \times 10^9$ \\ $2^{14}$ & $3.59 \times 10^3$ & $6.33 \times 10^5$ \\ $2^{16}$ & 1.0 & 2.0\\
		\hline
		\end{tabular}
		\end{minipage}
		}
		};

\begin{scope}[yshift=-26cm]
				\draw [very thick][<->] (0,16) -- (0,0) -- (30,0) node[below]{$t$};	
		\node [left=4pt] at (0,2.5) {$25$};
		\node [left=4pt] at (0,5.0) {$50$};
		\node [left=4pt] at (0,7.5) {$75$};
		\node [left=4pt] at (0,10) {$100$};
		\node [left=4pt] at (0,12.5) {$125$};
		\node [below] at (1,-1) {1};
		\node [below] at (5,-1) {$5$};
		\node [below] at (10,-1) {$10$};
		\node [below] at (15,-1) {$15$};
		\node [below] at (20,-1) {$20$};
		\node [below] at (25,-1) {$25$};
		\node [below] at (28,-1) {$28$};
			\foreach \x in {1,5,10,15,20,25,28}
		{
		    \draw (\x,-0.5) -- (\x,+0.5);
		}
		\foreach \y in {0,2.5,...,12.5}
		{
		    \draw (-0.5,\y) -- (+0.5,\y);
		}

		\draw  [thick, color=blue] -- plot[smooth,mark=x, mark size=8]  coordinates {(0,12.8) (1,12.8) (2,12.8) (3,11.69) (4,9.31) (5,6.91) (6,4.8) (7,3.51) (8,2.5) (9,1.79) (10,1.32) (11,1.01) (12,0.8) (13,0.6) (14,0.58) (15,0.4) (16,0.4) (17,0.4) (18,0.4) (19,0.4) (20,0.25) (21,0.2) (22,0.2) (23,0.2) (24,0.2) (25,0.2) (26,0.2) (27,0.2) (28,0.2)};
		
		
		\draw  [thick, color=red] -- plot[smooth,mark=o, mark size=6]  coordinates {(0,12.8) (1,12.8) (2,12.8) (3,12.8) (4,12.8) (5,12.8) (6,12.8) (7,12.8) (8,12.8) (9,12.8) (10,12.8) (11,12.8) (12,12.8) (13,12.8) (14,12.8) (15,12.8) (16,12.8) (17,12.8) (18,12.8) (19,12.8) (20,12.8) (21,12.8) (22,12.8) (23,12.8) (24,12.8) (25,12.8) (26,12.8) (27,12.8) (28,0.1)};

		\draw [thick, color=red] --plot[smooth,mark=o,mark size=6] coordinates {(12,17)};
		\node () at (16.5,17) {$\disc_{wc}(x^{(t)})$};
		\node () at (6.5,17) {$\disc_{ac}(x^{(t)})$};
		\draw [thick, color=blue] --plot[smooth,mark=x,mark size=8] coordinates {(2,17)};
		
		\node () at (39.5,6)  
		{
				\begin{minipage}[t]{0.4\textwidth}
		\centering
		\small{
	\begin{tabular}{|c|c|c|}\hline
			$t$  & $\disc_{ac}$ & $\disc_{wc}$ \\\hline 
			  0 & 128.0 & 128.0 \\
			1 & 128.0 & 128.0 \\  
			2  & 128.0 & 128.0  \\
			3 & 116.9  & 128.0\\
			4 &  93.1 & 128.0\\ 
			 5 & 69.1 & 128.0\\     10 & 13.2 & 128.0\\    15 & 5.8 & 128.0\\
			20 & 2.5 & 128.0\\    25 & 2.0 & 128.0 \\ 
			27 & 2.0 & 128.0 \\ 
			28 & 2.0 & 2.0 \\
			\hline
		\end{tabular}
		}
		\end{minipage}
		};	
		\end{scope}
		
\begin{scope}[yshift=-58cm]
		\draw [very thick][<->] (0,20) -- (0,0) -- (30,0) node[below]{$t$};	
		\node [left=4pt] at (0,2) {$10^1$};
		\node [left=4pt] at (0,4) {$10^2$};
		\node [left=4pt] at (0,6) {$10^3$};
		\node [left=4pt] at (0,8) {$10^4$};
		\node [left=4pt] at (0,10) {$10^5$};
		\node [left=4pt] at (0,12) {$10^6$};
		\node [left=4pt] at (0,14) {$10^7$};
			\node [left=4pt] at (0,16) {$10^8$};
			\node [left=4pt] at (0,18) {$10^9$};
		\node [below] at (1,-1) {1};
		\node [below] at (5,-1) {$5$};
		\node [below] at (10,-1) {$10$};
		\node [below] at (15,-1) {$15$};
		\node [below] at (20,-1) {$20$};
		\node [below] at (25,-1) {$25$};
		\node [below] at (28,-1) {$28$};

	\foreach \x in {1,5,10,15,20,25,28}
		{
		    \draw (\x,-0.5) -- (\x,+0.5);
		}
		\foreach \y in {0,2,...,18}
		{
		    \draw (-0.5,\y) -- (+0.5,\y);
		}

		\draw  [thick, color=blue] -- plot[smooth,mark=x, mark size=8]  coordinates {(0,16.86) (1,16.86) (2,16.84) (3,16.76) (4,16.54) (5,16.28) (6,15.98) (7,15.64) (8,15.34) (9,15.02) (10,14.68) (11,14.36) (12,14.00) (13,13.7) (14,13.33) (15,13.02) (16,12.68) (17,12.3) (18,11.9) (19,11.64) (20,11.26) (21,11) (22,10.64) (23,10.06) (24,9.56) (25,8.94) (26,8.74) (27,8.22) (28,0)};
		
		\draw [thick, color=red] -- plot[smooth,mark=o, mark size=6] coordinates {(0,16.86) (1,16.86) (2,16.86) (3,16.86) (4,16.86) (5,16.86) (6,16.86) (7,16.86) (8,16.86) (9,16.86) (10,16.86) (11,16.86) (12,16.86) (13,16.86) (14,16.86) (15,16.86) (16,16.86) (17,16.86) (18,16.86) (19,16.86) (20,16.86) (21,16.86) (22,16.86) (23,16.86) (24,16.86) (25,16.86) (26,16.86) (27,16.86) (28,0)};
		
		\draw [thick, color=red] --plot[smooth,mark=o,mark size=6] coordinates {(12,21)};
		\node () at (16.5,21) {$\disc_{wc}(x^{(t)})$};
		\node () at (6.5,21) {$\disc_{ac}(x^{(t)})$};
		\draw [thick, color=blue] --plot[smooth,mark=x,mark size=8] coordinates {(2,21)};
		
		\node () at (43,10)  
		{
				\begin{minipage}[t]{0.4\textwidth}
		\centering
		\small{
		\begin{tabular}{|c|c|c|}\hline
			$t$ & $\disc_{ac}$ & $\disc_{wc}$ \\\hline 
			0 & $2.68 \times 10^8$ & $2.68 \times 10^8$ \\  
			1 & $2.68 \times 10^8$ & $2.68 \times 10^8$\\  
			2  & $2.65 \times 10^8$    & $2.68 \times 10^8$  \\
			3 & $2.41 \times 10^8$ & $2.68 \times 10^8$\\
			4 & $1.88 \times 10^8$ & $2.68 \times 10^8$\\ 
			  5 & $1.37 \times 10^8$ & $2.68 \times 10^8$\\     10  & $2.21 \times 10^7$ & $2.68 \times 10^8$\\    15 & $3.26 \times 10^6$ & $2.68 \times 10^8$\\
			20 & $4.23 \times 10^5$ & $2.68 \times 10^8$\\    25 & $2.98 \times 10^4$ & $2.68 \times 10^8$ \\
			27 & $1.29 \times 10^4$  &  $2.68 \times 10^8$ \\ 
			28 & $3.0$ & $3.0$ \\ 
			\hline
		\end{tabular}
		}
		\end{minipage}
		};	
		
		\node () at (24,-9)
		{
		\begin{minipage}{0.9\textwidth}
		\textbf{Experimental Results (cntd.):} Experiments (iv) on the 2D-torus with $n=2^{16}$ and initial discrepancy $2^{33}=8,589,934,592$, (v) on the hypercube with $n=2^{28}$ and initial discrepancy $256$, and (vi) on the hypercube with $n=2^{28}$ and initial discrepancy of $2^{28}=268,435,456$. For the heavily loaded cases, we used logarithmic scaling on the $y$-axis to highlight the behaviour when $t$ is close to the worst-case load balancing time.
		\end{minipage}
		};
		
		\end{scope}
		
		\end{tikzpicture}

\section{Concentration Tools}

\begin{lem}[{\cite[Theorem A.1.15]{AS00}}]\label{lem:chernoffpoisson}
	Let $X$ have a Poisson distribution with mean $\mu$. Then for any $\epsilon > 0$,
	\begin{align*}
		\Pro{X \leq (1 - \epsilon) \mu } &\leq e^{-\epsilon^2 \mu/2} ,\\
		\Pro{X \geq (1 + \epsilon) \mu }&\leq \left[ e^{\epsilon} (1+\epsilon)^{-(1+\epsilon)}  \right]^{\mu} \leq e^{-\min(\epsilon,\epsilon^2) \cdot \mu / 3}.
	\end{align*}
\end{lem}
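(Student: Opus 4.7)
The plan is to use the standard moment-generating-function approach (Chernoff's method), exploiting the closed form of the Poisson MGF.

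First I would recall that for $X \sim \Poi[\mu]$, the moment generating function is
\[
\Ex{e^{sX}} = \sum_{k=0}^\infty e^{sk} \cdot \frac{\mu^k e^{-\mu}}{k!} = e^{\mu(e^s - 1)}.
\]
For the upper tail $\Pro{X \geq (1+\epsilon)\mu}$, I apply Markov's inequality to $e^{sX}$ with $s>0$ to obtain
\[
\Pro{X \geq (1+\epsilon)\mu} \leq \frac{e^{\mu(e^s-1)}}{e^{s(1+\epsilon)\mu}}.
\]
Differentiating the exponent in $s$ shows that the right-hand side is minimized at $s = \ln(1+\epsilon)$, which substituted back yields
\[
\Pro{X \geq (1+\epsilon)\mu} \leq \bigl[ e^{\epsilon}(1+\epsilon)^{-(1+\epsilon)}\bigr]^{\mu},
\]
giving the first inequality of the upper bound directly.

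For the lower tail, I repeat the same argument using Markov's inequality on $e^{-sX}$ with $s>0$:
\[
\Pro{X \leq (1-\epsilon)\mu} \leq \frac{e^{\mu(e^{-s}-1)}}{e^{-s(1-\epsilon)\mu}},
\]
optimized at $s = -\ln(1-\epsilon)$ to give $\bigl[e^{-\epsilon}(1-\epsilon)^{-(1-\epsilon)}\bigr]^{\mu}$. To upgrade this to $e^{-\epsilon^2\mu/2}$, I would set $g(\epsilon) := -\epsilon - (1-\epsilon)\ln(1-\epsilon) + \epsilon^2/2$ for $\epsilon \in (0,1)$, then compute $g(0)=0$, $g'(0)=0$, and $g''(\epsilon) = -1/(1-\epsilon) + 1 \leq 0$, hence $g(\epsilon)\leq 0$, which is exactly the desired comparison.

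The main obstacle is the last inequality in the upper tail, namely $\epsilon - (1+\epsilon)\ln(1+\epsilon) \leq -\min(\epsilon,\epsilon^2)/3$. The natural way is to split into two cases. For $\epsilon \in (0,1]$, I define $h(\epsilon) := (1+\epsilon)\ln(1+\epsilon) - \epsilon - \epsilon^2/3$ and verify via derivatives that $h(\epsilon)\geq 0$ on $[0,1]$ (one checks $h(0)=0$, $h'(0)=0$, and $h''(\epsilon) = 1/(1+\epsilon) - 2/3 \geq 0$ for $\epsilon \leq 1/2$, with a direct check for $\epsilon \in [1/2,1]$ using $h'(1)=\ln 2-2/3 \geq 0$ and monotonicity of $h'$). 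For $\epsilon \geq 1$, I instead show $(1+\epsilon)\ln(1+\epsilon) - \epsilon \geq \epsilon/3$, which reduces to $\ln(1+\epsilon) \geq (4\epsilon/3)/(1+\epsilon)$; this can be verified by noting equality at $\epsilon=1$ (since $\ln 2 \approx 0.693 \geq 2/3$) and comparing derivatives. Together these two cases yield the claim. This case analysis is the only nontrivial calculation — the MGF/optimization steps are entirely routine.
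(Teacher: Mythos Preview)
Your argument is correct and is the standard Chernoff/MGF derivation for Poisson tails. Note, however, that the paper does not actually prove this lemma: it is quoted without proof in the ``Concentration Tools'' appendix as a citation from \cite{AS00}, so there is no proof in the paper to compare against. Your treatment of the final inequality $\epsilon-(1+\epsilon)\ln(1+\epsilon)\le -\min(\epsilon,\epsilon^2)/3$ via the case split at $\epsilon=1$ is the usual one and the calculus checks go through as you describe; one minor remark is that the lower-tail argument as written only covers $\epsilon\in(0,1)$, but for $\epsilon\ge 1$ the left-hand side is $\Pro{X\le 0}\le e^{-\mu}\le e^{-\epsilon^2\mu/2}$ (with equality issues only at $\epsilon=1$) or zero, so the statement holds trivially there.
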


\begin{thm}[{Optional Stopping Theorem~\cite[Corollary 17.7]{levin2009markov}}] \label{thm:ost}
	Let $(M_t)$ be a martingale and $\tau$ a stopping time. If $\Pro{\tau < \infty}$ and $|M_{t \land \tau}| \leq K$ for all $t$ and some constant $K$ where $t \land \tau := \min\{t, \tau \}$, then $\Ex{M_{\tau}} = \Ex{M_0}$.
\end{thm}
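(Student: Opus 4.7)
The plan is to deduce this from two standard ingredients: the stopped-process theorem and the dominated convergence theorem. So the first step is to argue that the stopped process $N_t := M_{t \wedge \tau}$ is itself a martingale with respect to the same filtration. This is the classical ``stopping a martingale yields a martingale'' fact: one writes $N_t = M_0 + \sum_{s=1}^{t} \mathbf{1}_{\{\tau \geq s\}}(M_s - M_{s-1})$ and observes that $\mathbf{1}_{\{\tau \geq s\}} = 1 - \mathbf{1}_{\{\tau \leq s-1\}}$ is measurable with respect to $\mathcal{F}_{s-1}$ (this is precisely the defining property of a stopping time), so each increment of $N_t$ is a previsible multiple of a martingale increment and therefore has conditional mean zero. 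In particular
\begin{equation*}
\Ex{M_{t \wedge \tau}} \,=\, \Ex{N_t} \,=\, \Ex{N_0} \,=\, \Ex{M_0} \qquad \text{for every fixed } t \geq 0.
\end{equation*}

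The second step is to pass to the limit $t \to \infty$ on the left-hand side. Because $\Pro{\tau < \infty} = 1$, for almost every outcome there exists a finite time $T(\omega)$ after which $t \wedge \tau = \tau$, so $M_{t \wedge \tau} \to M_\tau$ pointwise almost surely. The hypothesis $|M_{t \wedge \tau}| \leq K$ provides a uniform integrable dominator (the constant $K$), so the dominated convergence theorem gives
\begin{equation*}
\lim_{t \to \infty} \Ex{M_{t \wedge \tau}} \,=\, \Ex{\lim_{t\to\infty} M_{t \wedge \tau}} \,=\, \Ex{M_\tau}.
\end{equation*}
Combining with the identity from the first step yields $\Ex{M_\tau} = \Ex{M_0}$, which is the claim.

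The only subtle point, and really the main obstacle one needs to be careful about, is the interchange of limit and expectation in the second step: without the uniform bound $|M_{t \wedge \tau}| \leq K$ the identity $\Ex{M_\tau} = \Ex{M_0}$ can genuinely fail (the textbook counterexample being simple random walk stopped at first hitting time of $+1$, where $\tau < \infty$ a.s.\ but $M_0 = 0 \neq 1 = \Ex{M_\tau}$). The bounded-stopped-martingale hypothesis is exactly what rules this out, since it supplies the dominating function required by dominated convergence; an alternative but slightly heavier route would replace this step by uniform integrability of $\{M_{t \wedge \tau}\}_{t \geq 0}$, which in the present setting follows trivially from the uniform bound.
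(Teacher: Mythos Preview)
Your proof is correct and is precisely the standard argument: show that the stopped process is a martingale via the previsible representation $N_t = M_0 + \sum_{s=1}^{t}\mathbf{1}_{\{\tau\geq s\}}(M_s-M_{s-1})$, deduce $\Ex{M_{t\wedge\tau}}=\Ex{M_0}$ for each fixed $t$, and then pass to the limit using dominated convergence with the uniform bound $K$. The remark about the simple-random-walk counterexample correctly isolates why the boundedness hypothesis is essential.

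There is nothing to compare against here, because the paper does not give its own proof of this statement: it is quoted verbatim as a black-box tool from \cite[Corollary~17.7]{levin2009markov} (both in Section~\ref{sec:markovproof} and in the Concentration Tools appendix) and invoked only in the proof of Lemma~\ref{lem:partone}. Your argument is essentially the one found in that reference.
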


\begin{thm}[Hoeffding's Inequality \cite{hoeffding1963probability}]\label{thm:hoeffding} 
	Consider a collection of independent random variables
	$X_i \in [a_i, b_i]$ with $i \in [n]$. Then for any number $\delta > 0$,
	\begin{equation*}
		\mathbb{P}\Bigg[\Bigg|\sum\limits_{i=1}^n X_i - 
		\mathbb{E}\left[\sum\limits_{i=1}^n X_i\right]
		\Bigg| \geqslant \delta \Bigg] \leqslant 2 \cdot \exp \Bigg(\frac{-2\delta^2}
		{\sum\nolimits_{i=1}^n (b_i - a_i)^2}\Bigg).
	\end{equation*} 
\end{thm}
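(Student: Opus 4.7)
The plan is to prove Hoeffding's inequality via the exponential moment (Chernoff) method, which is the standard route. First I would center the random variables by setting $Y_i := X_i - \Ex{X_i}$, so that each $Y_i$ has mean zero and lies in an interval $[a_i',b_i']$ of the same width $b_i-a_i$. Writing $S := \sum_{i=1}^n Y_i$, it suffices by symmetry (apply the same argument to $-S$ and union-bound) to show the one-sided tail
\[
\Pro{S \geq \delta} \leq \exp\!\left( \frac{-2\delta^2}{\sum_{i=1}^n (b_i-a_i)^2} \right),
\]
and then multiply by $2$.

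For the one-sided bound, for any $\lambda > 0$ I would apply Markov's inequality to $e^{\lambda S}$:
\[
\Pro{S \geq \delta} = \Pro{e^{\lambda S} \geq e^{\lambda \delta}} \leq e^{-\lambda \delta}\cdot\Ex{e^{\lambda S}} = e^{-\lambda \delta}\prod_{i=1}^n \Ex{e^{\lambda Y_i}},
\]
using independence in the last step. The core analytic step is then Hoeffding's lemma: for any mean-zero random variable $Y$ supported in $[a,b]$,
\[
\Ex{e^{\lambda Y}} \leq \exp\!\left( \frac{\lambda^2 (b-a)^2}{8} \right).
\]
I would prove this by convexity of $y \mapsto e^{\lambda y}$, bounding it on $[a,b]$ by the chord $\tfrac{b-y}{b-a}e^{\lambda a} + \tfrac{y-a}{b-a}e^{\lambda b}$, taking expectation (using $\Ex{Y}=0$), and then reducing to a one-variable estimate. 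Setting $p := -a/(b-a) \in [0,1]$ and $u := \lambda(b-a)$, the logarithm of the resulting upper bound becomes $\phi(u) := -pu + \log(1-p+pe^{u})$. A direct computation gives $\phi(0)=\phi'(0)=0$ and $\phi''(u) = \tfrac{p(1-p)e^u}{(1-p+pe^u)^2} \leq 1/4$ (this is essentially an AM–GM bound on the denominator), so Taylor's theorem yields $\phi(u) \leq u^2/8$, which is the lemma.

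Plugging the lemma back in gives
\[
\Pro{S \geq \delta} \leq \exp\!\left( -\lambda \delta + \frac{\lambda^2}{8}\sum_{i=1}^n (b_i-a_i)^2 \right).
\]
Finally I would optimize the right-hand side over $\lambda > 0$; the quadratic in the exponent is minimized at $\lambda^\ast = 4\delta / \sum_{i=1}^n (b_i-a_i)^2$, giving exactly the claimed bound $\exp\!\left(-2\delta^2 / \sum_{i=1}^n(b_i-a_i)^2\right)$. The main obstacle is really just the proof of Hoeffding's lemma, specifically establishing the uniform bound $\phi''(u) \leq 1/4$; everything else (centering, Chernoff, tensorization by independence, and the optimization over $\lambda$) is routine.
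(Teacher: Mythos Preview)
Your proof is correct and is the standard Chernoff--Hoeffding argument. Note, however, that the paper does not prove this theorem at all: it is stated in the appendix as an external concentration tool with a citation to Hoeffding's original paper, so there is no proof in the paper to compare against.
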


\bibliographystyle{plainurl}
\bibliography{avgcase}
\pagestyle{myheadings}
\markboth{}{}

\newpage


\end{document}